\documentclass{article}[11pt]
\usepackage{fullpage}

\usepackage{xcolor}
\usepackage{dirtytalk}
\usepackage{amsfonts}
\usepackage{amsthm}
\usepackage{amsmath}
\usepackage{amssymb}
\usepackage{mathtools}
\usepackage{bbm}
\usepackage[hidelinks]{hyperref}
\usepackage{cleveref}
\usepackage{enumerate}
\usepackage{comment}
\usepackage{algorithm}
\usepackage[noend]{algpseudocode}

\usepackage[]{natbib}
\usepackage{thmtools}
\usepackage{thm-restate}
\renewcommand{\cite}[1]{\citep{#1}}
\newcommand{\textcite}[1]{\citet{#1}}

\newtheorem{theorem}{Theorem}
\newtheorem{inftheorem}{Informal Theorem}
\newtheorem{fact}{Fact}
\newtheorem{claim}[theorem]{Claim}
\newtheorem{lemma}[theorem]{Lemma}

\newtheorem{corollary}[theorem]{Corollary}
\newtheorem{proposition}[theorem]{Proposition}

\theoremstyle{definition}
\newtheorem{definition}{Definition}
\newtheorem*{infdef}{Definition}

%Paper-specific commands/macros:

\newcommand{\rev}{\textsc{Rev}}
\newcommand{\sw}{\textsc{SW}}
\newcommand{\earev}{\textsc{EARev}}
\newcommand{\obj}{\textsc{Obj}}
\newcommand{\objs}{\textsc{S-Obj}}
\newcommand{\objl}{\textsc{L-Obj}}
\newcommand{\objm}{\textsc{M-Obj}}

\newcommand{\D}{\mathcal{D}}

\newcommand{\C}{\mathcal{C}}
\newcommand{\R}{\mathbb{R}}
\newcommand{\F}{\mathbb{D}}
\newcommand{\E}{\mathcal{E}}
\newcommand{\M}{\mathcal{M}}
\newcommand{\Rplus}{\mathbb{R}_{\geq 0}}
\newcommand{\defeq}{\coloneq}
\newcommand{\chalg}{\textsc{ConvexHullSampler}}
\newcommand{\Sdist}{\mathcal{S}}

\newcommand{\what}{\tilde{w}}

\newcommand{\all}{A}
\newcommand{\pay}{\pi}

\newcommand{\polytope}{\mathcal{P}}
\newcommand{\util}{\textsc{Util}}

\newcommand{\alloc}{\textsc{Alloc}}
\newcommand{\price}{p}
\newcommand{\mech}{M}

\newcommand{\buymany}{\textsc{BuyMany}}
\newcommand{\itempricing}{\textsc{ItemPricing}}
\newcommand{\seqitempricing}{\textsc{Seq}\itempricing}
\newcommand{\expostitempricing}{\textsc{ExPost}\itempricing}

\DeclareMathOperator*{\pr}{\operatorname{Pr}}
\DeclareMathOperator*{\expect}{\operatorname{E}}

\newcommand{\probover}[2]{\pr_{#1}\left[ #2 \right]}

\newcommand{\xpectover}[2]{\expect_{#1}\left[ #2 \right]}
\newcommand{\bigo}[1]{O\left(#1\right)}
\newcommand{\bigoh}{O}

\DeclareMathOperator*{\argmax}{arg\,max}

\DeclareMathAlphabet{\mymathbb}{U}{BOONDOX-ds}{m}{n}

\title{A Multi-Dimensional Online Contention Resolution Scheme for Revenue Maximization\footnote{This work was funded in part by NSF award CCF-2217069.}}
\author{
Shuchi Chawla \\ {\tt shuchi@cs.utexas.edu} \and 
Dimitris Christou  \\ {\tt christou@cs.utexas.edu} \and
Trung Dang  \\ {\tt dddtrung@cs.utexas.edu}
\and
Zhiyi Huang \\ {\tt zhiyih@cs.utexas.edu}
\and
Gregory Kehne  \\ {\tt gkehne@utexas.edu} \and
Rojin Rezvan  \\ {\tt rojin@cs.utexas.edu}\footnote{All co-authors are affiliated with the University of Texas at Austin.}
}

\date{}

\begin{document}

\maketitle

\thispagestyle{empty}
\addtocounter{page}{-1}

\begin{abstract}
    We study multi-buyer multi-item sequential item pricing mechanisms for revenue maximization with the goal of approximating a natural fractional relaxation -- the ex ante optimal revenue. We assume that buyers' values are subadditive but make no assumptions on the value distributions. While the optimal revenue, and therefore also the ex ante benchmark, is inapproximable by any simple mechanism in this context, previous work has shown that a weaker benchmark that optimizes over so-called ``buy-many" mechanisms can be approximable. Approximations are known, in particular, for settings with either a single buyer or many unit-demand buyers. We extend these results to the much broader setting of many subadditive buyers. We show that the ex ante buy-many revenue can be approximated via sequential item pricings to within an $O(\log^2 m)$ factor, where $m$ is the number of items. We also show that a logarithmic dependence on $m$ is necessary.

    Our approximation is achieved through the construction of a new multi-dimensional Online Contention Resolution Scheme (OCRS), that provides an online rounding of the optimal ex ante solution. \citet{chawla23buy} previously constructed an OCRS for revenue for unit-demand buyers, but their construction relied heavily on the ``almost single dimensional" nature of unit-demand values. Prior to that work, OCRSes have only been studied in the context of social welfare maximization for single-parameter buyers. For the welfare objective, constant-factor approximations have been demonstrated for a wide range of combinatorial constraints on item allocations and classes of buyer valuation functions. Our work opens up the possibility of a similar success story for revenue maximization.
\end{abstract}

\newpage
\section{Introduction} \label{sec:intro}

In the last decade, strong connections have emerged between mechanism design, online selection, and optimal stopping problems such as prophet inequalities, leading to a number of beautiful results and techniques in both areas. Among the surprising insights to emerge from this body of work is the small gap between online and offline optimization, as well as between simple and optimal mechanisms, for the objective of social welfare maximization. Consider, in particular, a seller wishing to allocate $m$ items across $n$ buyers with combinatorial valuations over the items, denoted by $v_i$ for $i\in [n]$. 
%Let $v_i: 2^{[m]} \mapsto \R_+$ denote the value function of buyer $i$. 
The seller's goal is to find a partition $\{S_1, S_2, \cdots, S_n\}$ of the set $[m]$ of items and allocate set $S_i$ to each buyer $i$ so as to maximize the social welfare, defined as $\sum_{i\in [n]} v_i(S_i)$. The seller can do so using the VCG mechanism. However, from a single buyer's perspective, his allocation and payment to the seller as functions of his and others' values can be complicated and inscrutable. It turns out that the seller can use a much simpler mechanism at a small constant factor loss, when the buyers' values are structured and drawn from known distributions. For example, if buyers have fractional subadditive (i.e. XOS) values, the seller can obtain a 2-approximation via a {\em sequential item pricing}: she interacts with each buyer one at a time, offering any remaining items at pre-computed fixed prices, and allowing the buyer to select his favorite set of items to buy. 

But what if the seller wants to maximize her revenue? Can a simple online mechanism like sequential item pricing still approximate the optimal offline mechanism, and if so, under what assumptions? This work addresses and answers these questions.

In this paper, we make an explicit connection between revenue maximization for many combinatorial buyers and online contention resolution schemes (henceforth, OCRSes), a key technique in designing prophet inequalities. {\bf We develop a framework for designing OCRSes for revenue maximization over a given class of mechanisms, and instantiate it with an OCRS over sequential item pricings.} As a consequence of this construction, we obtain novel simplicity versus optimality results for revenue maximization. 
Our results hold for arbitrary distributions over subadditive valuation functions. We discuss these implications for mechanism design in detail below.

\subsubsection*{Contention Resolution Schemes and their connections to sequential mechanisms}

An OCRS is an online rounding of a fractional relaxation of the offline objective for packing-type problems. OCRSes were originally designed for maximizing set functions \citep{vondrak11submodular, feldman16online}, the natural algorithmic counterparts of ``single parameter" prophet inequalities. Let us consider the example of a seller wishing to sell a single item. Each buyer's type is described by a single value $v_i\sim\D_i$ for the item; the objective function is the expectation over the joint value distribution of the value of the buyer who receives the item. The offline optimum, a.k.a. the prophet, receives a reward of $\expect_v [\max_i v_i]$. Let us now consider a fractional relaxation where  $x_i\in [0,1]$ denotes the probability that buyer $i$ receives the item. Considering buyer $i$'s contribution alone to the objective, this contribution is maximized if the buyer receives the item when his value is in the highest $x_i$ quantiles of his distribution. Let $\sw_{x_i}(\D_i)$ denote this expected contribution. The fractional relaxation then maximizes $\max_{x\in [0,1]^n: \sum_i x_i\le 1} \sum_i \sw_{x_i}(\D_i)$. This fractional relaxation has a natural economic interpretation: we relax the {\em ex post} supply constraint (namely, that we have one item to sell) to an {\em ex ante} supply constraint (namely, that the expected number of items sold is one). Accordingly, this relaxation is called the {\em ex ante relaxation}.

An OCRS takes as input an ex ante feasible solution $x$ and ``rounds" it in an online fashion into an ex post feasible solution as follows: we first sample each $i\in [n]$ with probability $x_i$, forming a (possibly infeasible) set $R(x)\subseteq [n]$; we then pick a feasible subset of $R(x)$ in an online fashion. The goal is to show that (1) each element has a low probability of being ``blocked" by previously selected elements, and (2) conditioned on being sampled and unblocked, each element is picked with sufficiently high probability by the online algorithm. Owing to the concavity of social welfare as a function of the allocation probability\footnote{i.e., $\sw_{x}(\D_i)\ge \frac{x}{y} \sw_{y}(\D_i)$ for all $0<x<y$.}, we then obtain an approximation to the ex ante relaxation. Since the ex ante relaxation bounds the prophet's reward from above, this immediately implies a prophet inequality. Furthermore, the online rounding 
can be interpreted as a sequential posted pricing mechanism as it effectively sets a value threshold for each buyer, above which the buyer is served.

\textcite{lee18optimal} in fact showed a tight connection between prophet inequalities, OCRSes, and the ex ante relaxation: For any downwards closed feasibility constraint $\polytope$ in the single parameter setting,\footnote{Here $\polytope$ is the convex hull of the incidence vectors of all subsets of buyers that can be feasibly served.} the optimal competitive ratio achievable against the prophet's reward, which is also the worst-case gap between the optimal social welfare and that achievable by a sequential posted pricing, is {\em exactly} equal to the optimal competitive ratio achievable against the ex ante relaxation via an OCRS. 

\paragraph{A multi-dimensional OCRS for revenue.} We now turn to the revenue objective in the context of allocating many items. 
In a multi-dimensional OCRS, as in the single-dimensional case, our goal is to ``round" an ex ante feasible solution $x$ in an online/sequential fashion. Consider a setting with $n$ buyers, where $\D_i$ denotes buyer $i$'s value distribution, and $\D=\D_1\times\cdots\times\D_n$ denotes the joint value distribution across all buyers. Let $x_i$ denote a vector of probabilities where the $j$th component $x_{ij}$ is the probability of allocating item $j$ to buyer $i$. We will say that a single-buyer mechanism for buyer $i$ satisfies the ex ante constraint $x_i$ if this mechanism allocates each item $j$ with probability at most $x_{ij}$ to the buyer, where the probability is taken over the randomness in the mechanism as well as the randomness in the buyer's values. Then, we can write $\rev_{x_i}(\D_i)$ as the maximum revenue that can be obtained from buyer $i$ from any mechanism that satisfies the constraint $x_i$. The {\em ex ante relaxation for revenue} is then given by $\earev(\D) \defeq \max_{x} \sum_i \rev_{x_i}(\D_i)$, where the maximum is taken over all vectors $\{x_i\}_{i\in [n]}$ satisfying $\sum_i x_i \preceq (1, 1, \cdots, 1)$.\footnote{For vectors $y$ and $z$, we say $y\preceq z$ if for all coordinates $i$ we have $y_i\le z_i$.} Likewise, we can define $\rev_{x_i}(\D_i,\C)$ and $\earev(\D,\C)$ by optimizing over a specific class $\C$ of single-buyer mechanisms.

The OCRS generates a sequential mechanism in which the seller interacts with the buyers one at a time. At each iteration $i$, the seller offers a (potentially different) single-buyer mechanism over the as-yet-unsold items, say $S_i$, to the current buyer $i$, and buyer $i$ purchases some subset of these items. The goal is to design a single-buyer mechanism for each $i$ that extracts a sufficiently large fraction of the target revenue $\rev_{x_i}(\D_i)$ from $S_i$ while continuing to satisfy the ex ante constraint $x_i$. Of course, in order for this to be possible, $S_i$ should contain sufficiently many items. We formally define a revenue OCRS as follows.

\begin{infdef}
    An {\bf $\alpha$-OCRS for revenue maximization under a class of mechanisms $\C$} takes as input an ex ante allocation constraint $x\in [0,1]^m$, a value distribution $\D$, and a random subset of items $S\subseteq [m]$, and returns a single-buyer mechanism $M(\D,S,x)\in\C$ such that:
     \begin{enumerate}[(a)]
        \item $M$ only allocates items in $S$ to the buyer. Furthermore, for each $j\in S$, $M$ allocates $j$ to a buyer with value distribution $\D$ with probability at most $x_j$.
        \item $M$'s expected revenue from a buyer with value distribution $\D$ is at least a $1/\alpha$ fraction of $\beta\cdot\rev_{x}(\D,\C)$, where $\beta \defeq \min_{j\in [m]}\pr[j\in S]$.
    \end{enumerate}    
\end{infdef}

We show that if a class $\C$ of single-buyer mechanisms admits an $\alpha$-OCRS, then there exists a sequential $\C$ mechanism that achieves an $\bigoh(\alpha)$-approximation to $\earev(\D,\C)$. In other words, the gap between online and offline $\C$ mechanisms is bounded by $\bigoh(\alpha)$.

The existence of an OCRS depends on the properties of the class $\C$ of mechanisms being considered, as well as on the valuation functions of the buyers. Our main theorem shows the existence of an $\bigoh(\log m)$-OCRS for item pricing mechanisms and subadditive valuations. 

\begin{inftheorem}
\label{it:ocrs}
    There exists an $\bigoh(\log m)$-OCRS for revenue maximization under item pricing mechanisms when all buyers have (arbitrary distributions over) subadditive values over $m$ items and the ex ante constraint $x$ satisfies $x_j\ge 1/\operatorname{poly}(m)$ for all $j\in [m]$.
\end{inftheorem}

We remark that although multi-item, i.e. combinatorial, prophet inequalities have been studied extensively for the social welfare objective (see, e.g., \citet{doi:10.1137/1.9781611973730.10}), they do not use OCRSes and are based on other techniques. We review this literature in Section~\ref{sec:related}. On the other hand, whereas prophet inequalities designed for the social welfare objective in the single-parameter setting extend immediately to the revenue objective via virtual values, no such general connections exist between revenue and social welfare in the multi-item setting (some exceptional special cases are discussed in Section~\ref{sec:related}). This necessitates designing an OCRS specifically tailored to the revenue objective. To our knowledge, the only work prior to ours containing a multi-dimensional OCRS for revenue is the recent work of \citet{chawla23buy} that proves a $2$-OCRS for unit-demand valuations -- a special case of our result above.

\subsubsection*{Implications for revenue maximization}

Revenue maximization for buyers with multi-dimensional values is notoriously challenging. Even for settings with just one buyer and two items, the optimal revenue is inapproximable to within any factor by a simple mechanism (indeed, by any mechanism with a finite description complexity) \cite{BCKW-JET15, hart2013menu}. A recent line of work initiated by \textcite{chawla2022buy} overcomes this challenge by considering revenue maximization over so-called {\em Buy-Many Mechanisms}, that essentially restrict the seller to offering subadditive pricings over allocations to the buyer. \citeauthor{chawla2022buy} showed that the optimal buy-many revenue for a single buyer is approximable to within an $\bigoh(\log m)$ factor by item pricings for any distribution over buyer values, where $m$ is the number of items. This result is tight.

Multi-buyer settings present a further challenge in the manner that buyers impose externalities upon one another (i.e. the potential loss in revenue from allocating an item to one buyer instead of another). In single item settings, we can characterize this externality using virtual values. But in multi-item settings the unwieldy structure of multi-dimensional incentive constraints disallows such a characterization. The {\em ex ante relaxation} was first proposed by \citet{doi:10.1137/120878422} to address precisely this challenge: it allows breaking up and reducing the multi-agent problem into its single-agent counterparts and approximating each one separately. 

In a recent work, \textcite{chawla23buy} brought these two lines of work together and defined an ex ante relaxation for multi-buyer buy-many mechanisms, $\earev(\D,\buymany)$.
They showed that when every buyer has a unit-demand valuation function, the buy-many ex ante relaxation can be approximated by a sequential item pricing mechanism to within an $\bigoh(\log m)$ factor. They left open the question of designing an approximation for the buy-many ex ante relaxation for other classes of valuations. Our work resolves this open question by bounding this gap for arbitrary distributions over {\em subadditive values} to within a $\bigoh(\log^2 m)$ factor as a direct consequence of our OCRS construction.

\begin{inftheorem}
\label{it:exante}
    For settings with subadditive buyers over $m$ items, sequential item pricing obtains at least an $\bigoh(\log m)$ fraction of the ex ante item pricing revenue and at least an $\bigoh(\log^2 m)$ fraction of the optimal multi-buyer buy-many revenue.
\end{inftheorem}

We summarize our results on the approximation of the ex-ante item pricing revenue by sequential item pricings for different classes of value distributions in Table~\ref{tab:result-summary}. All of our upper bounds extend to approximation against the optimal buy-many revenue at a loss of an additional $O(\log m)$ factor via the results of \citep{chawla23buy}.

\bgroup
\def\arraystretch{1.35}
\begin{table}[ht!]
\centering
\begin{tabular}{|c|c|c|c|}
    \hline
     & Upper Bounds & Lower Bounds & Reference  \\ 
    \hline
    Unit-Demand & $2$ &  $2$  & \cite{chawla23buy} \\ 
    \hline
    Gross Substitutes &  $2$   &   $2$ & \Cref{thm:gross-subs} and \cite{chawla23buy}\\
    \hline
    XOS & $\bigoh(\log m)$  &  $\Omega\left(\log^{1/2} m\right)$ & \Cref{thm:maintheorem,thm:xos-lb}  \\
    \hline
    Subadditive & $\bigoh(\log m)$  &  $\Omega\left(\log^{1/2} m\right)$ & \Cref{thm:maintheorem,thm:xos-lb}  \\
    \hline
    Monotone &  $\min\{n, 4m^2\}$ &  $\Omega\left(\min\{n,\sqrt{m}\}\right)$ &  \Cref{thm:monotone-ub,thm:monotone-lb} \\
    \hline
\end{tabular}
\caption{Summary of known results and our contributions for approximating $\earev(\D,\itempricing)$ by $\rev(\D,\seqitempricing)$ for different families of buyer valuations.}
\label{tab:result-summary}
\end{table}
\egroup

\subsubsection*{Designing the OCRS for item pricings} 

Let us now consider the single-agent problem at the heart of our multi-dimensional OCRS. We are given a random subset $S$ of items that contains every item with (say) $1/2$ probability and an ex ante supply constraint $x$. We want to extract (say) half as much revenue from the buyer over this subset of items, as the optimal mechanism (call it $M^*$) obtains from the buyer over the entire set of items, with both our mechanism and the optimal one satisfying the given ex ante constraint $x$. Let $T^*$ denote the set of items a buyer purchases under $M^*$. If we could manage to sell $T^*\cap S$ to the buyer at the same prices as $M^*$, then we would be done. But the trouble with having only a subset of items available for sale is that the buyer may switch from wanting to buy $T^*\cap S$ to a completely different set of items, say $T$. This leads to two challenges: first $T$ may not be able to generate as much revenue as $T^*$; second, this switch may cause us to violate the ex ante supply constraint by selling some items with a higher probability than intended (e.g. the items in $T\setminus T^*$). We show how to resolve both of these issues for item pricings. 

Our first main technical contribution is to show that if the buyer has subadditive values and we are allowed to ignore the ex ante constraint $x$, for any given set $S$, we can generate revenue from the items in $S$ that is comparable (within a logarithmic factor) to the revenue $M^*$ obtains from the same set $S$ of items. We use the subadditivity of values to argue that the buyer obtains sufficiently high utility under $M^*$ from the items in $S$. Then, using an approach developed by \textcite{chawla2022buy}, we scale up the prices in $M^*$ to extract a fraction of this utility as revenue. This {\em Revenue Recovery Scheme} (RRS) (formally defined in Section~\ref{sec:prelims}) satisfies the revenue requirement of an OCRS as well as guarantees that the {\em total} number of items sold is not much larger than the number sold by $M^*$. However, it does not necessarily satisfy the given {\em per-item} allocation constraints. 

Our second technical contribution fixes the per-item  allocation constraints while losing only a constant factor in revenue. We make use of the fact that a revenue recovery scheme as described above exists for every subset $T$ of items and only allocates items in $T$. To obtain an OCRS, we carefully choose a distribution over all subsets $T$ of $S$ and apply the RRS to the chosen subset to obtain a random pricing. Suppose, for example, that the RRS applied to the entire set $S$ oversells some item $j$. Then, instead of deterministically choosing the pricing corresponding to $S$, with some appropriate probability we drop the item $j$, apply the RRS to $S\setminus\{ j\}$, and choose the corresponding pricing. Having to drop items in this manner hurts our revenue guarantee. But importantly, the RRS cannot oversell too many items as it satisfies the ex ante constraint in aggregate over all items. Therefore it becomes possible to choose a random set $T$ such that the total revenue contribution over $T$ is large enough, while at the same time, the per-item allocation probabilities match those of the given ex ante constraint. We call this procedure the {\em Convex Hull Sampler} as it produces a random pricing in the convex hull of those given by the RRS at different sets $T\subseteq S$.

Unfortunately, the multiplicative loss in revenue for the RRS we design depends on the ratio of the maximum to the minimum price charged in $M^*$, and this dependence is necessary. This dependence can arise due to a potential long tail of the buyer's value distribution leading to exponentially large prices coupled with exponentially small allocation probabilities. Our final technical contribution is to eliminate this dependence in our approximation of ex ante revenue. In particular, as long as each ex ante constraint $x_j$ is at least $\Omega(1/\operatorname{poly}m)$, it becomes possible to obtain an $\bigoh(\log m)$-OCRS for item pricings. This in turn provides the $\bigoh(\log m)$ approximation to the ex ante revenue for item pricings stated in \Cref{it:exante}.

Finally, we remark that our OCRS and sequential item pricing are fully constructive and can be found in time polynomial in $n$, $m$, and the support sizes of the distributions $\D_i$, assuming: (1) We have access to the ex ante optimal solution $x$ and the corresponding optimal (random) pricing $p$; (2) We have access to a demand oracle that given an item pricing and a value function in the support of $\cup_i\D_i$ returns the set of items bought by the buyer with value $v$.

\subsubsection*{Overview of the rest of the paper}

In \Cref{sec:prelims} we formally introduce the ex ante benchmark, the OCRS for revenue maximization, and an outline of our approach towards designing an OCRS for item pricings. We present the four main parts of our upper bound construction in \Cref{sec:rrs-subadd,sec:rrs_to_ocrs,sec:OCRStoEARev,sec:subadd-theorem}. All our constructions are presented as existential results; we address the computational aspects in \Cref{sec:computational-aspects}.  In \Cref{sec:xos-lb} we provide an $\Omega(\sqrt{\log m})$  lower bound on the existence of an OCRS for item pricing over the class of XOS valuation functions. Our lower bound in fact applies to the gap between the ex ante item pricing relaxation and the revenue of {\em any} ex post feasible item pricing mechanism. We defer a discussion of gross substitutes and general valuations to the appendix. \Cref{sec:related} discusses related work. We conclude and outline directions for future work in \Cref{sec:conclusion}.  
\section{Definitions and an Outline of Our Construction} \label{sec:prelims}

We consider the standard mechanism design setting with $m$ items and $n$ buyers with the objective of maximizing the total revenue of the seller. Buyers have combinatorial valuations over the items, $v_i:2^{[m]}\rightarrow \Rplus$ for $i\in [n]$, drawn from known independent distributions $\D_i$. We write $\D=\D_1\times\cdots\times\D_n$ as the joint value distribution. Section~\ref{sec:PrelimMD} of the appendix describes different classes of valuations and Bayesian incentive compatible (IC) mechanisms for readers unfamiliar with mechanism design. Henceforth, we assume knowledge of these concepts. 

For a buyer with valuation $v$ and a single-buyer mechanism represented as a (random) pricing $p$ over lotteries, we write $\util(v,p)$ as the utility of the buyer; $\alloc(v,p)$ as the allocation made to the buyer (in the form of a random subset of items or an indicator vector for that subset); and $\rev(v,p)$ as the corresponding revenue of the mechanism. We extend these definitions to a value distribution $\D$ by taking expectations over $v\sim\D$, and to a class of mechanisms $\C$ by taking the maximum over all $p\in \C$ of the expected revenue of $p$. For $j\in [m]$, a subscript of $j$ on each of these quantities indicates the contribution of item $j$ to the corresponding quantity, where well-defined. Finally, for any valuation function $v$ and a subset  $S\subseteq [m]$ of items, we denote by $v_{|S}$ the valuation function given by $v_{|S}(T)=v(T\cap S)$ for all $T\subseteq [m]$; analogously, we denote by $\D_{|S}$ the value distribution that first samples a valuation $v\sim \D$, and then returns $v_{|S}$. In other words, $\D_{|S}$ captures the valuation of buyer $\D$ if we can only offer them items in $S$.

\subsection{The Ex Ante Relaxation} 

As discussed previously, the ex ante relaxation provides an upper bound on the revenue of an optimal mechanism by relaxing the ex post supply constraint, namely that each item should be sold at most once, to an ex ante supply constraint, namely that the expected number of copies of the item sold is at most one. 

Following the approach of \textcite{chawla23buy}, we can define an ex ante relaxation with respect to a specific class of mechanisms. Let $\C$ be a class of single-buyer Bayesian IC mechanisms, and let $\M$ be a distribution over mechanisms in this class. For a buyer with value distribution $\D$, and an ex ante constraint $x\in [0,1]^m$, we say that the distribution $\M$ satisfies the ex ante constraint $x$ over $\D$ if it holds that:
\[
    \xpectover{M\sim\M}{\alloc(\D,M)} \preceq x.
\]

\noindent
The optimal revenue achievable from a class $\C$ of mechanisms given ex ante constraint $x$ is:
\[\rev_{x}(\D,\C) \defeq \max_{\M\in \Delta^{\C}: \M \text{ satisfies } x \text{ over } \D} \xpectover{M\sim \M} {\rev(\D,M)}.\]

\noindent
The optimal ex ante revenue from $n$ buyers with joint value distribution $\D$ is then defined as follows.
\begin{definition}[Optimal Ex Ante Revenue under Mechanism Class] Given a class $\C$ of single-buyer mechanisms and a joint value distribution $\D$, we define the \emph{optimal ex ante revenue under} $\C$ as
\[
    \earev(\D,\C) \defeq \max_{x\in\polytope}\sum_{i=1}^n\rev_{x_i}(\D_i,\C),
\]
where $\polytope$ denotes the polytope of feasible ex ante allocations; that is
\[
    \polytope \defeq \left\{x = (x_1,x_2,\dotsc , x_n): x_i\in\Rplus^m\;\forall i\in [n]\text{ and }\sum_{i=1}^nx_{ij} \leq 1\;\forall j\in [m]\right\}.
\]
\end{definition}

Note that the ex ante revenue $\earev(\D,\C)$ provides an upper bound on the revenue of any Bayesian IC mechanism $M=(\all,\pay)$, each single-agent component of which is a $\C$ mechanism; that is, for all $i\in [n]$, the distribution $(\all(v_i,\cdot), \pay(v_i,\cdot))$ taken over the randomness in $v_{-i}$ lies in $\Delta^{\C}$.

\textcite{chawla23buy} prove the following connection between the ex ante revenue over buy many mechanisms and the ex ante item pricing revenue.

\begin{theorem}[\cite{chawla23buy}]
\label{t:buy-many-item-pricing-approx}
    For any multi-buyer value distribution $\D$ over $m$ items, \[\earev(\D, \buymany)\le 
\bigoh(\log m)\cdot \earev(\D, \itempricing).\]
\end{theorem}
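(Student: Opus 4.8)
The plan is to prove this by reducing to a single-buyer statement and then invoking a constraint-aware version of the buy-many-versus-item-pricing bound of \textcite{chawla2022buy}.

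\emph{Reduction to one buyer.} Let $x^\star=(x^\star_1,\dotsc,x^\star_n)\in\polytope$ be a maximizer defining $\earev(\D,\buymany)$, so that $\earev(\D,\buymany)=\sum_i\rev_{x^\star_i}(\D_i,\buymany)$. It then suffices to establish the per-buyer inequality
\[
    \rev_y(\D_i,\buymany)\ \le\ \bigo{\log m}\cdot\rev_{c y}(\D_i,\itempricing)
\]
for some absolute constant $c\ge 1$ and every ex ante vector $y$. The constant blow-up of $y$ is harmless because of an elementary \emph{downscaling} fact: for any class $\C$ closed under mixing with the null mechanism — both $\buymany$ and $\itempricing$ are — and any $c\ge 1$, we have $\rev_{cy}(\D_i,\C)\le c\cdot\rev_y(\D_i,\C)$, since running the $\rev_{cy}$-optimal (random) mechanism with probability $1/c$ and selling nothing otherwise divides every per-item allocation probability, and the revenue, by $c$. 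Summing the per-buyer bound over $i$ and combining $x^\star\in\polytope$ with downscaling then gives $\earev(\D,\buymany)\le\bigo{\log m}\cdot\sum_i\rev_{x^\star_i}(\D_i,\itempricing)\le\bigo{\log m}\cdot\earev(\D,\itempricing)$.

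\emph{The per-buyer bound.} Here I would follow \textcite{chawla2022buy}. Fix a distribution $\M$ over buy-many menus attaining $R:=\rev_y(\D_i,\buymany)$ with $\xpectover{M\sim\M}{\alloc(\D_i,M)}\preceq y$; each menu in its support induces a subadditive pricing over bundles. Since the buyer may assemble any purchased bundle one item at a time, subadditivity lets us charge the revenue to individual items: $R\le\sum_{j\in[m]}\rho_j$, where $\rho_j$ is the expected singleton price of item $j$ — over $M\sim\M$ and the value — on the event that $j$ lies in the purchased bundle. Partition the $m$ items into $\bigo{\log m}$ geometric buckets according to the scale of $\rho_j$ (items whose $\rho_j$ is a $\operatorname{poly}(m)$ factor below the largest contribute a negligible total, since there are only $m$ of them), so that some bucket accounts for an $\Omega(1/\log m)$ fraction of $R$. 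On that bucket, offer an item pricing $q$ that charges each participating item a constant multiple of its characteristic price. The buy-many property is what makes $q$ effective: because bundles were purchasable incrementally under $\M$, a buyer who bought an expensive bundle there still finds some bundle of comparable price worth buying against the item prices $q$, so $\rev(\D_i,q)=\Omega(R/\log m)$; and since $q$ only offers — and only raises the price of — items in the chosen bucket, each item sells under $q$ with probability $\bigo{1}$ times its purchase probability under $\M$, so $\alloc(\D_i,q)\preceq\bigo{1}\cdot y$. Invoking downscaling once more completes the per-buyer bound.

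\emph{Main obstacle.} I expect the delicate point to be controlling the \emph{per-item} allocations of the constructed item pricing $q$: passing from a bundle pricing to (higher) item prices on a restricted set of items need not decrease the sale probability of any individual item, because of substitution effects among items, so the naive "prices up $\Rightarrow$ demand down" reasoning fails and one must exploit the bucket structure to bound how much any single item can be oversold. If a clean direct argument is unavailable, the fallback is to post-process $q$ into a convex combination of item pricings over subsets — independently dropping each oversold item with an appropriate probability, in the spirit of the Convex Hull Sampler used later in this paper — which restores $\alloc(\D_i,q)\preceq y$ at the cost of only a constant factor in revenue. The other subtle step is the charging itself: one must move from a subadditive bundle pricing to item prices without the buyer's demand collapsing to nothing, and this is precisely where the buy-many hypothesis, rather than an arbitrary menu, is essential.
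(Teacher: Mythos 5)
The paper does not actually prove this statement: \Cref{t:buy-many-item-pricing-approx} is imported as a black box from \cite{chawla23buy}, so there is no internal proof to compare your attempt against. I will instead assess the attempt on its own terms.

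Your reduction scaffolding is sound. Decomposing $\earev(\D,\buymany)$ via an optimal $x^\star\in\polytope$, proving a per-buyer inequality $\rev_y(\D_i,\buymany)\le \bigo{\log m}\cdot\rev_{cy}(\D_i,\itempricing)$, and then absorbing the constant $c$ via the downscaling observation (mixing with the null mechanism with probability $1-1/c$, which the paper itself uses in step~(ii) of \Cref{sec:OCRStoEARev}) is exactly the right structure, and your downscaling fact is stated and used correctly.

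The per-buyer bound, however, has a real gap, and it is the one you flag: you have no argument that the item pricing $q$ you build sells each item with probability $\bigo{1}\cdot y_j$. Your intuition ``raise prices $\Rightarrow$ demand down'' is false item-by-item for general subadditive valuations, and you say so; but the proposed fallback of a Convex-Hull-Sampler-style post-processing does not obviously transfer. The \chalg{} machinery in \Cref{sec:rrs_to_ocrs} is calibrated to an RRS whose \emph{input} is already an item pricing satisfying the ex ante constraint (it uses the RRS scaling guarantee $q_j\ge p_j/\alpha$ together with the fact that $p$ itself satisfies the allocation budget $w_j=p_j\alloc_j(\D,p)$). Here your starting object is a buy-many bundle pricing, and the vector $\rho_j=\operatorname{E}[\pi(\{j\})\mathbbm{1}(j\in T)]$ you would feed into such an argument has no a priori relation to $y_j$ beyond $\operatorname{E}[\mathbbm{1}(j\in T)]\le y_j$ --- singleton prices $\pi(\{j\})$ can greatly exceed the per-item revenue contribution, so the ``characteristic price'' is not comparable to anything you control. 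You would need a new ingredient specifically relating singleton buy-many prices to the ex ante allocation vector.

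A secondary issue: your revenue argument itself is nonstandard. The published single-buyer $\bigo{\log m}$ gap of \citet{chawla2022buy} does not bucket items by price scale; it applies the log-uniform \emph{price-scaling} lemma (the paper reproduces it as ``Lemma 3.1 of \cite{chawla2022buy}'' in \Cref{sec:rrs-subadd}) to the singleton-price vector $p_j=\pi(\{j\})$, then controls the utility drop $\util(v,\ell p)-\util(v,hp)$. Your item-bucketing variant would need its own justification for the claim that a constant fraction of $R$ is captured inside some bucket \emph{and} that a fixed-price offer over that bucket recovers it; the step ``a buyer who bought an expensive bundle there still finds some bundle of comparable price worth buying against $q$'' is asserted rather than proved, and it is precisely the kind of claim that the scaling lemma exists to make rigorous. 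I would replace the bucketing with the scaling lemma, and then confront the allocation-control problem head-on --- that is where the real content of \Cref{t:buy-many-item-pricing-approx} lives, and where your proposal currently stops short.
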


\subsection{Our Main Results}

Given the connection between the ex ante buy-many and item pricing revenues established in~\Cref{t:buy-many-item-pricing-approx}, we focus on approximating the ex ante relaxation for item pricings. Our main result is as follows:

\begin{restatable}{theorem}{maintheorem} \label{thm:maintheorem}
    Let $\D$ be any joint distribution for $n$ buyers and $m$ items over subadditive valuation functions, and let $\seqitempricing$ denote the class of all Sequential Item Pricing mechanisms. Then,
%    For any distribution over buyer valuations $\D$ that are subadditive, there exists a sequential item pricing mechanism such that
    \[
        \earev(\D,\itempricing) \leq \bigoh(\log m) \cdot \rev(\D, \seqitempricing).
    \]
\end{restatable}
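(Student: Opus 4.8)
The plan is to establish \Cref{thm:maintheorem} by constructing the $O(\log m)$-OCRS for item pricings promised in \Cref{it:ocrs} and then invoking the generic reduction from OCRSes to sequential mechanisms. Fix a buyer-by-buyer ex ante optimum: a feasible $x\in\polytope$ together with, for each $i$, a (random) item pricing $p_i^*$ that satisfies the ex ante constraint $x_i$ over $\D_i$ and attains $\rev_{x_i}(\D_i,\itempricing)$. The construction has four modular parts, matching Sections~\ref{sec:rrs-subadd}--\ref{sec:subadd-theorem}.

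First (\Cref{sec:rrs-subadd}), design the \emph{Revenue Recovery Scheme} (RRS): given any subset $S\subseteq[m]$ of currently available items, produce an item pricing supported on $S$ that extracts revenue comparable to what $p_i^*$ collects ``through'' $S$, while overselling only in aggregate. This is the only place subadditivity is used: if $T^*$ is the set bought under $p_i^*$, subadditivity gives $v(T^*\cap S)\ge v(T^*)-v(T^*\setminus S)$, so in expectation over a random $S$ including each item with probability at least $\beta$, the buyer still derives a constant fraction of the value, hence of the utility, of $T^*$ from items in $S$. One then applies the price-scaling argument of \textcite{chawla2022buy}: raise the prices of items in $S$ by geometrically growing factors across $O(\log R)$ scales, where $R$ is the ratio of the largest to the smallest price used by $p_i^*$; some scale converts an $\Omega(1/\log R)$ fraction of that utility into revenue. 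The resulting pricing sells, in aggregate over all items, at most $O(1)$ times as many as $p_i^*$, so it respects the ex ante budget \emph{in total}, though not necessarily per item.

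Second (\Cref{sec:rrs_to_ocrs}), turn the RRS into an OCRS via the \emph{Convex Hull Sampler}: since the RRS applied to $S$ oversells only a few items (it is within $O(1)$ of $x$ in aggregate), one can specify a distribution over subsets $T\subseteq S$, apply the RRS to each, and take the induced mixture of pricings, which lies in the convex hull of $\{\mathrm{RRS}(T):T\subseteq S\}$; a transportation/flow argument picks this distribution so that the per-item allocation of the mixture drops to at most $x_j\cdot\Pr[j\in S]$ while losing only another constant factor in revenue, which is affordable precisely because few items are overfull. This yields an $O(\log R)$-OCRS for item pricings (conditions (a),(b) of the informal definition). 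Third (\Cref{sec:OCRStoEARev}), plug this OCRS into a sequential mechanism: process buyers in order and, for buyer $i$, call the OCRS on the set $S_i$ of still-unsold items but with the scaled constraint $x_i/2$. Since each earlier buyer $i'$ buys item $j$ with probability at most $x_{i'j}/2$ and $\sum_{i'}x_{i'j}\le 1$, item $j$ survives to stage $i$ with probability at least $1/2$, so $\beta=\Omega(1)$ uniformly; summing the OCRS guarantees over $i$ and using $\rev_{x_i/2}(\D_i,\itempricing)\ge\tfrac12\rev_{x_i}(\D_i,\itempricing)$ (offer $p_i^*$ with probability $\tfrac12$) gives $\rev(\D,\seqitempricing)\ge\Omega(1/\log R)\cdot\earev(\D,\itempricing)$.

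The main obstacle is the fourth part (\Cref{sec:subadd-theorem}): $R$ can be $\mathrm{poly}(m)$ or larger — a fat tail in $\D_i$ makes it optimal to sell some item at an enormous price with a minuscule probability — so the bound above is not yet $O(\log m)$. The plan to remove this dependence is to bucket the optimal ex ante pricings by price scale and argue that, up to a constant-factor loss, the benchmark is attained by pricings of price range $\mathrm{poly}(m)$ (single very-expensive-item sales are reproduced directly by item pricings, and within any purchased bundle the items priced a factor $\gg m$ below the top contribute only an $o(1)$ fraction of that bundle's price), while simultaneously rounding every ex ante probability $x_j$ up to $1/\mathrm{poly}(m)$, which remains feasible after an $O(1)$ rescaling and costs only a constant factor for item pricings. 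After these reductions $\log R=O(\log m)$ and \Cref{it:ocrs} applies, giving $\earev(\D,\itempricing)\le O(\log m)\cdot\rev(\D,\seqitempricing)$. I expect the delicate points to be (i) controlling the aggregate overselling of the RRS tightly enough that the Convex Hull Sampler loses only $O(1)$, and (ii) carrying out the price-range and probability-truncation reductions in part four with only constant, rather than logarithmic, loss.
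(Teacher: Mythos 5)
Your parts (1)--(3) track the paper's actual argument closely: \Cref{sec:rrs-subadd} constructs an $\bigoh(\log m + \log\Gamma)$-RRS via the \textcite{chawla2022buy} geometric price-scaling applied to $v(T^*\cap S)$, \Cref{sec:rrs_to_ocrs} converts it to an OCRS via the Convex Hull Sampler (you lose $e/(e-1)$, not exactly ``another constant factor because few items are overfull,'' but the mechanism you describe is the right one), and \Cref{sec:OCRStoEARev} gives the $4\alpha$ reduction with the $x_i/2$ rescaling and the survival-probability bound. Those three match the paper in substance.

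Part (4) is where your plan genuinely diverges and, as stated, does not work. Your central reduction is that ``up to a constant-factor loss, the benchmark is attained by pricings of price range $\operatorname{poly}(m)$.'' That is false even for a single buyer: take an additive buyer over $m$ items with $v_j = B^{j}$ independently with probability $B^{-j}$ (else $0$) for $B$ super-polynomial in $m$; the pricing $p_j=B^j$ earns $\Theta(m)$, while any (even randomized) item pricing whose realized prices all lie within a $\operatorname{poly}(m)$ window earns only $O(1)$, since at most one index $j$ has $B^j$ inside the window and the rest decay geometrically. The paper never makes such a reduction, and indeed its Appendix on RRS lower bounds explicitly constructs XOS instances where bounding the aspect ratio costs $\Omega(\sqrt m)$. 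The actual argument in \Cref{sec:subadd-theorem} is different and you are missing its key insight: partition prices into $S$, $M$, $L$ relative to $\obj$, discard $S$ as negligible, apply the OCRS only to the medium band (where $\Gamma = O(m^4)$ so $\log\Gamma = O(\log m)$), and handle $L$ with a \emph{separate} sequential mechanism (\Cref{lem:resolve-obj-l} and \Cref{c:obj-l-maintain-prob}) that sets all prices $\ge 2m\cdot\obj$, sells to buyers until the first sale, then halts. That mechanism is a constant-factor approximation because any item priced $\ge 2m\obj$ is sold with probability $\le 1/(2m)$ --- else that single sale alone would already exceed $\obj$, contradicting ex ante optimality --- so the ``all items still available'' event has probability $\ge 1/2$ throughout. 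Your sub-claim (ii) (items priced $\gg m$ below the bundle top are negligible within a bundle) is correct and is in the spirit of dropping the $S$ band, but sub-claim (i) (``single very-expensive-item sales are reproduced directly'') is exactly the step that needs the low-allocation-probability observation, which you never identify. Also, ``rounding every $x_j$ up to $1/\operatorname{poly}(m)$ at $O(1)$ cost'' is not obviously feasible when $n \gg \operatorname{poly}(m)$, and in any case it would not bound the aspect ratio of the random pricings $p_i^*$. So the proposal has a real gap at its hardest step, and the paper's route through $\objl$ is not a cosmetic variant of your truncation idea but a substantively different argument.
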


%Furthermore, when provided with an ex ante item pricing, our result is algorithmic. \Shuchi{Check.} 
As a corollary, we obtain the following approximation to ex ante buy-many revenue. 

\begin{corollary}
    Let $\D$ be any joint distribution for $n$ buyers and $m$ items over subadditive valuation functions. Then we have,
    \[\earev(\D,\buymany)\le \bigoh(\log^2 m)\cdot \rev(\D,\seqitempricing).\]
\end{corollary}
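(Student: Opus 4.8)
The plan is to obtain the corollary by composing the two approximation guarantees already in hand, with no new argument required. \Cref{t:buy-many-item-pricing-approx}, due to \textcite{chawla23buy}, bounds the ex ante buy-many revenue against the ex ante item pricing revenue for \emph{any} multi-buyer value distribution over $m$ items; since it imposes no restriction on the underlying valuation class, it applies in particular when $\D$ is supported on subadditive valuation functions, giving $\earev(\D,\buymany)\le \bigoh(\log m)\cdot\earev(\D,\itempricing)$. \Cref{thm:maintheorem} then bounds the ex ante item pricing revenue against the revenue of the best sequential item pricing, $\earev(\D,\itempricing)\le \bigoh(\log m)\cdot\rev(\D,\seqitempricing)$, precisely under the subadditivity hypothesis we are assuming here.

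Chaining these two inequalities yields
\[
    \earev(\D,\buymany)\;\le\;\bigoh(\log m)\cdot\earev(\D,\itempricing)\;\le\;\bigoh(\log m)\cdot\bigoh(\log m)\cdot\rev(\D,\seqitempricing)\;=\;\bigoh(\log^2 m)\cdot\rev(\D,\seqitempricing),
\]
which is exactly the claimed bound. The absorbed constants are universal: if the two ingredients hold with constants $c_1$ and $c_2$ respectively, the composition holds with constant $c_1 c_2$, so nothing depends on $n$, the value distributions, or the support sizes.

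There is essentially no obstacle in this final step — all of the work lives in \Cref{thm:maintheorem} (and upstream of it, in the Revenue Recovery Scheme and Convex Hull Sampler constructions) together with the cited reduction of \textcite{chawla23buy}. The only bookkeeping to check is that the hypotheses align: both statements concern the same joint distribution $\D$ over the same $m$ items, \Cref{t:buy-many-item-pricing-approx} needs no assumption on $\D$ at all, and \Cref{thm:maintheorem} needs only that the buyers' valuations be subadditive, which is what the corollary assumes. Hence the corollary follows immediately by concatenating the two bounds.
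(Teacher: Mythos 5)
Your proof is correct and matches the paper's (implicit) argument exactly: the corollary is obtained by chaining \Cref{t:buy-many-item-pricing-approx} with \Cref{thm:maintheorem}, losing a factor of $\bigoh(\log m)$ at each step. Nothing further is needed.
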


We show that the dependence on $m$ in Theorem~\ref{thm:maintheorem} is necessary for subadditive valuations: there is an $\Omega(\sqrt{\log m})$ gap between ex ante item pricing revenue and the revenue of sequential item pricings even when buyer valuations are XOS (fractionally subadditive). In fact, the lower bound also applies to the gap between the ex ante item pricing revenue and the revenue of {\em any} ex post feasible (random) item pricing mechanism. We prove this lower bound in \Cref{sec:xos-lb}. 

\begin{restatable}{theorem}{xoslbthm}\label{thm:xos-lb}
    There exists a joint value distribution $\D$ for $n$ buyers and $m$ items over fractionally subadditive (XOS) valuation functions for which
    \begin{align*}
        \earev(\D,\itempricing) &\geq \Omega\left(\sqrt{\log m}\right) \cdot \rev(\D, \seqitempricing).
    \end{align*}
\end{restatable}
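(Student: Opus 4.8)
The plan is to exhibit, for each integer $k$, an explicit instance on $m$ items with $\log m = \Theta(k^2)$ (so that $k=\Theta(\sqrt{\log m})$) and a family of XOS buyers for which $\earev(\D,\itempricing)$ exceeds the revenue of every ex-post feasible (possibly randomized and adaptive) item pricing by a factor $\Omega(k)$. Since sequential item pricings form a sub-class of such mechanisms, this yields the stated bound, and the same instance restricted to a single buyer facing a random residual item set gives the $\Omega(\sqrt{\log m})$ lower bound on the OCRS factor. The instance uses $k$ ``price scales'': scale $\ell$ owns its own block of items, whose size is chosen so that the per-item ex ante allocations requested by the buyers' scale-$\ell$ mechanisms sum to at most $1$; within a scale the values follow an equal-revenue-type distribution, calibrated so that every price in a geometric grid of $\Theta(k)$ candidates yields the same marginal revenue. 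This calibration is the essential point: it forces every scale to contribute comparably, rather than the contribution collapsing onto the top scale (which is what defeats naive geometric constructions). The XOS valuations are written as explicit maxima of additive functions supported on the scale blocks, arranged so that, with the full item set available, a single randomized item pricing tailored to one buyer collects $\Omega(k)$ times a base revenue $R_0$ from that buyer while selling only a tiny expected number of items, whereas the bundles the buyers actually demand under any \emph{shared} item pricing overlap heavily.

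To lower bound $\earev(\D,\itempricing)$ I would write down the ex ante point $x\in\polytope$ together with, for each buyer $i$, the randomized item pricing $\M_i$ realizing $\rev_{x_i}(\D_i,\itempricing)=\Omega(k)\cdot R_0$. The block sizes were chosen precisely so that $\sum_i \xpect{\alloc(\D_i,M_i)} \preceq x \preceq (1,\dots,1)$, so this is a feasible ex ante solution, and summing the per-buyer contributions gives the desired lower bound.

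The heart of the proof is the matching upper bound on the revenue of an arbitrary ex-post feasible randomized item pricing. Here I would process the buyers in their (adaptively chosen) order while maintaining a potential equal to the residual supply remaining at each scale, and establish the key lemma that the revenue extractable from any not-yet-served buyer is controlled by how much residual supply is left at the scales that this buyer can still reach --- using subadditivity of the XOS valuations to bound a buyer's willingness to pay, and the equal-revenue calibration to rule out the seller dodging the bound by concentrating prices on a few scales. Telescoping these per-buyer bounds against the $m$ units of supply then caps the total revenue at $O(R_0)$, which is a $\Theta(k)$ factor below the ex ante benchmark.

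I expect this last step to be the main obstacle. The ex-post item-pricing class is genuinely powerful --- the companion upper bound for it is only $O(\log m)$, not $O(1)$ --- so the charging argument must exploit the fine structure of the construction (the heavy overlap of the demanded bundles across buyers, together with the equal-revenue calibration inside each scale) and cannot rely on generic subadditivity alone. A secondary difficulty lies in the construction itself: making the valuations simultaneously XOS, supported on the intended blocks, heavy-tailed enough to expose $\Theta(k)=\Theta(\sqrt{\log m})$ equally-profitable price scales, and ``thin'' enough (small expected allocations) that the ex ante solution is feasible, forces a balance between the block sizes and the number of scales --- and it is precisely this balance that produces a $\sqrt{\log m}$, rather than a full $\log m$, separation.
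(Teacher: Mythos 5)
There is a genuine gap: what you have written is a plan whose two load-bearing components --- the explicit construction and the upper bound against arbitrary ex-post item pricings --- are exactly the parts you defer, and the architecture you sketch for them differs from the paper's in ways that matter. First, the construction. You place each of $\Theta(\sqrt{\log m})$ price scales on its own disjoint block of items with an equal-revenue calibration inside each block, and hope the gap equals the number of scales. But the ex ante constraint can only \emph{restrict} a single buyer's item-pricing revenue ($\rev_{x}(\D,\itempricing)\le\rev(\D,\itempricing)$), so no per-buyer gap of $\Omega(k)$ between ``ex ante'' and ``ex post'' can exist; the separation must come entirely from multi-buyer contention. With disjoint per-scale blocks, a shared pricing can simply price each block at its own scale, and it is not clear your equal-revenue calibration prevents the seller from serving different buyers on different blocks --- the paper avoids this by making the scale a \emph{random parameter of each buyer} ($\ell=2^{h}$, $h$ uniform over $\Theta(\log m)$ values) on a common ground set, so that a single pricing necessarily faces all scales on the same items and loses a factor $\Theta(\log k/t)$ on ``$B$-type'' purchases. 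Note also that the paper uses $\Theta(\log m)$ scales, not $\Theta(\sqrt{\log m})$: the $\sqrt{\log m}$ emerges from balancing two distinct losses (the fraction $\approx 1/t$ of buyers who can be served their high-revenue sets, versus the $\approx t/\log k$ scale loss), not from thinning the number of scales.

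Second, the upper bound. Your proposed telescoping of a ``residual supply per scale'' potential is precisely the step you concede might fail, and generic charging against supply will not work here, since the ex-post item-pricing class genuinely achieves $O(\log m)$ of the benchmark in general. The paper's proof needs three specific quantitative facts about its instance: (i) a buyer whose purchase is supported by the $A$-valuation must receive at least a $(1-\tfrac{1}{t+1})$-fraction of their random $\sqrt{m}$-sized $A$-set and then pays only $O(k)$ (\Cref{c:xos-lb-A-items,c:xos-lb-A-rev}); (ii) a buyer served via a $B$-valuation pays only $O(tk/\log k)$ because of the random scale (\Cref{c:xos-lb-B-rev}); and (iii) a purely combinatorial packing lemma (\Cref{lem:max-t-feasible-num-buyers}, proved via negative association and a Hoeffding bound with a union bound over buyer subsets) showing that under \emph{any} assignment of items, only $O(n/t)$ buyers can simultaneously receive large fractions of their $A$-sets. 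Your proposal contains no analogue of (iii), which is the heart of why ``heavy overlap'' actually caps the number of high-revenue buyers, nor the conditional revenue bounds (i)--(ii) that convert that cap into a revenue bound. Until you supply a concrete instance and an argument playing the roles of these lemmas, the claimed $\Omega(\sqrt{\log m})$ separation is not established.
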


We then consider other classes of value distributions. \textcite{chawla23buy} previously showed that sequential item pricings obtain a constant fraction of the ex ante item pricing revenue over the class of unit demand value distributions. We show an extension of that result to gross substitutes valuations in \Cref{sec:gs-improvement-main}. Note that the factor of $2$ is tight even for $m=1$ items.

\begin{restatable}{theorem}{gstheorem}
\label{thm:gross-subs}
    Let $\D$ be any joint distribution for $n$ buyers and $m$ items over gross substitutes valuation functions. Then we have,
    \[\earev(\D,\itempricing) \leq 2\cdot\rev(\D,\seqitempricing) .\]
    %\[ \rev(\D,\seqitempricing) \ge \Omega(1) \cdot \earev(\D,\itempricing) \geq \Omega(1/\log m) \cdot \earev(\D,\buymany).\]
\end{restatable}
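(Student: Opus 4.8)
The plan is to round the ex ante optimum online, using the gross substitutes property to control how each buyer's demand reacts to the removal of items. Fix an optimal solution for $\earev(\D,\itempricing)$: an ex ante constraint $x=(x_1,\dotsc,x_n)\in\polytope$ and, for each buyer $i$, an item pricing $p^i$ with $\xpectover{v\sim\D_i}{\alloc(v,p^i)}\preceq x_i$ and $\sum_i\rev(\D_i,p^i)=\earev(\D,\itempricing)$. The sequential item pricing I would construct presents to buyer $i$, on arrival, the prices $p^i$ restricted to the still-unsold set $S_i$ --- prices $p^i_j$ for $j\in S_i$ and price $+\infty$ otherwise. The enabling observation is that raising the prices of the items in $[m]\setminus S_i$ to $+\infty$ while leaving those of $S_i$ unchanged is precisely the kind of price increase under which gross substitutes guarantees the buyer still demands every item of $S_i$ that it demanded at the full vector $p^i$. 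Thus for every realization $v_i$ the purchase set contains $\alloc(v_i,p^i)\cap S_i$, so the revenue from buyer $i$ is at least $\xpect{\sum_{j\in\alloc(v_i,p^i)\cap S_i}p^i_j}$; since $v_i$ is independent of $S_i$ (a function of the earlier buyers and the mechanism's coins),
\[
  \rev(\D,\seqitempricing)\ \ge\ \sum_{j=1}^m\sum_{i=1}^n p^i_j\,q^i_j\,\prob{j\in S_i},
  \qquad q^i_j:=\probover{v\sim\D_i}{j\in\alloc(v,p^i)}\le x_{ij},
\]
and $\sum_i p^i_j q^i_j$ is exactly the contribution of item $j$ to $\earev(\D,\itempricing)$.

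It then suffices to show that each item survives with probability at least one half, in the weighted sense $\sum_i p^i_j q^i_j\,\prob{j\in S_i}\ge\tfrac12\sum_i p^i_j q^i_j$ for every $j$. Since item $j$ is sold to at most one buyer, $\prob{j\in S_i}=1-\sum_{i'<i}\prob{\text{buyer }i'\text{ buys }j}$, and the only ex ante constraint tied to item $j$ is the single-resource inequality $\sum_i x_{ij}\le1$; this is exactly the regime in which a single-item online contention resolution scheme --- equivalently Alaei's $1$-unit magician --- is $\tfrac12$-selectable. I would therefore run one such scheme per item, realizing its ``reject'' decisions by having buyer $i$'s single-buyer mechanism randomly raise the offered price of item $j$ to $+\infty$ (hide item $j$) with the probability dictated by that scheme. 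Two things make this layering legitimate: hiding item $j$ is a price increase on a single coordinate, so by gross substitutes it only weakly increases the buyer's purchases among the remaining items and never lowers the revenue charged to items other than $j$; and the buyers' demand for item $j$ across the rounds where it is available and not hidden is at least its target $q^i_j$. Combining this with the displayed bound yields $\rev(\D,\seqitempricing)\ge\tfrac12\,\earev(\D,\itempricing)$.

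The step I expect to be the real obstacle is the one hidden in the previous paragraph: a posted item pricing must commit to its prices before the buyer reports, so the mechanism cannot simply refuse the items a buyer ``should not'' have bought, and when earlier buyers remove items a later buyer may substitute \emph{into} item $j$ and oversell it relative to $q^i_j$ --- which, left unchecked, would push $\prob{j\in S_i}$ below $\tfrac12$. Resolving this requires showing that under gross substitutes the \emph{aggregate} excess demand is offset by the corresponding unavailability of other items, so that the $m$ per-item contention resolution schemes, each run on its ex ante marginals $q^{\bullet}_j$, still collectively deliver a constant-fraction guarantee, losing at most a constant factor of the floor revenue computed above. Gross substitutes is thus doing double duty --- it both supplies the floor on each buyer's purchase set and makes the per-item schemes mutually compatible within a single pricing. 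I would also take some care with ties in the demand correspondence, fixing a selection for which the gross-substitutes exchange property holds. Finally, the factor $2$ is unimprovable: already for $m=1$, where gross substitutes, unit-demand, and general valuations coincide, \citet{chawla23buy} exhibit a distribution on which every sequential item pricing loses a factor $2$ against the ex ante benchmark, matching the optimal single-item ex ante prophet inequality.
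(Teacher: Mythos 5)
Your proposal captures several of the right ingredients: using the optimal ex ante pricing $p^i$ directly as the posted prices, invoking the gross substitutes property to argue that restricting the offer to the surviving set $S_i$ can only add (never delete) items from $\alloc(v_i,p^i)\cap S_i$, and targeting a per-item survival probability of $\tfrac12$. You also correctly identify the central obstacle yourself: once earlier buyers remove items, a later GS buyer may substitute \emph{into} some item $j$ and sell it at a rate exceeding $q^i_j$, threatening the bound $\pr[j\in S_i]\ge\tfrac12$. But the proposal does not actually close that gap --- you sketch a fix ($m$ independent per-item contention resolution schemes, realized by randomly hiding item $j$), flag the coupling issue, and then defer the resolution to a claim you do not prove. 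That coupling is genuinely problematic: hiding item $j$ for buyer $i$ inflates demand for items $j'\ne j$ (by GS), so the marginals fed into the magician for $j'$ are no longer $q^{\bullet}_{j'}$ but depend on the hiding decisions for all other items, which in turn depend on $j'$'s own hiding rate. There is no obvious acyclic way to set the $m$ schemes' parameters, and the $\tfrac12$ loss factor you need is not established. In short, the hard part of the theorem is precisely the part left as ``I would show that\ldots''

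The paper avoids the per-item decoupling problem entirely. Rather than hiding items one coordinate at a time, it samples a single subset $T\subseteq[m]$ from a distribution $\{\lambda_T\}$ and offers buyer $i$ the pricing $p^i$ restricted to $S_i\cap T$. The key lemma (\Cref{c:gs-wine-guarantte}) uses GS exactly once, to show that the vectors $x^T\defeq\expect_{S\sim\Sdist_i}[\alloc(\D_{i|S\cap T},p^i)]$ satisfy $x^T_j\ge w_j$ for every $j\in T$ (where $w=\expect_{S}[\alloc(\D_{i|S},p^i)]$) and $x^T_j=0$ off $T$; this places the target $y_i=\tfrac12\alloc(\D_i,p^i)\preceq w$ in the convex hull of $\{x^T\}$, so one can pick $\{\lambda_T\}$ making the expected allocation \emph{exactly} $y_i$ --- no overselling, hence the induction $\pr[j\in S_i]\ge 1-\sum_{i'<i}y_{i',j}\ge\tfrac12$ goes through --- and the expected revenue is exactly $\sum_j p^i_j y_{ij}=\tfrac12\rev(\D_i,p^i)$. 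The overselling you worried about is handled by choosing the joint subset-distribution once, not by patching $m$ interacting per-item schemes. If you want to keep your framing, the fix would be to replace the $m$ magicians with a single sampler over subsets whose induced marginals exactly equal $y_i$, and the GS inequality $x^T_j\ge w_j$ is what makes that sampler exist.
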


Finally, in \Cref{sec:monotone} we demonstrate upper and lower bounds for the family of general monotone valuations. Our results are summarized in \Cref{tab:result-summary}.

\subsection{Online Contention Resolution Schemes for Revenue}

Our main goal in this paper is to approximate the ex ante revenue for a given class $\C$ of mechanisms using a sequential $\C$ mechanism. To this end, we define the following \textit{Online Contention Resolution Scheme} (OCRS).

\begin{definition}[OCRS for Revenue Maximization]
\label{def:OCRS}
    Let $\F$ be a family of single-buyer value distributions over $m$ items and $\alpha\ge 1$.    
    We say that $\F$ admits an $\alpha$-OCRS for revenue maximization under a class of mechanisms $\C$ if, for any allocation constraint $x\in [0,1]^m$, any distribution $\D\in\F$ and any random subset of items $S\subseteq [m]$, there exists a (possibly random) pricing menu $q=q(\D,S,x)\in\C$ such that:
     \begin{enumerate}[(a)]
        \item $\xpectover{S}{\alloc(\D_{|S},q)}\preceq x$, and,
        \item $\xpectover{S}{\rev(\D_{|S},q)} \geq \frac{1}{\alpha} \cdot \beta\cdot\rev_{x}(\D,\C)$, where $\beta \defeq \min_{j\in [m]}\pr[j\in S]$.
    \end{enumerate}
    When $\C$ is the class of item pricings, we replace (b) with a stronger condition, namely
    \begin{enumerate}[(a')]
    \setcounter{enumi}{1}
        \item $\xpectover{S}{\rev(\D_{|S},q)} \geq \frac{1}{\alpha}\cdot \xpectover{S, p}{\sum_{j\in S}p_j\alloc_j(\D,p)}$, 
    \end{enumerate}    
    where $p$ is the (random) item pricing that achieves $\rev_{x}(\D,\itempricing)$.
\end{definition}

The following lemma connects OCRSes for revenue to approximations for ex ante revenue. We present its proof in Section~\ref{sec:OCRStoEARev}.

\begin{restatable}{lemma}{ocrstoapproxlemma}\label{t:ocrs_argument}
    If a family of value distributions $\F$ admits an $\alpha$-OCRS for revenue maximization under a class of mechanisms $\C$, then for any $\D\in\F$, we can obtain a $4\alpha$-approximation to $\earev (\D,\C)$ using a sequential $\C$ mechanism.
\end{restatable}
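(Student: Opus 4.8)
The plan is to run the OCRS once per buyer, interleaving it with a standard scaling-and-union-bound argument that controls contention; this is the revenue analogue of the welfare OCRS-to-prophet reduction of \textcite{lee18optimal}. Fix a joint distribution $\D=\D_1\times\cdots\times\D_n$ with every marginal $\D_i\in\F$, let $x^{\star}=(x^{\star}_1,\dots,x^{\star}_n)\in\polytope$ attain $\earev(\D,\C)=\sum_i\rev_{x^{\star}_i}(\D_i,\C)$, and set $x_i\defeq x^{\star}_i/2$. The sequential mechanism processes buyers in the order $1,\dots,n$: when buyer $i$ arrives, let $S_i\subseteq[m]$ be the set of items not yet sold to buyers $1,\dots,i-1$; invoke the $\alpha$-OCRS on inputs $(\D_i,S_i,x_i)$ to obtain a (random) menu $q_i\in\C$ that allocates only items of $S_i$; offer $q_i$ to buyer $i$; and let $A_i\defeq\alloc(v_i,q_i)\subseteq S_i$ be the items bought. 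By construction this is a sequential $\C$ mechanism, and it is Bayesian incentive compatible since each buyer faces a single IC $\C$ menu whose identity depends only on earlier buyers' purchases, not on the buyer's own valuation.

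The first step is to show that every item is available to every buyer with probability at least $1/2$. Unwinding the nested expectation over $S_i$ (and over the internal randomness of $q_i$ and of $v_i\sim\D_i$) and using part~(a) of \Cref{def:OCRS} at step $i$ with constraint $x_i$, we get $\prob{j\in A_i}=\xpectover{S_i}{\alloc_j(\D_{i|S_i},q_i)}\le x_{ij}$ for all $i,j$. A union bound and $x^{\star}\in\polytope$ then give
\[
    \prob{j\notin S_i}\;\le\;\sum_{i'<i}\prob{j\in A_{i'}}\;\le\;\sum_{i'<i}x_{i'j}\;\le\;\tfrac12\sum_{i'=1}^{n}x^{\star}_{i'j}\;\le\;\tfrac12 ,
\]
so $\beta_i\defeq\min_j\prob{j\in S_i}\ge 1/2$ for every $i$.

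The second step bounds the revenue buyer by buyer. Part~(b) of \Cref{def:OCRS} at step $i$ gives $\xpect{\rev(v_i,q_i)}=\xpectover{S_i}{\rev(\D_{i|S_i},q_i)}\ge\tfrac1\alpha\beta_i\rev_{x_i}(\D_i,\C)\ge\tfrac1{2\alpha}\rev_{x_i}(\D_i,\C)$; in the item-pricing case one uses part~(a$'$) instead, together with the fact that the optimal random item pricing $p_i$ for $x_i$ is independent of $S_i$, so $\xpectover{S_i,p_i}{\sum_{j\in S_i}p_{ij}\alloc_j(\D_i,p_i)}\ge\beta_i\cdot\rev(\D_i,p_i)=\beta_i\rev_{x_i}(\D_i,\itempricing)$, yielding the same bound. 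Finally $\rev_{x_i}(\D_i,\C)=\rev_{x^{\star}_i/2}(\D_i,\C)\ge\tfrac12\rev_{x^{\star}_i}(\D_i,\C)$, since one may run the optimal distribution for $x^{\star}_i$ only half the time (equivalently, skip buyer $i$ with probability $1/2$ inside the mechanism). Summing over $i$,
\[
    \sum_i\xpect{\rev(v_i,q_i)}\;\ge\;\frac{1}{4\alpha}\sum_i\rev_{x^{\star}_i}(\D_i,\C)\;=\;\frac{1}{4\alpha}\,\earev(\D,\C),
\]
the claimed $4\alpha$-approximation.

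The argument is largely mechanical, so I expect the only points needing care to be bookkeeping rather than substance. The crucial one is that the OCRS is invoked on a set $S_i$ whose distribution is arbitrary and correlated with the entire history of buyers $1,\dots,i-1$; this is exactly why \Cref{def:OCRS} quantifies over \emph{any} random subset $S$, and it is what lets us treat each per-buyer call as a black box. Secondary points are the balancing of the two constant-factor losses — scaling $x^{\star}$ down by $2$ costs a factor $2$ in the benchmark but guarantees $\beta_i\ge1/2$, which the OCRS turns into a further factor $2$, and the constant $2$ is optimal since a factor $c$ would give $\tfrac1\alpha(1-1/c)/c\le\tfrac1{4\alpha}$ — and verifying that $q_i$ allocates only within $S_i$, so that deleting $A_i$ leaves a consistent item supply.
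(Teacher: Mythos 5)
Your proof is correct and follows essentially the same approach as the paper's: scale the ex ante constraint by $1/2$, run the OCRS per buyer on the residual item set, show $\beta_i\ge 1/2$ via the allocation condition and a union bound, and combine with $\rev_{x^*_i/2}\ge\tfrac12\rev_{x^*_i}$. The extra remarks (the item-pricing variant via condition (b$'$), the correlation of $S_i$ with history, and the optimality of the scaling constant $2$) are all sound but not used differently from the paper.
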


\subsection{Establishing OCRSes through Revenue Recovery Schemes}

As we saw, in order to design an OCRS for revenue maximization, we need to be able to satisfy an allocation condition as well as a revenue condition over the given set of available items. Our first main technical contribution is to show that for item pricing mechanisms we can ignore the allocation condition and instead simply require that the pricing we construct is component-wise not much smaller than the ex ante-optimal pricing.
To that end, we introduce the concept of a \textit{Revenue Recovery Scheme} (RRS).

\begin{definition}[Revenue Recovery Scheme]
\label{def:RRS}
Let $\F$ be a family of single-buyer value distributions over $m$ items and $\alpha\ge 1$. We say that $\F$ admits an $\alpha$-RRS if for any $\D\in\F$, any item pricing $p$ and any subset of items $S\subseteq [m]$, there exists an item pricing $q=q(\D,S,p)$ such that:
    \begin{enumerate}[(a)]
        \item $q_j\geq \frac{1}{\alpha}\cdot p_j$ for any $j\in S$, and,
        \item $\rev(\D_{|S}, q) \geq \frac{1}{\alpha}\cdot \sum_{j\in S}p_j\alloc_j(\D,p)$.
    \end{enumerate}
\end{definition}

Perhaps surprisingly, we show in \Cref{sec:rrs_to_ocrs}, that the lower bounds on prices enforced in condition (a) are enough to guarantee an OCRS for the class of item pricing mechanisms, even though no condition on the allocation is explicitly imposed.

\begin{lemma}\label{t:rrs_argument}
    Any family of value distributions $\F$ that admits an $\alpha$-RRS also admits an $\frac{e}{e-1}\alpha$-OCRS for revenue maximization under $\itempricing$.
\end{lemma}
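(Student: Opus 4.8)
Fix $\D\in\F$, an ex ante constraint $x\in[0,1]^m$, and the random set $S$, and let $p$ be the random item pricing that attains $\rev_x(\D,\itempricing)$; by definition of $\rev_x$ this $p$ is feasible for $x$, i.e.\ $\xpectover{p}{\alloc_j(\D,p)}\le x_j$ for every $j$. For each subset $T\subseteq S$ I invoke the $\alpha$-RRS on $(\D,T,p)$ to obtain an item pricing $q^T:=q(\D,T,p)$, and set $q^T_j:=+\infty$ for $j\notin T$ so that $q^T$ only ever sells items of $T$; then $\rev(\D_{|S},q^T)=\rev(\D_{|T},q^T)$ and $\alloc_j(\D_{|S},q^T)=\mathbbm 1[j\in T]\cdot\alloc_j(\D_{|T},q^T)$. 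The two RRS guarantees read $q^T_j\ge\tfrac1\alpha p_j$ for $j\in T$ and $\rev(\D_{|T},q^T)\ge\tfrac1\alpha\sum_{j\in T}p_j\alloc_j(\D,p)$, and the crucial feature is that the revenue bound is \emph{additive over the items of $T$}. The OCRS will output $q^T$ for a random $T\subseteq S$.

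The easy observation is that the choice $T:=S$ already satisfies condition (a$'$) with factor $\alpha$: since $q^S$ sells only items of $S$, the RRS revenue bound gives $\xpectover{S,p}{\rev(\D_{|S},q^S)}\ge\tfrac1\alpha\xpectover{S,p}{\sum_{j\in S}p_j\alloc_j(\D,p)}$. The difficulty is condition (a): $q^S$ may oversell some items. The Convex Hull Sampler repairs this by, for each realization of $S$, choosing a distribution over $T\subseteq S$ that keeps each item with the largest conditional probability consistent with its budget $x_j$ and realizing those marginals by a correlated rounding over subsets of $S$, so that the output pricing lies in the convex hull of $\{q^T\}_{T\subseteq S}$. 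Two facts drive the analysis. First, because $p$ is feasible and $q^S_j\ge p_j/\alpha$, the amount by which $q^S$ oversells is controlled \emph{in aggregate}: extra allocation of an overallocated item is ``paid for'' by extra revenue, so the RRS cannot simultaneously oversell badly and sit at its revenue floor. Second, this lets the thinning be arranged so that every item still lands in $T$ with conditional probability at least $1-\tfrac1e$ of what it had in $S$; feeding this into the additive revenue bound and comparing with the benchmark $\xpectover{S,p}{\sum_{j\in S}p_j\alloc_j(\D,p)}=\sum_j\pr[j\in S]\cdot\xpectover{p}{p_j\alloc_j(\D,p)}$ (using independence of $S$ and $p$) yields condition (a$'$) with factor $\tfrac{e}{e-1}\alpha$, while condition (a) holds by construction. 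Averaging over $S$ completes the proof.

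The main obstacle is exactly the Convex Hull Sampler: exhibiting, for each $S$, a distribution over $T\subseteq S$ that simultaneously (i) meets the per-item allocation bound $\xpectover{S,T}{\alloc_j(\D_{|S},q^T)}\le x_j$ --- delicate because dropping one item from the offered set perturbs the demand for the others --- and (ii) retains every item with conditional probability at least $1-\tfrac1e$, so that the additive RRS revenue survives up to the claimed factor. Quantifying the aggregate overselling of $q^S$ through the price lower bound $q^S_j\ge p_j/\alpha$ together with the feasibility of $p$, and converting that budget into per-item survival probabilities bounded below by $1-\tfrac1e$ (this is where the $e/(e-1)$ comes from), is the heart of the argument; the remaining ingredients --- feasibility of $p$, additivity of the RRS guarantee, and the final averaging over $S$ --- are routine.
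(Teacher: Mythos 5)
Your high-level architecture coincides with the paper's: run the $\alpha$-RRS on every subset $T\subseteq S$ to get pricings $q^T$, output $q^T$ for a random $T$, and argue that a suitable distribution over subsets simultaneously respects the per-item allocation constraint and preserves a $1-\frac1e$ fraction of the RRS revenue, with the price floor $q^T_j\ge p_j/\alpha$ and the feasibility of $p$ controlling the overselling in aggregate. But there is a genuine gap precisely at the step you yourself flag as the heart of the argument: you never construct the Convex Hull Sampler or prove any guarantee for it, and the property you propose it should satisfy --- that every item of $S$ survives into the random set $T$ with conditional probability at least $1-\frac1e$ --- is false in general, so no construction could deliver it. Already with a single item the allocation constraint can force the survival probability to be arbitrarily small: if the RRS sets $q_1=p_1/\alpha$ and the buyer then purchases item $1$ with probability $A_1\gg \alloc_1(\D,p)$, the constraint $\lambda_{\{1\}}\cdot A_1\le \alloc_1(\D,p)$ forces $\lambda_{\{1\}}=\alloc_1(\D,p)/A_1$, which can be far below $1-\frac1e$. (Revenue is still fine in that example, but only because each unit of allocation sold carries price at least $p_1/\alpha$, i.e.\ the correct accounting is per unit of allocation ``budget'' consumed, not per-item survival; note also that the RRS revenue bound is aggregate over $T$, so item $j$ landing in $T$ does not by itself contribute $w_j/\alpha$ to revenue, which undercuts the way you ``feed'' the survival probabilities into the additive bound.)

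What the paper proves instead is an aggregate statement. Setting $w_j\defeq p_j\alloc_j(\D,p)$ and $y^T_j\defeq\alpha\, q^T_j\alloc_j(\D_{|T},q^T)$, the RRS revenue condition says exactly $|y^T|\ge\sum_{j\in T}w_j$, and \Cref{l:convex_hull_sampler_new} (proved via the greedy water-filling procedure $\chalg$, Algorithm~\ref{alg:convex-hull}) shows one can choose $\{\lambda_T\}$ with $\sum_T\lambda_T y^T\preceq w$ and $\left|\sum_T\lambda_T y^T\right|\ge\left(1-\frac1e\right)|w|$; the $1-\frac1e$ arises from the bound $1-\prod_i(1-\lambda_{Q_{i-1}})\ge 1-(1-1/s)^s$ via AM--GM over the greedy steps, not from per-item retention probabilities. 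The coordinatewise domination, combined with the price floor (which gives $y^T_j\ge p_j\alloc_j(\D_{|T},q^T)$), yields $\alloc_j(\D_{|S},\tilde q)\le \alloc_j(\D,p)$ and hence, in expectation over $p$, condition (a); the $\ell_1$ bound yields condition (a$'$) with factor $\frac{e}{e-1}\alpha$. Without this lemma (or an equivalent substitute arguing about the aggregate revenue mass of the convex combination), your outline does not close, and it cannot be repaired along the per-item-survival route you describe.
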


It remains to argue that good revenue recovery schemes exist for valuation functions of interest. As a warm-up, we observe that gross substitutes valuations admit a $1$-RRS. Indeed, consider the RRS that sets $q\defeq p$. Clearly, the first scaling condition holds for $\alpha = 1$. Furthermore, we can interpret the offering of prices $p$ over a subset $S$ of items as being equivalent to offering the prices $q_j=p_j$ for $j\in S$ and $q_j=\infty$ for $j\not\in S$. Then, by the definition of gross substitutes valuations, a buyer with value $v$ that purchases some item $j\in S$ under $p$ continues to purchase $j$ under $q$, and the condition (b) follows. This is summarized in the following proposition. Combined with Lemma~\ref{t:ocrs_argument} this immediately implies a constant approximation to the ex ante revenue for gross substitute valuations. We show how to improve the approximation to $2$ and prove Theorem~\ref{thm:gross-subs} in Appendix~\ref{sec:gs-improvement-main}.

\begin{proposition} \label{prop:GS-RRS-result}
    Gross substitutes valuations admit a $1$-RRS and an $\frac{e}{e-1}$-OCRS.
\end{proposition}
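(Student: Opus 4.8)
\textbf{Proof proposal for Proposition~\ref{prop:GS-RRS-result}.}

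The plan is to verify that the trivial scheme $q \defeq p$ satisfies both conditions of an $\alpha$-RRS with $\alpha = 1$ whenever the buyer's valuation is gross substitutes, and then to invoke Lemma~\ref{t:rrs_argument} to deduce the $\frac{e}{e-1}$-OCRS. Condition (a) of Definition~\ref{def:RRS} is immediate: setting $q = p$ gives $q_j = p_j \ge \frac{1}{1}\cdot p_j$ for every $j \in S$, so the scaling bound holds with $\alpha = 1$ (trivially also for any $\alpha \ge 1$, but we want the optimal constant here).

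The substance is condition (b): we must show $\rev(\D_{|S}, p) \ge \sum_{j \in S} p_j \alloc_j(\D, p)$, i.e. that restricting the available items to $S$ (equivalently, raising the price of every item outside $S$ to $\infty$) does not decrease the revenue collected from items inside $S$. First I would fix a value function $v$ in the support of $\D$ and reduce to a pointwise statement: it suffices to prove that for each such $v$, the set $\alloc(v, p_{|S})$ purchased under the restricted pricing satisfies $\sum_{j\in\alloc(v,p_{|S})} p_j \ge \sum_{j\in\alloc(v,p)\cap S} p_j$; taking expectations over $v\sim\D$ then yields (b). Here $p_{|S}$ denotes the pricing that agrees with $p$ on $S$ and charges $\infty$ outside $S$, so that a buyer facing $p_{|S}$ behaves exactly as a buyer facing $p$ over the restricted item set $S$. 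The key tool is the defining property of gross substitutes: if the price of some items is (weakly) increased, then every item that was demanded at the old prices and whose own price did not increase remains demanded. Going from $p$ to $p_{|S}$ raises precisely the prices of items in $[m]\setminus S$ (from $p_j$ to $\infty$) and leaves prices on $S$ unchanged; hence every item in $\alloc(v,p)\cap S$ is still purchased under $p_{|S}$, giving $\alloc(v,p)\cap S \subseteq \alloc(v,p_{|S})$. Since $\alloc(v,p_{|S})\subseteq S$ and prices are nonnegative, $\sum_{j\in\alloc(v,p_{|S})} p_j \ge \sum_{j\in\alloc(v,p)\cap S} p_j$, which is the pointwise inequality we wanted. (A mild technical point I would address: if the buyer's demand correspondence is multi-valued, gross substitutes guarantees that \emph{some} optimal bundle under $p_{|S}$ contains $\alloc(v,p)\cap S$; we can take $\alloc$ to break ties consistently with this, or note that the revenue bound only needs one such witnessing bundle. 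Similarly, the comparison of $\alloc_j(\D,p)$ when the same pricing $p$ is offered over $[m]$ versus the quantity appearing in (b)'s right-hand side is exactly the restriction to coordinates in $S$, so no discrepancy arises.)

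The main obstacle—though it is minor—is being careful about the exact form of the gross substitutes axiom and the tie-breaking in the demand oracle, since a careless statement could fail when the demanded bundle is not unique; the substitutes property must be applied in the direction ``prices go up, survivors stay,'' which is precisely what we have when deleting items from availability. Once condition (b) is established, the proposition's first assertion (the $1$-RRS) is complete, and the second assertion follows directly: by Lemma~\ref{t:rrs_argument}, an $\alpha$-RRS yields an $\frac{e}{e-1}\alpha$-OCRS for revenue maximization under $\itempricing$, so with $\alpha = 1$ gross substitutes valuations admit an $\frac{e}{e-1}$-OCRS.
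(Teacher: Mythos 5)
Your proposal is correct and takes essentially the same approach as the paper: set $q\defeq p$, interpret restriction to $S$ as raising prices outside $S$ to $\infty$, use the gross-substitutes property (as stated in the paper, already phrased in terms of $\alloc$ so tie-breaking is handled by definition) to conclude items in $\alloc(v,p)\cap S$ remain demanded, and then invoke Lemma~\ref{t:rrs_argument} for the $\frac{e}{e-1}$-OCRS. Your extra care about multi-valued demand and tie-breaking is a reasonable elaboration of a point the paper's terse proof leaves implicit.
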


\noindent

Our main result for revenue recovery schemes applies to the much more general class of subadditive valuations and is stated through the following lemma that we prove in Section~\ref{sec:rrs-subadd}.
\begin{lemma} \label{thm:subadd-RRS}
    The family of subadditive valuations admits an $\bigoh(\log m + \log \Gamma)$-RRS, where $\Gamma \defeq p_{max}/p_{min}$ is the aspect ratio of the item pricing $p$ to be revenue-recovered.
\end{lemma}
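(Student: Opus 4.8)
The plan is to construct the recovered pricing $q$ by taking the item pricing $p$ and scaling up the prices of the items in $S$ by a suitable factor $\lambda$, setting $q_j = \lambda \cdot p_j$ for $j \in S$ and $q_j = \infty$ for $j \notin S$. Condition (a) of Definition~\ref{def:RRS} is then immediate as long as $\lambda \ge 1/\alpha$ (in fact $\lambda \ge 1$), so the entire difficulty is in establishing the revenue condition (b): that with this scaled pricing, a buyer drawn from $\D_{|S}$ pays in expectation at least a $1/\bigoh(\log m + \log\Gamma)$ fraction of $\sum_{j\in S}p_j\,\alloc_j(\D,p)$, the revenue that the original pricing $p$ collects from items in $S$ (against the buyer who sees all of $[m]$).

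The key conceptual step, following \textcite{chawla2022buy}, is to move from revenue to utility. First I would argue that the buyer's utility under $p$ restricted to $S$ is large: letting $T$ be the set the buyer buys under $p$, subadditivity of $v$ gives $v(T\cap S) \ge v(T) - v(T\setminus S)$, and $v(T\setminus S)$ can be related to what the buyer pays for $T\setminus S$ (that's revenue we are \emph{not} trying to recover) plus the utility the buyer would forgo; more carefully, one shows that $\util(v_{|S}, p_{|S}) + (\text{payment for } T\cap S) \ge v(T\cap S) \ge v(T) - v(T\setminus S)$, and since $\util(v,p) = v(T) - (\text{total payment})$, rearranging yields that $\util(v_{|S},p_{|S})$ plus the payment on $S$ is at least the payment $p$ collects on $S$ — i.e. $\xpectover{v}{\util(v_{|S},p_{|S})} + \sum_{j\in S} p_j \alloc_j(\D,p) \ge \sum_{j\in S} p_j \alloc_j(\D,p)$ is trivial, so the real content is that the \emph{utility} alone, $\xpectover{v}{\util(v_{|S},p_{|S})}$, is within a constant factor of $\sum_{j\in S}p_j\alloc_j(\D,p)$ OR the recovered revenue is already large; splitting on which of the two terms dominates. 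Then, given that the buyer has utility $U \defeq \xpectover{v}{\util(v_{|S},p_{|S})}$ that is $\Omega(1)$ times the target, I would use the price-scaling argument of \citeauthor{chawla2022buy}: for each power-of-two scale $2^k$ of utility, scaling the prices on $S$ by a factor proportional to $2^k/\max_j p_j$ (or more precisely choosing $\lambda$ from a geometric grid) captures, from buyers whose utility-for-$S$ lies in $[2^k, 2^{k+1})$, a revenue of $\Omega(2^k \cdot \pr[\text{that event}])$; summing over the $\bigoh(\log(\cdot))$ relevant scales and taking the best single $\lambda$ loses only the number of scales. The number of distinct relevant scales is where the $\log m + \log\Gamma$ shows up: the per-item prices in $p_{|S}$ range over a factor $\Gamma$, and the buyer buys at most $m$ items, so the total payment (and hence the relevant utility) ranges over a factor of at most $m\Gamma$, i.e. $\bigoh(\log(m\Gamma)) = \bigoh(\log m + \log\Gamma)$ scales.

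I expect the main obstacle to be the first step — cleanly converting the revenue lower bound $\sum_{j\in S}p_j\alloc_j(\D,p)$ into a utility lower bound for the restricted instance $\D_{|S}$. The subtlety is that the buyer's chosen set $T$ under the \emph{full} pricing $p$ is not the set they would choose under $p_{|S}$, so one cannot directly say the buyer's utility for $S$ is high; one has to argue via subadditivity that $v_{|S}$ on the set $T\cap S$ already yields enough surplus over its price, and handle the case where the buyer pays a lot on $S$ directly (then condition (b) is satisfied by $q \approx p$ itself) versus the case where the buyer extracts most of the value of $T\cap S$ as surplus (then the scaling argument applies). Carefully setting up this dichotomy, and verifying that in the surplus-dominated case the surplus is comparable to the target revenue quantity rather than merely to something weaker, is the delicate part; the price-scaling / geometric-grid argument itself is by now standard and I would cite and adapt it from \citeauthor{chawla2022buy} rather than redo it in full.
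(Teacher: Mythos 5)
Your high-level plan -- scale the prices on $S$, convert revenue to utility via Lemma~3.1 of \textcite{chawla2022buy}, and use subadditivity to relate the restricted buyer's utility to the target quantity $\sum_{j\in S}p_j\alloc_j(\D,p)$ -- is the paper's approach, and you correctly identify that the naive subadditivity chain degenerates: at scaling $\lambda = 1$ the inequality
$\util(v_{|S},p_{|S}) + p(T\cap S) \ge p(T\cap S)$
is vacuous. However, the fix you propose -- a dichotomy on whether the surplus $\util(v_{|S},p_{|S})$ dominates or the payment $p(T\cap S)$ dominates -- has a genuine gap in the second branch. You claim that if the (unrestricted) buyer pays a lot on $T\cap S$ then ``condition (b) is satisfied by $q\approx p$ itself,'' but this does not follow: the \emph{restricted} buyer facing $p_{|S}$ will in general buy a different set than $T\cap S$ and may pay arbitrarily little, which is precisely the subtlety you flag at the end of your writeup. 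High payment by the unrestricted buyer on $S$ gives no lower bound on $\rev(\D_{|S},p)$.

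The device you are missing is to scale \emph{down} slightly, not up: the RRS definition only requires $q_j \ge p_j/\alpha$, so one can afford $\lambda=1/2$. This recovers a non-degenerate constant: with $A^* = \alloc(v,p)$,
\[
\util(v_{|S}, p/2) \ge v(A^*\cap S) - \tfrac12 p(A^*\cap S) \ge v(A^*) - v(A^*\setminus S) - \tfrac12 p(A^*\cap S) = \big(\util(v,p) - \util(v,p;A^*\setminus S)\big) + \tfrac12 p(A^*\cap S) \ge \tfrac12 p(A^*\cap S),
\]
where the last step uses that $A^*$ is utility-maximizing under $p$. This is an \emph{unconditional} lower bound on the low-end utility -- no dichotomy on utility-vs-payment is needed, and the issue you were worried about never arises. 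The paper then samples $\gamma$ from the log-uniform density on $[\,1/2,\;m\Gamma'\,]$ and applies the Lemma~3.1 bound $\operatorname{E}_\gamma[\rev(v_{|S},\gamma p)] \ge \frac{1}{\log(2m\Gamma')}\big(\util(v_{|S},p/2)-\util(v_{|S},m\Gamma' p)\big)$. The paper's actual dichotomy is different from yours: it splits on whether $\util(v_{|S},m\Gamma' p)$ is zero or positive; if positive, a rate argument (utility decreases at rate $\ge\min_{j\in S}p_j$) shows the drop is at least $\tfrac{m}{2}\max_{j\in S}p_j \ge \tfrac12\sum_{j\in S}p_j\alloc_j(v,p)$, and if zero, the subadditivity bound above closes the argument. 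Without the $1/2$ downscaling, neither your dichotomy nor any simple variant of it that I can see yields the claim, so this is the piece you would need to supply to make the proof go through.
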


Observe that Theorem~\ref{thm:maintheorem} does not immediately follow by combining \Cref{t:ocrs_argument}, \Cref{t:rrs_argument}, and \Cref{thm:subadd-RRS} because of the dependence on the aspect ratio $\Gamma$ in \Cref{thm:subadd-RRS}. This dependence is in fact intrinsic to the RRS definition, as there are XOS instances where it is impossible to obtain an $o(\sqrt{m})$-RRS---a more detailed discussion can be found in~\Cref{sec:rrs-lb}. We show how to remove this dependence for Theorem~\ref{thm:maintheorem} in~\Cref{sec:subadd-theorem} via an end-to-end argument. 
\section{An RRS for Subadditive Valuations (Proof of~\Cref{thm:subadd-RRS})}
\label{sec:rrs-subadd}

In this section, we design a \textit{Revenue Recovery Scheme} for \textit{subadditive} valuations, proving~\Cref{thm:subadd-RRS}. Fix any item pricing $p\in\Rplus^m$ and any set of items $S\subseteq [m]$. Let $\Gamma'\defeq (\max_{j\in S}p_j)/(\min_{j\in S}p_j)$, where the minimum excludes any $j$ such that $p_j=0$. Although the definition of an RRS calls for a deterministic pricing $q$, we first show the existence of a random pricing that satisfies the conditions of the scheme.

Our algorithm
samples a real $\gamma\in [\ell, h]$ from the distribution with density $f(\gamma) = 1/(\gamma\cdot\log\frac{h}{\ell})$ for $\ell = 1/2$ and $h=m\Gamma'$ and then returns a \textit{uniform} scaling of $p$ by $\gamma$, i.e. $q\defeq\gamma  p$. Clearly, any price is scaled down by at most $1/2$, so the scaling RRS condition holds. In order to prove~\Cref{thm:subadd-RRS}, we will now show that for any subadditive buyer $v$,

\begin{equation}\label{eq:subadd-rrs-to-show}
    \rev(v_{|S}, q) \geq \frac{1}{2\log (2m\Gamma')}\cdot \sum_{j\in S}p_j \alloc_j(v,p).
\end{equation}

As a first step towards proving this inequality, we employ Lemma 3.1 from \cite{chawla2022buy} that connects the revenue from any buyer with the buyer's utility under a uniform scaling of an arbitrary pricing. 

\begin{fact}[Lemma 3.1 of \cite{chawla2022buy}]
For any monotone valuation function $v$, any pricing $p$ and any $0<\ell\leq h$, if $\gamma \in [\ell, h]$ is drawn from the distribution with density $f(\gamma) = 1/(\gamma\cdot\log\frac{h}{\ell})$, then
\[\operatorname{E}_{\gamma}[\rev(v, \gamma p)] \geq \frac{1}{\log (h/l)}\cdot \bigg( \util(v,\ell p) - \util(v,hp)\bigg).\]
\end{fact}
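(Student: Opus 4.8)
The plan is to prove the stronger claim that the asserted inequality in fact holds with equality (up to an adversarial, measure-zero choice of tie-breaking), by identifying the buyer's revenue under a uniform scaling $\gamma p$ with (essentially) minus the derivative of the buyer's utility with respect to $\gamma$, and then integrating against the given density.

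First I would set $u(\gamma) \defeq \util(v,\gamma p) = \max_{T\subseteq [m]}\big(v(T)-\gamma\sum_{j\in T}p_j\big)$. For each fixed $T$ the bracketed quantity is affine in $\gamma$ with slope $-\sum_{j\in T}p_j\le 0$, and there are only $2^m$ choices of $T$, so $u$ is a maximum of finitely many affine functions: it is convex, non-increasing, and piecewise linear with finitely many breakpoints $\ell=\gamma_0<\gamma_1<\cdots<\gamma_K=h$ (we may harmlessly include $\ell,h$ among them). On the interior of each piece $(\gamma_k,\gamma_{k+1})$ the utility-maximizing set is a fixed set $T_k$ — any two sets that remain simultaneously optimal throughout a subinterval must share the same value and the same total price, so $P_k\defeq \sum_{j\in T_k}p_j$ is well defined — and hence $u(\gamma)=v(T_k)-\gamma P_k$ there, which yields $u(\gamma_k)-u(\gamma_{k+1})=(\gamma_{k+1}-\gamma_k)P_k$.

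Next I would observe that on the interior of piece $k$ the buyer facing prices $\gamma p$ purchases $T_k$ and therefore pays $\rev(v,\gamma p)=\gamma P_k$; integrating against $f(\gamma)=1/(\gamma\log(h/\ell))$ over this piece, the $\gamma$ factor cancels, giving
\[
\int_{\gamma_k}^{\gamma_{k+1}} \rev(v,\gamma p)\,f(\gamma)\,d\gamma \;=\; \frac{P_k}{\log(h/\ell)}\int_{\gamma_k}^{\gamma_{k+1}}d\gamma \;=\; \frac{(\gamma_{k+1}-\gamma_k)P_k}{\log(h/\ell)} \;=\; \frac{u(\gamma_k)-u(\gamma_{k+1})}{\log(h/\ell)}.
\]
Since the $K$ breakpoints form a set of Lebesgue measure zero and the scaling distribution is atomless, whatever (possibly adversarially chosen) tied set the buyer selects exactly at a breakpoint does not affect the integral, so $\operatorname{E}_{\gamma}[\rev(v,\gamma p)]$ is the sum over $k=0,\dots,K-1$ of the left-hand side above. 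This telescopes to $\frac{u(\gamma_0)-u(\gamma_K)}{\log(h/\ell)}=\frac{\util(v,\ell p)-\util(v,h p)}{\log(h/\ell)}$, which is exactly the Fact (with equality, hence in particular ``$\ge$'').

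I expect the only delicate points to be bookkeeping rather than substance: (i) confirming that $u$ has finitely many linear pieces and that the purchased set's total price $P_k$ is constant on the interior of each piece; (ii) being explicit that tie-breaking at the finitely many breakpoints is irrelevant because those points are null for the atomless density $f$; and (iii) the sign and monotonicity conventions, i.e. using $v(\emptyset)=0$ and $p\succeq 0$ so that $u\ge 0$ and $u$ is non-increasing in $\gamma$, which makes the right-hand side nonnegative and consistent with it being a revenue (this is where the monotonicity hypothesis on $v$ is used). An alternative, slightly slicker phrasing avoids the explicit piecewise decomposition: any set $T$ the buyer buys at scaling $\gamma$ witnesses that $-\sum_{j\in T}p_j$ is a subgradient of the convex function $u$ at $\gamma$, whence $\rev(v,\gamma p)=\gamma\sum_{j\in T}p_j\ge -\gamma\,u'_{+}(\gamma)$, and then $\int_\ell^h \big(-u'_{+}(\gamma)\big)\,\frac{1}{\log(h/\ell)}\,d\gamma=\frac{u(\ell)-u(h)}{\log(h/\ell)}$ by the fundamental theorem of calculus for convex functions; I would keep whichever version reads more cleanly in the final writeup.
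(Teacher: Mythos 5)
Your proof is correct, and it is essentially the argument behind the cited result: the paper itself does not prove this Fact but imports it as Lemma 3.1 of \cite{chawla2022buy}, whose proof likewise views $\util(v,\gamma p)$ as a convex, non-increasing function of the scaling $\gamma$, identifies the revenue at scale $\gamma$ with $\gamma$ times (minus) its slope, and integrates against the density $1/(\gamma\log(h/\ell))$ so that the $\gamma$ factors cancel and the utility differences telescope. Your handling of ties is also sound: two optimal sets with different total prices can only tie at a breakpoint of the upper envelope (otherwise one line would rise above the envelope inside a linear piece), and the finitely many breakpoints are null for the atomless scaling distribution, so the tie-breaking rule is irrelevant. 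One caveat on scope: your main argument, with its $2^m$ affine pieces, covers item pricings, which is all this paper needs (the Fact is invoked in Section 3 only for $q=\gamma p$ with $p$ an item pricing); the Fact as stated, and the original lemma, allow general price menus over lotteries, possibly with infinitely many entries, where the finite-breakpoint decomposition is unavailable --- there your ``slicker'' subgradient variant (the purchased price is at least $-u'_{+}(\gamma)$, plus absolute continuity of the convex function $u$ on $[\ell,h]$) is the right formulation, so if you want the statement in full generality you should promote that variant from an aside to the actual proof. Also, a small nitpick: $\util(v,\gamma p)\ge 0$ follows from the availability of the empty set at price zero rather than from monotonicity of $v$; monotonicity is not really load-bearing in this argument.
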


\noindent
By applying this lemma to $v_{|S}$ and $q$, we immediately obtain that
\begin{equation}\label{eq:rev-utility-lemma-application}
\rev(v_{|S}, q) \geq \frac{1}{\log (2m\Gamma')}\cdot \bigg( \util(v_{|S}, p/2) - \util(v_{|S},m\Gamma' p)\bigg).
\end{equation}

We now proceed to bound the difference in the utility terms. Observe that as a function of the scaling $\gamma$,
$\util(v_{|S}, \gamma p)\defeq \max_{T\subseteq S} ( v(T) - \gamma p(T))$
is non-negative and non-increasing. Furthermore, as long as it remains strictly positive, it decreases at a rate of $p(T)$ for some $T\subseteq S$, which is at least $\min_{j\in S}p_j$. Thus, assuming that $\util(v_{|S},h p) > 0$, we obtain that
\[ \util(v_{|S},\ell p) - \util(v_{|S},h p) \geq (h-\ell)\cdot \min_{j\in S}p_j \geq \frac{m}{2}\cdot \max_{j\in S}p_j \geq \frac{1}{2}\cdot\sum_{j\in S} p_j \alloc_j(v_{|S}, p)\]
under our definition of $\ell$ and $h$.

It remains to bound the utility drop if $\util(v_{|S},h p) = 0$; this is precisely where the subadditivity of values is required. Let $A^*\defeq\alloc(v,p)$. Then, by definition we have that
\begin{align*}
    \util(v_{|S}, \ell p) \defeq \max_{T\subseteq S} \bigg( v(T) - \frac{1}{2}p(T)\bigg)
    &\geq v(A^*\cap S) - \frac{1}{2}p(A^*\cap S) \\
    &\geq v(A^*) - v(A^*\setminus S) - \frac{1}{2}p(A^*\cap S) \\
    &= (v(A^*)-p(A^*)) - (v(A^*\setminus S) - p(A^*\setminus S)) + \frac{1}{2}p(A^*\cap S) \\
    &\geq \frac{1}{2}p(A^*\cap S) = \frac{1}{2}\cdot \sum_{j\in S}p_j\cdot \alloc_j(v,p),
\end{align*}
where the second line uses the subadditivity of $v$, the third line uses the additivity of prices and the last one uses the fact that $A^*$ is the utility maximizing set of items for $v$ under pricing $p$.

Thus, we have shown that in either case
\begin{equation}\label{eq:utility-drop-bound}
    \util(v_{|S},\ell p) - \util(v_{|S},h p) \geq  \frac{1}{2}\cdot\sum_{j\in S} p_j \alloc_j(v_{|S}, p),
\end{equation}
and combining~\eqref{eq:utility-drop-bound} with~\eqref{eq:rev-utility-lemma-application} we obtain~\eqref{eq:subadd-rrs-to-show} as desired. Finally, by taking the expectation of~\eqref{eq:subadd-rrs-to-show} over any distribution $\D$, we obtain
\begin{equation*}
    \operatorname{E}_q\big[ \rev(\D_{|S}, q) \big]\geq \frac{1}{2\log (2m\Gamma')}\cdot \sum_{j\in S}p_j \alloc_j(\D,p).
\end{equation*}
and thus there must exist some \textit{deterministic} pricing $q$ (i.e. some \textit{deterministic} scaling $\gamma$) for which the revenue condition holds. The proof is then completed by observing that $\Gamma'\le \Gamma$.

\section{Revenue Recovery Implies OCRS (Proof of~\Cref{t:rrs_argument})}
\label{sec:rrs_to_ocrs}

In this section we prove~\Cref{t:rrs_argument}, stating that any family of valuations that admits a $\alpha$-RRS will also admit an $(e/(e-1))\alpha$-OCRS for revenue maximization under $\itempricing$. Let $(\D, S, x)$ be the given instance of the OCRS. Let $p$ denote the random pricing that achieves the revenue bound $\rev_x(\D,\itempricing)$. Recall that to design an OCRS, our goal is to find a pricing over $S$ that will recover a large fraction of $\rev_x(\D,\itempricing)$ while respecting the ex ante constraint $x$. 

To develop some intuition for this result, let us suppose that $S$ and $p$ are deterministic, and that the pricing $p$ is uniform -- say $p_j=1$ for all $j$. Applying the RRS to $(p, S)$ provides us with a pricing $q$ with $q_j\ge 1/\alpha$ for all $j$ that achieves good revenue. Since the prices $q_j$ are comparable to the prices $p_j$, and the revenue of $q$ is no more than the revenue of $p$, we may conclude that the total number of items allocated by $q$ is not much larger than the total number allocated by $p$. In other words, the allocation constraint is satisfied \textit{in aggregate}. In fact, we can achieve the same property over any {\em subset} $T$ of $S$ by applying the RRS to $p$ over $T$ to obtain pricing $q_T$. The aggregate allocation of each $q_T$ is comparable to the allocation of $p$ over the set $T$. We want to argue that there is a distribution over the $q_T$'s, corresponding to an expected allocation vector that is small in each coordinate. At the same time, we want the aggregate expected allocation of the random pricing to be comparable to that of $p$, so as to get back a revenue guarantee. 

The key technical ingredient in our proof that enables the above properties is a novel algorithmic process called $\chalg$ (Algorithm~\ref{alg:convex-hull}). This process takes as input a target $k$-dimensional vector $w\in\Rplus^k$ and a collection of $2^k$ vectors $y^T\in\Rplus^k$ corresponding to subsets $T\subseteq [k]$, with $y^T_j=0$ for all $T$ and $j\not\in T$. The goal is to find a vector in the convex hull of the $\{y^T\}$'s that is component-wise dominated by $w$ but at the same time comparable to $w$ in length.
We show that if each $y^T$ is long enough, i.e. $|y^T|\ge \sum_{j\in T} w_j$, then we can always find such a vector $z$ with $|z|/|w|\ge 1-1/e$.\footnote{Here $|z|\defeq\sum_{j\le k} z_j$ denotes the $\ell_1$ length of the vector $z$.} 

We summarize this property of the convex hull sampler in the lemma below. 

\begin{lemma}\label{l:convex_hull_sampler_new}
Let $w\in\Rplus^k$ and $y^T\in\Rplus^k$ for all $T\subseteq [k]$ satisfy (i) $y^T_j=0$ for all $T$ and $j\not\in T$, and, (ii) $|y^T|\ge \sum_{j\in T} w_j$. Then, $\chalg(k, w, \{y^T\}_{T\subseteq [k]})$ returns a distribution $\{\lambda_T\}_{T\subseteq [k]}$ such that:
\begin{enumerate}[(a)]
    \item $\sum_T\lambda_T y^T\preceq w$, and,
    \item$|\sum_T\lambda_T y^T|\ge \frac{e-1}{e}\cdot |w|$.
\end{enumerate}
\end{lemma}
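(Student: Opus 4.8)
The plan is to realise $\chalg$ as a greedy, chain-based procedure and then run the whole argument through a single scalar potential. I would initialise a ``residual'' $R^{(0)} = w$, an active coordinate set $A_1 = [k]$, and a total weight budget of $1$. At step $i$ I look at the set $A_i$ and the vector $y^{A_i}$, form the tightness ratio $\mu_i \defeq \min\{ R^{(i-1)}_j / y^{A_i}_j : j\in A_i,\ y^{A_i}_j>0\}$ (with $\mu_i=+\infty$ if there is no such $j$), and assign weight $\lambda_i \defeq \min\{\mu_i,\rho_i\}$ to $A_i$, where $\rho_i = 1-\sum_{t<i}\lambda_t$ is the leftover budget; then I update $R^{(i)} = R^{(i-1)} - \lambda_i y^{A_i}$. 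If $\lambda_i=\mu_i<\rho_i$, I delete one coordinate $j_i$ attaining the minimum (so $R^{(i)}_{j_i}=0$), set $A_{i+1} = A_i\setminus\{j_i\}$, and continue; otherwise the budget is used up, I stop, and I dump any leftover weight on $T=\emptyset$ (harmless, since $y^\emptyset = 0$ satisfies the two hypotheses vacuously). The output distribution puts mass $\lambda_i$ on $A_i$ at each step and $0$ elsewhere; since $A_1\supsetneq A_2\supsetneq\cdots$ is a chain, no set receives mass twice.

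For part (a), I would check by induction that $0\preceq R^{(i)}\preceq w$ throughout: non-negativity because $\lambda_i\le\mu_i$ caps the subtraction in every active coordinate while inactive coordinates are untouched (there $y^{A_i}_j=0$), and the upper bound because we only ever subtract non-negative vectors. Then $\sum_T\lambda_T y^T = w - R^{(\mathrm{final})}\preceq w$, which is (a), and the weights sum to $1$ (after the $\emptyset$ slack), so this is a distribution. A useful by-product is that the $\ell_1$-length of the output is exactly $|w|-|R^{(\mathrm{final})}|$, so part (b) reduces to the bound $|R^{(\mathrm{final})}|\le |w|/e$.

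For part (b), I would track the potential $h_i \defeq \sum_{j\in A_i}R^{(i-1)}_j$, noting $h_1 = \sum_{j\in[k]}w_j = |w|$ and, since $R^{(i-1)}\preceq w$, that $h_i \le \sum_{j\in A_i}w_j \le |y^{A_i}|$ by hypothesis (ii). The heart of the argument is the one-step estimate $h_{i+1}\le(1-\lambda_i)h_i$. On the terminal budget-exhausted step no coordinate leaves, so $h_{i+1} = h_i - \lambda_i|y^{A_i}| \le (1-\lambda_i)h_i$. On a deletion step, expanding $h_{i+1} = \sum_{j\in A_i\setminus\{j_i\}}(R^{(i-1)}_j - \lambda_i y^{A_i}_j)$ and using the exact cancellation $R^{(i-1)}_{j_i} = \lambda_i y^{A_i}_{j_i}$ (which is precisely the definition of $j_i$ together with $\lambda_i=\mu_i$) again gives $h_{i+1} = h_i - \lambda_i|y^{A_i}| \le (1-\lambda_i)h_i$. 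Telescoping and applying $1-x\le e^{-x}$ yields $h_{\mathrm{final}}\le\big(\prod_i(1-\lambda_i)\big)|w| \le e^{-\sum_i\lambda_i}|w| = e^{-1}|w|$ in the budget-exhausted case ($\sum_i\lambda_i=1$); in the only other case the active set empties, which forces $R^{(\mathrm{final})}=0$ and hence length exactly $|w|$. Since deleted coordinates carry zero residual, $|R^{(\mathrm{final})}| = h_{\mathrm{final}} \le |w|/e$, and therefore $|\sum_T\lambda_T y^T| = |w| - |R^{(\mathrm{final})}| \ge (1-1/e)|w|$.

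The step I expect to be the main obstacle is the bookkeeping around deletions: pinning down the cancellation $R^{(i-1)}_{j_i}=\lambda_i y^{A_i}_{j_i}$, confirming that a deleted coordinate's residual stays zero forever (so it never re-enters $h_i$ or $|R^{(\mathrm{final})}|$), and cleanly handling the degenerate cases — several coordinates tightening simultaneously (delete them one at a time with intermediate $\lambda_i=0$ steps, which $1-x\le e^{-x}$ absorbs at no cost) and the active set emptying before the budget is spent.
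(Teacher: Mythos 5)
Your proposal is correct and follows essentially the same argument as the paper: the same greedy residual/budget process, the same key use of hypothesis (ii) together with $R^{(i-1)}\preceq w$ to get $|y^{A_i}|\ge h_i$, the same two-case analysis (budget exhausted with total weight $1$ versus the active set emptying so the residual vanishes), and the same conclusion that the covered mass is $|w|-|R^{(\mathrm{final})}|\ge(1-1/e)|w|$. The only cosmetic differences are that you delete tight coordinates one at a time with zero-weight steps (the paper's $\chalg$ drops all of them at once) and that you close with $1-x\le e^{-x}$ where the paper uses AM-GM to reach the equivalent $1-(1-1/s)^s$ bound.
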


\begin{algorithm}[h!]
\caption{Algorithm $\chalg(k,w,\{y^T\}_{T\subseteq [k]})$}
\label{alg:convex-hull}
\begin{algorithmic}[1]
    \State Initialize variables $\what\gets w$, $Q\gets [k]$, $\sigma\gets 0$ and $\lambda_T\gets 0$ for all $T\subseteq [k]$.
    \While{$Q \neq \emptyset$ and $\sigma < 1$}        
        \State Let $\tau$ be the largest number for which $\tau\cdot y^{Q}\preceq \what$, i.e. $\tau\gets \min_{j\in Q} (\what_j / y^Q_j)$.
        \State Update $\lambda_Q \gets \min\{\tau, 1 - \sigma\}$, $\sigma \gets \sigma + \lambda_Q$ and $\what \gets \what - \lambda_Q\cdot  y^Q$.
        \smallskip
        \State Update $Q \gets \{i\in [k] : \what_i > 0\}$.
    \EndWhile
    \State Set $\lambda_\emptyset \gets 1 - \sigma$.
    \State Return the distribution $\{\lambda_T\}_{T\subseteq [k]}$.
\end{algorithmic}
\end{algorithm}

Before we prove the lemma, let us see how it leads to a proof of \Cref{t:rrs_argument}. Our preceding discussion suggests that we should apply the Convex Hull Sampler to the allocation vectors of the pricings $p$ and $\{q_T\}$. However, since the prices for different items can differ considerably, and in order to relate the total revenue of the pricings to the $\ell_1$ lengths of the corresponding vectors, we apply the procedure to the revenue contribution of each item.

\begin{proof}[Proof of~\Cref{t:rrs_argument}]
    We instantiate a pricing $p$ from the distribution that generates $\rev_x(\D,\itempricing)$, and instantiate a set $S$ from its distribution. For $T\subseteq S$ let $q^T$ be the deterministic pricing returned by the $\alpha$-RRS on input $(\D,T,p)$.

    Define $w_j\defeq p_j \alloc_j(\D,p)$ for $j\in S$, and $y^T_j\defeq \alpha\cdot q^T_j\alloc_j(\D_{|T}, q^T)$ for each $T\subseteq S$ and $j\in S$. We first observe that the vectors $w$ and $y^T$ satisfy the assumptions in \Cref{l:convex_hull_sampler_new}. Assumption (i) holds by construction, and (ii) follows by noting:
    \[\sum_{j\in T} y^T_j = \sum_{j\in T} \alpha\cdot  q^T_j\alloc_j(\D_{|T}, q^T) = \alpha\cdot\rev(\D_{|T},q^T)\ge \sum_{j\in T} p_j\alloc_j(\D,p) = \sum_{j\in T} w_j.\]
    Here the inequality follows from property (b) of the $\alpha$-RRS (Definition~\ref{def:RRS}).

    Now, let $\lambda \defeq \{\lambda_T\}_{T\subseteq [k]}$ be the distribution returned by $\chalg$ on $(S, w, \{y^T\}_{T\subseteq S})$. Consider the following random pricing: we pick a random set $T$ from the distribution $\lambda$ and return the random pricing $\tilde{q}\defeq q^T$.

    Let us first analyze the expected allocation of this random pricing. We first note that for all $T$, we have $y^T_j=\alpha\cdot q^T_j\alloc_j(\D_{|T}, q^T)\ge p_j\cdot \alloc_j(\D_{|T}, q^T)$ from property (a) of the $\alpha$-RRS (Definition~\ref{def:RRS}). The expected allocation of the random pricing $\tilde{q}$ is then given by:
    \begin{align*}
    \alloc_j(\D_{|S}, \tilde{q}) = \sum_{T\subseteq S}\lambda_T \cdot \alloc_j(\D_{|T},q^T) \leq  \sum_{T\subseteq S}\lambda_T \cdot \frac{y^T_j}{p_j} \leq \frac{w_j}{p_j} \leq \alloc_j(\D, p),
    \end{align*}
    where the second to last inequality follows from property (a) in \Cref{l:convex_hull_sampler_new} and the last inequality follows from the definition of $w$.

    Next let us consider the revenue obtained by $\tilde{q}$. By the definition of the $y^T$'s, we have:
\begin{align*}
    \rev(\D_{|S}, \tilde{q})  = \frac{1}{\alpha}\cdot \sum_{T\subseteq S}\lambda_T |y^T| \geq \frac{1}{\alpha}\cdot \left(1-\frac{1}{e}\right) \cdot |w| = \frac{1}{\alpha}\cdot \left(1-\frac{1}{e}\right) \cdot \sum_{j\in S}p_j \alloc_j(\D,p).
\end{align*}
The lemma then follows by taking expectations for both the allocation and the revenue bounds over the randomness of $S$ and $p$.
\end{proof}

We conclude this section by presenting the proof of \Cref{l:convex_hull_sampler_new}.

\begin{proof}[Proof of \Cref{l:convex_hull_sampler_new}]
    We first observe by construction that $\lambda\defeq \{\lambda_T\}_{T\subseteq [k]}$ forms a proper probability distribution. Also, it follows by construction that $\sum_T \lambda_T y^T\preceq w$. This is because we track the vector $w-\sum_T \lambda_T y^T$ in the form of vector $\what$ in the algorithm, and our choice of $\lambda_Q$ ensures that no coordinate of this vector ever becomes negative. It remains to prove property (b).

    Suppose the algorithm runs for $s$ steps. Let $\what_1, \what_2, \dots, \what_s$ and $Q_1, Q_2, \dots, Q_s$ be the vector $\what$ and set $Q$ after each step of the algorithm, and conventionally define that $\what_0 = w$ and $Q_0 = [k]$. We first observe that
\[\left|\sum_{T}\lambda_T\cdot y^T\right| = \left|\sum_{i=0}^{s-1}\lambda_{Q_i}\cdot y^{Q_{i}}\right| = \left|w-\what_s\right| = \sum_{i=1}^{s}\left|\what_{i-1} - \what_i\right|,\]
since $\lambda_T = 0$ for any (non-empty) $T$ not realized in some iteration, and $w = \what_0 \succ \what_1 \succ \ldots \succ \what_s \succ 0$. If $\sigma < 1$ after the algorithm terminates, we necessarily have $Q_s=\emptyset$, which means that $\what_s=0$ and thus $|\sum_T \lambda_T \cdot y^T| = |w|$, which is clearly enough for property (b) to hold. 

Thus, it remains to argue about the $\sigma=1$ case, under which we know that $\sum_{i=0}^{s-1}\lambda_{Q_{i}} = 1$. For notational convenience, define the constants 
\[\gamma_i = \frac{|\what_{i - 1} - \what_i|}{|\what_{i - 1}|},\;\; i\in [s],\]
and observe that since $w = \what_0 \succ \what_1 \succ \ldots \succ \what_s \succ 0$, we have that $|\what_i| = \prod_{j=1}^i(1-\gamma_j)\cdot |w|$ and thus:

\begin{align*}
    \left|\sum_{T}\lambda_T\cdot y^T\right| = \sum_{i=1}^s\gamma_i\cdot |\what_{i - 1}| = |w|\cdot \sum_{i=1}^s \bigg(\gamma_i\cdot \prod_{j=1}^{i-1}(1-\gamma_j)\bigg) = |w|\cdot \bigg(1-\prod_{i=1}^s (1-\gamma_i)\bigg),
\end{align*}
where the last equality follows from an expansion of the sum of products. Up next, we will argue that $\gamma_i \geq \lambda_{Q_{i-1}}$ or equivalently that $|y^{Q_{i-1}}| \geq |\what_{i-1}|$ as $|\what_{i-1}-\what_i| = \lambda_{Q_{i-1}}\cdot |y^{Q_{i-1}}|$. This can be easily shown to be true for our given definitions of these vectors, as 
\[|y^{Q_{i - 1}}|  \geq  \sum_{j \in Q_{i - 1}} w_j \ge \sum_{j \in Q_{i - 1}} \what_{i - 1,j} = |\what_{i - 1}|,\]
where the first inequality follows from the definition of $w$ and $y^Q$ as well as the scaling guarantee of the RRS, the second inequality holds from $w=\what_0\succeq \what_{i-1}$ and the final equality holds from the fact that all the coordinates of $\what_{i-1}$ not in $Q_{i-1}$ are $0$. Equipped with $\gamma_i \geq \lambda_{Q_{i-1}}$, we can finally show property (b) by observing that 
\begin{align*}
    \left|\sum_{T}\lambda_T\cdot y^T\right| &\geq |w|\cdot \bigg(1-\prod_{i=1}^s (1-\lambda_{Q_{i-1}})\bigg) \\
    &\geq 
    |w|\cdot \bigg(1-\bigg(\frac{\sum_{i=1}^s(1-\lambda_{Q_{i-1}})}{s}\bigg)^s\bigg) \\
    &= |w|\cdot \bigg(1-\bigg(1-\frac{1}{s}\bigg)^s\bigg). 
\end{align*}
Here, the second inequality is an application of the AM-GM inequality, and the equality follows from $\sum_{i=0}^{s-1}\lambda_{Q_{i}} = 1$.
\end{proof}
\section{OCRS Approximates Ex Ante Revenue (Proof of~\Cref{t:ocrs_argument})}
\label{sec:OCRStoEARev}

In this section, we prove~\Cref{t:ocrs_argument}, which is restated for convenience. We remark that the argument we employ is standard and is used, e.g., to prove a competitive ratio of $4$ for the single item prophet inequality in \cite{chawla2010multi}. We reproduce it here for completeness. An informed reader can safely skip this section.

\ocrstoapproxlemma*

\noindent
Consider some $\D=\D_1\times\cdots\times\D_n$ in $\F$ and recall that
\[\earev(\D,\C) = \sum_{i=1}^n\earev_{x^*_i}(\D_i,\C)\]
for some ex-ante constraint $\{x^*_i\}_{i=1}^n$ with $\sum_{i=1}^nx^*_{ij}\leq 1$ for all items $j\in [m]$. We will now describe a sequential $\C$ mechanism; call it $M$. Let $S_i$ denote the set of available (i.e. unsold) items when buyer $i$ arrives.

\begin{enumerate}
    \item Initially, $S_1 = [m]$.
    \item When buyer $i\in [n]$ arrives:
    \begin{enumerate}
        \item Present them with the pricing menu $q_i\in\C$ that is produced from the $\alpha$-OCRS on distribution $\D_i$, subset of items $S_i$, and allocation constraint $\frac{1}{2}x^*_i$.

        \item Let $B_i\subseteq S_i$ be the set of items that the buyer purchases under this pricing; update $S_{i+1}=S_i\setminus B_i$.
    \end{enumerate}
\end{enumerate}

Let $\rev(\D,M)$ denote the expected revenue of the above mechanism. Note that the mechanism clearly satisfies the ex post supply constraint, as only available items are ever sold. The lemma then follows from the following sequence of steps bounding its revenue.

\begin{enumerate}[(i)]
    \item We have \[\rev(\D,M) = \sum_{i=1}^n\operatorname{E}_{S_i}\big[\rev(\D_{i|S_i}, q_i)\big] \geq (1/\alpha)\cdot\sum_{i=1}^n \beta_i\cdot \rev_{\frac{1}{2}x^*_i}(\D_i,\C),\] where $\beta_i = \min_{j\in [m]}\pr[j\in S_i]$. Here the first equality is from the definition of $M$ and the second inequality follows from the revenue condition of an $\alpha$-OCRS.
    \item For each $i$ and $\D_i$, we have $\rev_{\frac{1}{2}x^*_i}(\D_i,\C)\ge \frac 12 \cdot\rev_{x^*_i}(\D_i,\C)$, because one way to satisfy the ex ante constraint $\frac{1}{2}x^*_i$ is to flip a coin and use the mechanism that satisfies the ex ante constraint $x^*_i$ with probability $1/2$ and set all prices to $\infty$ with probability $1/2$.
    \item For any item $j\in [m]$ and any buyer $i\in [n]$, it holds that $\pr[j\in S_i]\geq \frac{1}{2}$. That is, $\beta_i\ge \frac{1}{2}$ for all $i$.
\end{enumerate}

\noindent
It remains to prove the claim in (iii), for which the following sequence of inequalities suffices:
\[\Pr[j \in S_i] = 1 - \sum_{k < i} \Pr[j \in B_k \mid j \in S_k]\ge 1-\sum_{k<i} \frac{1}{2} x^*_{k, j} \ge \frac{1}{2}.\]

\noindent
Here, the first inequality follows from the allocation condition of the OCRS (instantiated on constraints $\frac{1}{2}x^*_i$) and the second follows from the ex ante constraint. This completes the proof of \Cref{t:ocrs_argument}.
\section{An $\bigoh(\log m)$-Approximation for Subadditive Valuations}
\label{sec:subadd-theorem}

Thus far we have shown through a Revenue Recovery Scheme construction that there exists a sequential item pricing mechanism that achieves an $\bigoh(\log m + \log p_{max}/p_{min})$ approximation with respect to $\earev(\D,\itempricing)$, whenever the valuations in $\D$ are subadditive. Naturally, this dependency on the aspect ratio of the ex ante prices is not desirable, as it can potentially become very large. In this section, we show how to remove this dependency, and prove~\Cref{thm:maintheorem}, which we re-state for convenience:

\maintheorem*
The key observation resulting in~\Cref{thm:maintheorem} is that the $\log(p_{max}/p_{min})$ factor we suffer in the design of an RRS is manifested through the set of items we attempt to sell to the buyer; in other words, if we only sell items whose ex ante optimal prices lie in some window of aspect ratio $\mathrm{poly}(m)$, then our approximation guarantee would indeed be $\bigoh(\log m)$. The problem is that by restricting what items we are allowed to sell, we will lose the potential revenue that could be collected from them. Unsurprisingly, the revenue lost from not selling \textit{very low price items} can be neglected, as it consists only a small portion of the total revenue that we wish to approximate. However, the same argument cannot be made for the revenue lost from not selling \textit{very high price items}, as they can potentially contribute an arbitrary large fraction of the total revenue. In fact, in Appendix~\ref{sec:rrs-lb} we show that this loss is necessary for some XOS value distributions -- no RRS can obtain more than 
a $\operatorname{poly}(m)$ fraction of the desired revenue. We circumvent this obstacle by designing a different sequential item pricing mechanism that captures the revenue contributed by the very high price items by exploiting the fact that these items are sold with polynomially small probabilities.

We will now make the above approach concrete. Consider any joint distribution $\D$ and for convenience, let $\obj =\earev(\D,\itempricing)$ denote the objective we wish to approximate. Let $p=(p_1,\dotsc , p_n)$ be the vector of (random) item pricings that achieve the optimal ex ante revenue $\obj$.
We will partition prices into ``small", ``medium", and ``large" sets as follows. Define
\[S \defeq \left[0, \frac{\obj}{m^2}\right),\;\;\;\;
M \defeq \left[\frac{\obj}{m^2}, 8m^2\obj\right],\;\;\;\; L \defeq \left(8m^2\obj,+\infty\right).\]
Note that $\obj = \objs + \objm + \objl$, where
\[\objs \defeq \expect_{v \sim \D, p}\left[ \sum_{i=1}^n\sum_{j=1}^m p_{ij} \alloc_j(v_i,p_i)\cdot\mathbbm{1}(p_{ij}\in S)\right],\]
is the fraction of $\obj$ recovered via items with optimal ex ante prices in $S$, and $\objm$ and $\objl$ are defined analogously for $M$ and $L$. We first note that $\objs$ is a negligible fraction of $\obj$. Indeed,
\begin{align*}
    \objs \leq \frac{\obj}{m^2}\cdot \expect_{v \sim \D, p}\bigg[\sum_{i,j}\alloc_j(v_i,p_i)  \bigg]  = \frac{\obj}{m^2}\cdot\sum_{j=1}^m\sum_{i=1}^n\alloc_j(\D_i, p_i)
    \leq \frac{\obj}{m^2}\cdot\sum_{j=1}^m 1
    \leq \frac{\obj}{m},
\end{align*}
with the second-to-last inequality following from the ex-ante condition. From this, observe that if we can obtain an $\bigoh(\log m)$-approximation to $\objm$ and $\objl$ via two separate sequential item pricing mechanisms, then by flipping a fair coin at the beginning to decide which mechanism to run, we will obtain a new sequential item pricing mechanism whose expected revenue also approximates $\obj$ up to a factor of $\bigoh(\log m)$; this is precisely the mechanism that achieves the guarantee of~\Cref{thm:maintheorem}.

As already mentioned, handling $\objm$ is easy using the machinery we established in the previous sections. When buyer $i$ arrives, instead of attempting to sell any available item in $S_i$, we will only attempt to sell items $j\in S_i\cap M(p_i)$ where $p_i\in\Rplus^m$ is the (sampled) ex ante price for the buyer and $M(p_i)$ denotes the set of ``medium" priced items, i.e. $M(p_i)\defeq \{j: p_{ij}\in M\}$. We can then apply the $\bigoh(\log m + \log p_{max}/p_{min})$-OCRS guaranteed by the combination of \Cref{t:rrs_argument} and \Cref{thm:subadd-RRS} to $S_i\cap M(p_i)$. Using the fact that the price ratio between any two prices in $M$ is at most $8m^4$, we will obtain a (random) pricing $q_i$ such that $\alloc(\D_{i|S_i\cap M(p_i)}, q_i) \preceq \alloc (\D_i,p_i)$ and $\sum_{j\in S_i} p_{ij} \alloc_j(\D_i,p_i)\leq \bigoh(\log m)\cdot \rev(\D_{i|S_i\cap M(p_i)}, q_i)$. By taking the corresponding expectation over the sampled $p_i$, we obtain a ${\bigoh(\log m)}$ approximation to $\objm$, as desired. 

Finally, we are left with the task of obtaining a $\bigoh(\log m)$ approximation to $\objl$; in fact, we will show that a constant approximation is possible for all distributions over monotone valuation functions.

\begin{lemma}\label{lem:resolve-obj-l}
For any value distribution $\D$ and any item pricing $p$, there exists an item pricing $q$ such that $q_j \ge 2m \cdot \obj$ for all $j\in [m]$, and
\[\rev(\D,q)\ge \frac{1}{4}\cdot \sum_{j=1}^m p_j \alloc_j(\D, p)\cdot \mathbbm{1}[p_j\in L].\]
\end{lemma}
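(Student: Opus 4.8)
The plan is to reduce the claim to recovering revenue directly from the ``large‑priced'' items. Write $L(p):=\{\,j\in[m]:p_j\in L\,\}=\{\,j:p_j>8m^2\obj\,\}$, and for a valuation $v$ in the support of $\D$ set $T_v:=\alloc(v,p)$ and $L^*_v:=T_v\cap L(p)$. Since $\sum_j p_j\,\alloc_j(\D,p)\,\mathbbm{1}[p_j\in L]=\expect_{v\sim\D}[\,p(L^*_v)\,]$, I must produce an item pricing $q$ with $q_j\ge 2m\obj$ for all $j$ and $\rev(\D,q)\ge\frac14\expect_v[p(L^*_v)]$; the case $\obj=0$ is immediate by taking $q=p$, so assume $\obj>0$.

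I would define $q$ by \emph{discounting the large items} just enough to pay for \emph{raising all other items to the floor} $2m\obj$:
\[
q_j:=\begin{cases}p_j-4m^2\obj,&j\in L(p),\\[2pt]2m\obj,&j\notin L(p).\end{cases}
\]
Since $p_j>8m^2\obj$ on $L(p)$, this gives $q_j>4m^2\obj\ge 2m\obj$ for every $j$ (so condition~(a) holds) and, crucially, $q_j\ge\frac12 p_j$ for $j\in L(p)$. The $4m^2\obj$ discount is calibrated to dominate the harm from raising prices: at most $m$ items are raised, each by at most $2m\obj$, so for any $v$ with $L^*_v\ne\emptyset$,
\[
v(T_v)-q(T_v)=\util(v,p)+\bigl(p(T_v)-q(T_v)\bigr)\ \ge\ \util(v,p)+4m^2\obj\,|L^*_v|-2m^2\obj\ \ge\ 2m^2\obj,
\]
so under $q$ the buyer still purchases a (strictly) positive‑utility bundle $T_v^q$, whose revenue obeys $q(T_v^q)\ge q(T_v^q\cap L(p))\ge\frac12\,p(T_v^q\cap L(p))$ by the scaling of $q$ on $L(p)$. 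Taking expectations gives $\rev(\D,q)\ge\frac12\expect_v[\,p(T_v^q\cap L(p))\,]$, so it remains to prove
\[
\expect_v\bigl[\,p(T_v^q\cap L(p))\,\bigr]\ \ge\ \tfrac12\,\expect_v\bigl[\,p(L^*_v)\,\bigr],
\]
i.e.\ that, weighted by the $p$‑prices, the buyer keeps at least half of its large purchases after $p$ is replaced by $q$.

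This last inequality is the crux, and I expect it to be the main obstacle. Intuitively it should be easy, since $q$ only \emph{lowers} prices on $L(p)$; but with item pricing and merely \emph{monotone} valuations an item that is worthless in isolation can be indispensable inside $T_v$, so the buyer can substitute to an unrelated bundle and ``cheaper $\Rightarrow$ bought at least as often'' fails pointwise. The lever is that the large items are \emph{scarce} in the end‑to‑end argument: $p$ is (a realization of) the ex‑ante‑optimal item pricing, so each large item yields at most $O(\obj)$ expected revenue, forcing $\alloc_j(\D,p)=O(1/m^2)$ on $L(p)$ and hence $\Pr_v[L^*_v\ne\emptyset]=O(1/m)$. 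On this rare event, if the buyer's $q$‑bundle $T_v^q$ contains \emph{none} of $L^*_v$, then comparing $T_v^q$ with $T_v^q\cup L^*_v$ forces the marginal value of $L^*_v$ over $T_v^q$ to be at most $q(L^*_v)\le p(L^*_v)$ while, as computed above, the buyer's $q$‑utility is at least $2m^2\obj$; charging this ``abandonment'' loss against the $O(1/m)$‑probability event and using $q_j\ge\frac12 p_j$ on $L(p)$ then yields the factor‑$2$ inequality above. Combining it with $\rev(\D,q)\ge\frac12\expect_v[p(T_v^q\cap L(p))]$ gives $\rev(\D,q)\ge\frac14\expect_v[p(L^*_v)]$, as required.
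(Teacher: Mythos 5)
Your proposal takes a genuinely different route from the paper's, but it has a gap that I believe is fatal, and the difference traces back to your choice of $q$.

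\textbf{The paper's argument is pointwise.} The paper sets $q_j \defeq p_j/2$ for $j\in L$ and $q_j \defeq \max\{p_j, 2m\obj\}$ otherwise, and then proves, for \emph{every fixed} $v$, the chain $q(T') \geq p(T') - q(T') \geq p(T) - q(T) \geq \tfrac14\,p(T\cap L)$, where $T=\alloc(v,p)$ and $T'=\alloc(v,q)$. The middle inequality follows from IC-swapping $T$ and $T'$; the left inequality holds because $p_j - q_j = q_j$ on $L$ and $p_j - q_j \le 0$ off $L$; the right inequality holds because the multiplicative halving gives $p(T\cap L) - q(T\cap L) = \tfrac12 p(T\cap L)$, which dominates the at-most-$2m^2\obj$ total increase off $L$ once $T\cap L\ne\emptyset$. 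Crucially, the paper never tries to argue that the buyer keeps buying the large items; it only lower-bounds the \emph{total} revenue $q(T')$, which can come from anywhere.

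\textbf{Your $q$ is wrong in two ways.} First, you lower medium prices: for $j\notin L$ with $2m\obj < p_j \le 8m^2\obj$, your $q_j = 2m\obj < p_j$, making those items strictly cheaper and more attractive than under $p$. This is the opposite of what you want. Second, and more importantly, you only give an \emph{additive} discount $4m^2\obj$ on large items. With the paper's definitions one gets $p(T) - q(T) \ge \tfrac12 p(T\cap L) - 2m^2\obj \ge \tfrac14 p(T\cap L)$; with yours the best you can say is $p(T)-q(T) \ge 4m^2\obj\,|T\cap L| - 2m^2\obj$, a quantity that is bounded (by roughly $4m^3\obj$) and therefore cannot dominate $\tfrac14 p(T\cap L)$ when $p(T\cap L)$ is, say, $100m^3\obj$. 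So even if you repaired the medium-item treatment, the additive discount alone kills the argument. Concretely: a buyer with $v(\{1\})=P+\epsilon$ where $p_1=P\gg m^2\obj$, and $v(\{2\})=V$ for a medium item $2$ with $p_2=8m^2\obj$ and $V$ tuned so that under $p$ the buyer picks $\{1\}$, will under your $q$ switch to $\{2\}$ (paying only $2m\obj$), while under the paper's $q$ they still pick $\{1\}$ and pay $P/2$.

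\textbf{Your crux inequality is the real gap.} You reduce to $\expect_v[p(T_v^q\cap L(p))] \ge \tfrac12 \expect_v[p(L^*_v)]$, which asserts the buyer keeps at least half of its large purchases, price-weighted and in expectation. You acknowledge this is the main obstacle, and your sketch invokes (i) the ex-ante optimality of $p$ — an assumption not in the lemma statement — and (ii) an ``abandonment'' charging argument that does not produce a bound: even conditioned on $L^*_v\ne\emptyset$, the quantity $p(L^*_v)$ can be arbitrarily large, so the small probability of the event does not cap the loss. The paper's argument sidesteps this entirely by never tracking whether the buyer keeps large items; revenue is extracted from whatever the buyer does buy, via the price-reduction chain above. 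I would strongly recommend adopting that structure: choose a multiplicative halving on $L$, never lower prices off $L$, and argue the three-inequality chain pointwise.
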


Before proving~\Cref{lem:resolve-obj-l}, let us first argue why it directly implies a $\bigoh(1)$ approximation to $\objl$. Consider the sequential item pricing mechanism that upon buyer $i$'s arrival observes the set of available (i.e. unsold) items $S_i$ and proceeds as follows. If at least one item is unavailable, it skips the buyer, selling him nothing. Otherwise, it samples a pricing $p_i$ from the optimal ex ante pricing for buyer $i$, and presents the buyer with the pricing $q_i$ guaranteed by \Cref{lem:resolve-obj-l}. We will prove that with constant probability this process sells no items, and therefore, obtains in expectation a constant fraction of the revenue contribution of each buyer $i$ to $\objl$. Formally:

\begin{claim}\label{c:obj-l-maintain-prob}
    For each buyer $i$, $S_i=[m]$ with probability at least $1/2$.
\end{claim}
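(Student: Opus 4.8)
The plan is to control the expected number of items that the $\objl$-mechanism ever sells, and then conclude by Markov's inequality. Assume $\obj>0$, since otherwise the statement is vacuous.

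First I would observe that the $\objl$-mechanism is itself a Bayesian IC sequential item pricing whose expected revenue is at most $\obj$. Indeed, when buyer $i$ arrives it is offered either the all-prices-$\infty$ pricing (if some item has already been sold) or the item pricing $q_i$ of \Cref{lem:resolve-obj-l}, and then best-responds; since both $S_i$ and the sampled $p_i$ are functions only of $v_{-i}$ and the mechanism's internal randomness, buyer $i$'s single-agent component, averaged over $v_{-i}$, is a distribution over item pricings, i.e.\ it lies in $\Delta^{\itempricing}$. Hence the remark following the definition of $\earev$ applies and bounds the mechanism's expected revenue by $\earev(\D,\itempricing)=\obj$.

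Next I would invoke the price lower bound in \Cref{lem:resolve-obj-l}: every finite price that the mechanism ever offers is at least $2m\cdot\obj$, so each item it sells contributes at least $2m\obj$ to its revenue, and therefore the mechanism's revenue is at least $2m\obj$ times the number of items it sells. Taking expectations and combining with the previous paragraph, the expected number of items ever sold is at most $\obj/(2m\obj)=1/(2m)$. Finally, the event $S_i\ne[m]$ implies that at least one item was sold among buyers $1,\dots,i-1$, and hence that at least one item is sold over the entire run; applying Markov's inequality to the nonnegative, integer-valued count of items sold yields $\Pr[S_i\ne[m]]\le 1/(2m)\le 1/2$, i.e.\ $\Pr[S_i=[m]]\ge 1/2$.

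The argument is short, and the only step needing care is checking that the ex ante upper bound is applicable to the $\objl$-mechanism---namely, that skipping buyers who face a depleted item set, together with randomizing the offered pricing over the draw of $p_i$, still leaves each buyer's single-agent mechanism inside $\Delta^{\itempricing}$. Everything else is a one-line revenue comparison followed by Markov's inequality.
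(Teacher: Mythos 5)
Your proof is correct and gives an even stronger bound ($\Pr[S_i \ne [m]] \le 1/(2m)$ rather than $\le 1/2$), but it takes a genuinely different route from the paper's. You bound the total expected revenue of the whole $\objl$-mechanism by $\obj$ (via the remark that any BIC mechanism whose single-agent components lie in $\Delta^{\itempricing}$ earns at most $\earev(\D,\itempricing)$), then convert this global revenue bound into a bound on the expected number of items sold using the fact that every finite price in $q_i$ is at least $2m\cdot\obj$, and close with Markov. The paper instead argues locally and by contradiction: it first shows $\alloc_j(\D_{i'},q_{i'})\le 1/(2m)$ for every buyer $i'$ and item $j$ (else offering $q_{i'}$ to that buyer alone would already beat $\obj$), sums to $\sum_j\alloc_j(\D_{i'},q_{i'})\le 1/2$ per buyer, and then shows that if the running sum $\sum_{i'<i}\sum_j\alloc_j(\D_{i'},q_{i'})$ ever exceeded $1/2$, one could truncate $q$ at the first crossing to build an ex-ante-feasible item pricing with revenue $>m\cdot\obj$, contradicting optimality; the claim then follows by union bound. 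Both arguments exploit the same fact (prices $\ge 2m\obj$ force vanishingly small allocation probabilities lest the ex ante optimum be beaten), but yours packages it as a single top-down revenue-to-items inequality, which is shorter and avoids the ``first prefix to cross $1/2$'' bookkeeping. The one step worth stating with a touch more care, which you flag yourself, is the appeal to the unproved remark: it does hold here because each single-agent component of the $\objl$-mechanism is a distribution over item pricings depending only on $v_{-i}$ and internal coins, so the induced allocation vector is ex-ante feasible by ex-post feasibility of $M$.
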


It remains to prove~\Cref{lem:resolve-obj-l} and~\Cref{c:obj-l-maintain-prob}.

\medskip
\begin{proof}[Proof of~\Cref{lem:resolve-obj-l}] 
We define the item pricing $q$ as follows.
\[
    q_j \defeq \begin{cases}
        p_j/2 & \text{if $p_j \in L$, i.e. $p_j \ge 8m^2 \cdot \obj$}, \\
        \max\{p_j, 2m \cdot \obj\} & \text{otherwise}.
    \end{cases}
\]
It is straightforward to see that $q_j \ge 2m \cdot \obj$ for every $j$. To prove the lemma, it suffices to show that for any fixed buyer $v$ it holds that
\[\rev(v,q) \geq \frac{1}{4}\cdot \sum_{j=1}^mp_j\alloc_j(v,p)\cdot \mathbbm{1}(p_j\in L),\]
and then simply taking an expectation over $v\sim\D$ will produce the claim. Fix the buyer $v$ and define $T \defeq \alloc(v,p)$ and $T' \defeq \alloc(v,q)$ to be utility-maximizing sets under pricings $p$ and $q$ respectively. We will overload notation and use $L$ to denote the set of indices $j$ where $p_j\in L$. Then note that the LHS of the above claimed inequality is $q(T')$ and the sum in the RHS is $p(T\cap L)$. So, we need to show that 
\[q(T') \geq \frac{1}{4} \cdot p(T \cap L).\]
This clearly holds when $T\cap L=\emptyset$, so we assume that $T\cap L\neq \emptyset$. On one hand,
 \begin{align}\label{eq:lem-15-first}
        p(T) - q(T) = p(T \cap L) - q(T \cap L) + p(T \setminus L) - q(T \setminus L)
        \geq \frac{1}{2} \cdot p(T\cap L) - 2m^2 \cdot \obj 
        \geq \frac{1}{4}\cdot p(T \cap L)
\end{align}
where the first equality uses additivity of prices; the middle inequality uses that $p_j = 2q_j$ for all $j \in L$, while $q_j \ge 2m \cdot \obj$ for all $j \notin L$; and the last inequality uses that $p(T \cap L) \ge 8m^2 \cdot \obj$ as $T \cap L$ is nonempty. On the other hand,
\begin{align}\label{eq:lem-15-second}
    p(T') - q(T') = p(T' \cap L) - q(T' \cap L) + p(T' \setminus L) - q(T' \setminus L)
    \leq q(T' \cap L)
    \leq q(T')
\end{align}
where we used $q_j \ge p_j$ for $j \notin L$. 
Finally, as $v(T) - p(T) \ge v(T') - p(T')$ and $v(T') - q(T') \ge v(T) - q(T)$ by the definition of $T$ and $T'$, we have $q(T) - q(T') \ge v(T) - v(T') \ge p(T) - p(T')$, or
\begin{align}\label{eq:lem-15-third}
    p(T') - q(T') \ge p(T) - q(T).
\end{align}
The lemma then follows from combining~\Cref{eq:lem-15-first,eq:lem-15-second,eq:lem-15-third}.
\end{proof}

\medskip
\begin{proof}[Proof of~\Cref{c:obj-l-maintain-prob}]
It suffices to show that $\sum_{i'<i} \sum_{j=1}^m \alloc_j(\D_{i'}, q_{i'}) \le \frac{1}{2}$, from which our claim follows from a union bound. Notice that here $q_{i'}$ is a random variable that depends on the pricing $p_{i'}$ that our algorithm draws.

We first note that $\alloc_j(\D_{i'}, q_{i'}) \le \frac{1}{2m}$ for every buyer $i'$\footnote{Observe that from this, if the ex ante allocations are at least $1/(2m)$ then we can cast this algorithm as a $O(1)$-OCRS w.r.t. $\objl$. Combined with the $O(\log m)$-OCRS w.r.t. $\objm$, this justifies the claim of Informal Theorem~\ref{it:ocrs}.}; otherwise, as $q_{i', j} \ge 2m \cdot \obj$, by simply using $q$ for the buyer $i'$ with valuation $\D_{i'}$ (and not selling anything to any other buyer) we would get an expected revenue that is strictly larger than $\obj\defeq \earev(\D,\itempricing)$ which is obviously a contradiction by the optimality of the ex-ante solution. This means that for every buyer $i'$, we have
\[\sum_{j=1}^m \alloc_j(\D_{i'}, q_{i'}) \le \frac{1}{2}.\]

We now prove the claim via contradiction. Suppose the claim is not true for some $i\in [n]$. Then there must exists a first index $k \leq i$ where it becomes not true, namely, \[\sum_{i'<k} \sum_{j=1}^m \alloc_j(\D_{i'}, q_{i'}) \le \frac{1}{2}\;\;\;\text{ and }\;\;\;\sum_{i'\le k} \sum_{j=1}^m \alloc_j(\D_{i'}, q_{i'}) > \frac{1}{2}.\] 
Note that as $\sum_{j=1}^m \alloc_j(\D_{k}, q_{k}) \le \frac{1}{2}$, we have \[\sum_{i'\le k} \sum_{j=1}^m \alloc_j(\D_{i'}, q_{i'}) \le 1.\]

Therefore, the pricing vector $q'$ which is equal to $q$ on the first $k$ buyers and to infinity on the remaining buyers satisfies the ex-ante constraint, and its ex-ante revenue will be at least 

\[\expect\left[\sum_{i' \le k} \sum_{j=1}^m q_j \cdot \alloc_j(\D_{i'}, q_{i'})\right]\ge \frac 12\cdot 2m\cdot \obj \ge m \obj,\] 
which contradicts ex-ante optimality.
\end{proof}
\section{Computational Considerations}\label{sec:computational-aspects}
In this section, we summarize Sections~\ref{sec:rrs-subadd} through~\ref{sec:subadd-theorem} to present a top-down overview for the existence of the sequential item pricing mechanism that achieves the guarantees of~\Cref{thm:maintheorem}, namely the $O(\log m)$-approximation to the ex ante revenue under subadditive valuations, as well as comment on the computational aspects of our approach.

\paragraph{Computational Aspects.} Recall that our objective is to approximate
\[
    \earev(\D,\itempricing) = \sum_{i=1}^n\earev_{x^*_i}(\D_i,\itempricing)
\]
under some (optimal) ex-ante allocation constraint $\{x^*_i\}_{i=1}^n$. Our approach crucially relies on access to this optimal solution, as well as the distributions (over item pricings) $p^*_i$ that achieve $\earev_{x^*_i}(\D_i,\itempricing)$ for each buyer $i\in [n]$. We comment that finding the revenue-optimal item pricing is hard even in the absence of an ex ante constraint \cite{chalermsook2013independent}, but approximations may exist in special cases (e.g., \cite{10.1145/3519935.3520065}). 

Our approach also requires access to an oracle that, given distribution $\D_i$ and pricing $q$, can compute the expected allocation $\alloc(\D_i,q)$ and expected revenue $\rev(\D_i,q)$. Note that if the supports of the buyers' value distributions are small and we have access to a demand oracle over subadditive values, we can compute these quantities. 

We remark that assuming access to $x^*_i$, $p^*_i$, and the demand oracle as discussed above, the remainder of our algorithm is \textit{fully constructive}, i.e. we can efficiently construct the Sequential Item Pricing achieving our claimed bounds in time polynomial in $n$, $m$, and the support sizes of the value distributions.

\paragraph{Construction Overview.} As described in Section~\ref{sec:subadd-theorem}, our construction begins by flipping a fair coin in order to decide whether it will focus on \textit{high-price items} or \textit{medium-price items} throughout the execution. 

If the coin lands on high-price items, we proceed to address the buyers sequentially, until we sell at least one item to a buyer; once this happens, we completely ignore the rest of the buyers. To address buyer $i$, we sample an item pricing $p_i$ from distribution $p^*_i$ and present them with the pricing $q_i = q_i(p_i)$ that we define in~\Cref{lem:resolve-obj-l}. Assuming access to the value of the (optimal) ex ante revenue, this construction is straightforward and thus the entire sequential item pricing is constructed in $O(nm)$ time. In fact, since $q_i$ doesn't depend on the set of available items, we can pre-compute these prices in advance, i.e. our sequential item pricing is oblivious to the order in which the buyers arrive, as well as the randomness of their valuations.

If the coin lands on medium-price items, then whenever buyer $i$ arrives we observe the set of available items $S_i$ and offer them a realization of the (random) item pricing $q_i=\mathrm{OCRS}(\D_i, S_i, \frac{1}{2}x^*_i)$. Note that this sequential item pricing is adaptive both to the order in which the buyers arrive, and to the realizations of their values, as both affect the set of available items $S_i$. Furthermore, to avoid having to sample from the optimal pricing distribution that respects $\frac{1}{2}x^*_i$, we can instead present them with the pricing $q_i=\mathrm{OCRS}(\D_i, S_i,x^*_i)$ with $1/2$ probability (and entirely skip the buyer with the remaining $1/2$ probability); it is easy to see that the proof of Section~\ref{sec:OCRStoEARev} also holds for this alternative.

In order to construct $q_i=\mathrm{OCRS}(\D_i, S_i,x^*_i)$, we sample an item pricing $p_i$ from distribution $p^*_i$. We then compute the vector $w_j=p_{ij}\alloc_j(\D_i,p_i)$ for $j\in S_i$ as in Section~\ref{sec:rrs_to_ocrs}, and then run the algorithm $\chalg$. An important observation is that instead of pre-computing the $y^T$ vectors for all $T\subseteq S_i$, we can compute them on demand; then, since $\chalg$ clearly runs for at most $m$ steps, we only need to compute $m$ of these vectors.

To compute a $y^T$ vector, we need to find a pricing $q^T$ satisfying the revenue guarantee, and then compute its revenue. \
% a distribution over item pricings $q^T = \mathrm{RRS}(\D_i, T, p)$ and then compute its expected revenue. 
From Section~\ref{sec:rrs-subadd}, we know that picking the best scaling $\gamma\in [\ell,h]$ and setting $q=\gamma\cdot p$ suffices. In fact, because of the linear dependence of the price of any particular set of items on $\gamma$, at the loss of a factor of $2$ it suffices to choose $\gamma$ as the best power of $2$ in the range $[\ell,h]$. 

Once $\chalg$ completes its execution and returns the distribution $\lambda_T$ over the subsets $T\subseteq S_i$, we simply sample such a set $T$ from it and return $q_i=q^T$.

\section{Lower Bounds for XOS Valuations (Proof of \Cref{thm:xos-lb})}\label{sec:xos-lb}

In this section, we prove that the gap between sequential item pricings and ex ante item pricings can be as large as $\Omega(\sqrt{\log m})$ under XOS valuations, as stated in Theorem~\ref{thm:xos-lb}, which we restate for convenience:

\xoslbthm*

In fact, a stronger separation holds. Let $\expostitempricing$ be the class of multi-buyer mechanisms which take all buyers' valuations $v_i$ as input, and offer each buyer a randomized item pricing $\mech_{i, v_{-i}}$ over the items which depends only on $\D_i$ and the other buyers' valuations $v_{-i} \defeq \{v_{i'}: i' \neq i\}$. Provided that the allocations under the $\mech_{i, v_{-i}}$ are feasible, these are feasible and truthful multi-buyer mechanisms. Furthermore, this generalizes $\seqitempricing$, since any randomized sequential item pricing mechanism can be simulated by a mechanism in $\expostitempricing$. We will show that the separation holds for this class of mechanisms as well.
\begin{theorem}\label{t:xos-lb-offline}
    There exists a joint value distribution $\D$ for $m$ items and $n=\sqrt{m}$ buyers for which
    \[
        \earev(\D,\itempricing) \geq \Omega\left(\sqrt{\log m}\right) \cdot \rev(\D, \expostitempricing).
    \]
\end{theorem}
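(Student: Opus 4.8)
The plan is to exhibit an explicit hard instance $\D$ over XOS valuations and bound the two sides separately. For the construction I would use $n=\sqrt m$ buyers over $[m]$, with a distribution that is symmetric under relabelings of the items, and whose per-buyer valuation is built from $L=\Theta(\log m)$ ``scales'': informally, at a per-item price in the scale-$\ell$ band a buyer who makes a heavy-tailed (equal-revenue-flavored) draw wants a structured bundle of size $\approx b_\ell$, and the value of that bundle is calibrated so that each scale contributes $\Theta(1)$ to the single-buyer item-pricing revenue. This is essentially the $\Theta(\log m)$ single-buyer item-pricing-versus-buy-many gap of \textcite{chawla2022buy}, transplanted into a symmetric, multi-buyer setting. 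The two properties I need are: (i) the single-buyer item-pricing revenue is $\Theta(\log m)$, achieved by a (random) pricing whose expected allocation puts only $1/\operatorname{poly}(m)$ mass on each item; and (ii) a rigidity property -- collecting revenue from scale $\ell$ of some buyer forces $\approx b_\ell$ items to actually be sold to that buyer, so extracting revenue at many scales across many buyers consumes many items.

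The ex ante lower bound is the easy direction. Set $x_{ij}=1/n$ for all $i,j$. By (i), the ex ante constraint $\sum_i x_{ij}\le 1$ holds with large slack (the true per-item allocation of each buyer's optimal pricing is far below $1/n$), and $\rev_{x_i}(\D_i,\itempricing)=\Theta(\log m)$ for every $i$, so $\earev(\D,\itempricing)\ge \Omega(n\log m)=\Omega(\sqrt m\,\log m)$.

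The main obstacle is the upper bound $\rev(\D,\expostitempricing)=\bigoh(n\sqrt{\log m})$, which has to hold against the full class $\expostitempricing$ -- arbitrary randomized item pricings that may depend on all other buyers' realized valuations. First I would exploit the relabeling symmetry of $\D$ and convexity of the revenue objective (average any mechanism over a uniformly random permutation of the items) to reduce to mechanisms that offer each buyer a pricing symmetric across the still-available items -- essentially a single (randomized, $v_{-i}$-dependent) per-item price for each buyer. Then I would decompose the total revenue as $\sum_\ell R_\ell$, where $R_\ell$ is the expected revenue collected at prices in the scale-$\ell$ band. Using the rigidity property (ii), collecting $R_\ell$ at scale $\ell$ forces selling $\Theta(R_\ell/p_\ell)$ items to buyers operating there, where $p_\ell$ is the scale-$\ell$ price; and because these per-scale demands are heavy-tailed and land on overlapping random bundles, a second-moment / balls-into-bins estimate on how they pack shows the always-feasible constraint ``at most $m$ items sold in total'' is compatible only with $R_\ell$ spread over $\bigoh(\sqrt{\log m})$ scales' worth of revenue -- concretely, it should yield a bound of the shape $\sum_\ell R_\ell^2 \le \bigoh(m)$, so Cauchy--Schwarz over the $L=\Theta(\log m)$ scales gives $\sum_\ell R_\ell \le \bigoh(\sqrt{m\log m}) = \bigoh(n\sqrt{\log m})$. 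I expect the delicate points to be (a) making the symmetrization reduction fully rigorous for randomized, correlation-rich mechanisms in $\expostitempricing$, and (b) pinning down the packing/variance estimate so that it loses $\sqrt{\log m}$ rather than the trivial $\log m$; this is where the scale sizes $b_\ell$ and the tail exponents of the construction must be carefully calibrated.

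Combining the two bounds yields $\earev(\D,\itempricing)\ge \Omega(\sqrt{\log m})\cdot \rev(\D,\expostitempricing)$, which is \Cref{t:xos-lb-offline}; and since $\seqitempricing\subseteq\expostitempricing$, \Cref{thm:xos-lb} follows immediately.
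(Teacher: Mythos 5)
Your proposal takes a genuinely different route from the paper's proof, but it has two substantive gaps that I don't believe can be repaired as written. The paper's instance is not a multi-scale equal-revenue construction at all; it is a much simpler \emph{two-type} XOS construction. Each of the $n=k=\sqrt m$ i.i.d.\ buyers has a uniformly random $A$-set of size $k$ valued at $1+t$ per item (with $t=\sqrt{\log k}$), together with, for a random level $\ell$, $B$-type additive components supported on \emph{every} set of size $t\ell$ valued at $1+k/\ell$ per item. The ex ante benchmark is established by the flat pricing $p_{ij}=1$, under which every buyer purchases its entire $A$-set, so each item is sold with probability exactly $k/m=1/n$ per buyer; the ex ante constraint is \emph{tight}, not slack, and the ex ante revenue is $\Omega(nk)=\Omega(m)$. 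The upper bound is then a clean dichotomy: \Cref{c:xos-lb-A-items} shows that a buyer who purchases an $A$-type bundle must receive a $(1-\tfrac1{t+1})$-fraction of its $A$-set, \Cref{lem:max-t-feasible-num-buyers} (a concentration/packing lemma over random $k$-subsets) shows that only $O(n/t)$ buyers can simultaneously be that feasible under \emph{any} assignment of items, and \Cref{c:xos-lb-A-rev,c:xos-lb-B-rev} bound $A$- and $B$-type per-buyer revenue by $O(k)$ and $O(tk/\log k)$ respectively; balancing at $t=\sqrt{\log k}$ gives $O(m/\sqrt{\log m})$ and the $\Omega(\sqrt{\log m})$ gap.

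The first gap in your proposal is in the ex ante step itself. You claim that each buyer's single-buyer optimal item pricing already gets $\Theta(\log m)$ revenue with per-item allocation ``far below $1/n$,'' so the ex ante constraint $x_{ij}=1/n$ has large slack. But if the per-item allocation of that pricing is $o(1/n)$, an $\expostitempricing$ mechanism can just hand every buyer the same single-buyer optimal pricing: the probability any particular item is already gone when buyer $i$ arrives is $o(1)$, so the mechanism achieves $(1-o(1))\cdot n\cdot\Theta(\log m)$ and the gap collapses. The gap in \Cref{t:xos-lb-offline} fundamentally requires the ex ante constraint to be essentially saturated so that contention is unavoidable, as it is in the paper's instance. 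The second gap is the packing estimate: the claim $\sum_\ell R_\ell^2 \leq O(m)$ is asserted without derivation, and I see no mechanism for it. The natural consequence of the rigidity property is a linear constraint $\sum_\ell R_\ell/p_\ell \leq m$ on items sold, not a quadratic one, and the trivial per-scale cap $R_\ell\leq n$ only gives $\sum_\ell R_\ell^2 \leq Lm$, which through Cauchy--Schwarz recovers $n\log m$ rather than $n\sqrt{\log m}$. A second-moment/balls-into-bins analysis would have to extract genuinely new structure here, and nothing in the proposal pins that down. You also rely on a symmetrization reduction to permutation-invariant mechanisms, which the paper avoids entirely by bounding the number of $A$-feasible buyers against \emph{arbitrary} assignments $\sigma:[m]\to[n]\cup\{\bot\}$, sidestepping any issues with correlated, $v_{-i}$-dependent pricings in $\expostitempricing$.
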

\noindent \Cref{thm:xos-lb} then follows as a corollary of \Cref{t:xos-lb-offline}, since every mechanism $\mech \in \seqitempricing$ may be simulated by a mechanism in $\expostitempricing$.

The rest of this section is devoted to proving~\Cref{t:xos-lb-offline}. We begin by specifying our instance. We first define a set of parameters that will be useful towards stating our claims; let $k\defeq\sqrt{m}$, and $t\defeq \sqrt{\log k}$, taking logarithms to be base two. By incurring at most a constant factor, we can assume that all of these parameters are integer-valued, and also that $t$ is even. 

\paragraph{The buyers.} Conveniently, the $n$ buyers $v_i \sim \D_i$ are all (independently) identically distributed, with valuations sampled according to the following process:
\begin{enumerate}
    \item A set of items $A\subseteq [m]$ with $|A|=k$ is sampled uniformly at random.
    \item An integer $h \in \{1,2,\dotsc , \frac{1}{2} \log k\}$ is drawn uniformly at random. We denote $\ell\defeq2^h$.
    \item The valuation of the buyer is then realized as an XOS function, which is given by the maximum over the following additive valuations:
    \begin{enumerate}
        \item For the set $A$, we define an additive valuation $v_A$ such that $v_A(j)= 1 + t$ for items $j\in A$ and $v_A(j)=0$ for items $j\notin A$.
        \item For \textit{every} set $B$ of cardinality $|B| = t\cdot \ell$, we define an additive valuation $v_B$ such that $v_B(j)=1 + \frac{k}{\ell}$ for items $j\in B$ and $v_B(j)=0$ for items $j\notin B$. 
    \end{enumerate}
\end{enumerate}
Observe that for our selected parameters $t\cdot \ell \leq m$ always, and all parameters are integer-valued, so this is a well-defined XOS instance. Let $A_i$ be the $A$-set for buyer $i$. 

Informally, this valuation ensures that each buyer $i$ is nearly indifferent between the (single) high-revenue set $A_i$ on the one hand, and one of the (many) low-revenue sets $B$ on the other. For any sequential item pricing mechanism, as the set of available items shrinks, $A_i$ becomes less valuable to $i$, while the most valuable low-revenue set $B$ does not. As a result, it quickly becomes unlikely that the available fraction of $A_i$ is large enough for $i$ to choose it, resulting in reduced revenue.

\paragraph{Ex-ante revenue.} To demonstrate a gap, we need a lower bound on $\earev(\D,\itempricing)$ for the above instance $\D$. The key observation is that ex ante, all buyers can be allocated their $A$-sets. Consider the item pricing $p_{ij}=1$ for all buyers $i\in [n]$ and items $j\in [m]$, meaning that each buyer is presented with the exact same set of uniform prices over the items. For this pricing, the utility of the $A$ set and any $B$ sets are precisely
\begin{align*}
    \util_i(A) &= v_i(A) - p_i(A) = k(1+t-1) = kt, \;\;\text{ and,} \\
    \util_i(B) &= v_i(B) - p_i(B) = t\ell\left(1 + \frac{k}{\ell}-1\right) = kt.
\end{align*}
By breaking ties for $A$ (or by slightly increasing $A$'s valuation by $\epsilon > 0$) we can assume the buyer will always buy set $A$. Since $A$ is a uniformly random subset of $[m]$ with cardinality $k$, each item will therefore be sold to the buyer with probability $k/m = m^{-1/2}$. Since we have $n=\sqrt{m}$ identical buyers in total, this means that the fractional total allocation of each item is precisely $1$, and thus the ex ante constraint is satisfied for our proposed pricing $p$, implying that
\[
    \earev(\D,\itempricing) \geq \sum_{i=1}^n\rev (\D_i, p_i).
\]
Since we have already showed that under this $p_i$ buyer $i$ always purchases $A_i$, the per-buyer revenue is simply $\rev(\D_i,p_i) = |A_i|=k$ and so for our choice of parameters $n=k=\sqrt{m}$,
\begin{equation} \label{eq:xos-lb-EA-rev}
    \earev(\D,\itempricing) \geq n\cdot k = m.
\end{equation}

%%%%%%%%%%%%
\paragraph{Revenue and allocation for single-buyer item pricings.}

The following are crucial to analyzing this instance. They establish conditions under which a buyer can be sold items according to their $A$-type versus $B$-type additive valuations, and upper bounds on the expected revenue in each case. Omitted proofs appear in \Cref{sec:xos-lb-proofs}. 
We begin with claims that hold for all valuations $v\sim \D_i$ from the above distribution, all item pricings $p$, and all subsets of available items $S \subseteq [m]$. Let $\E^A(v_i,p)$ denote the event that buyer $v_i$ facing prices $p$ chooses an allocation where their valuation is supported by $v_A$; define $\E^B(v_i,p)$ analogously when their valuation is supported by any of the $v_B$.

\begin{restatable}{lemma}{xosfirstlemma}\label{c:xos-lb-A-items}
    For any $v_i,p$ such that $\E^A(v_i,p)$ holds, buyer $i$ purchases at least $\frac{kt}{t+1}$ items.
\end{restatable}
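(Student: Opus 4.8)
The plan is to analyze the buyer's choice when the event $\E^A(v_i,p)$ holds, i.e.\ the buyer selects a utility-maximizing allocation $T$ whose value is supported by the additive valuation $v_A$ (so $v_i(T) = v_A(T) = (1+t)\cdot|T\cap A_i|$, since items outside $A_i$ contribute zero to $v_A$ and thus would never be purchased under $v_A$-support). The key comparison is against the best $v_B$-type alternative: for every set $B$ with $|B| = t\ell$, the buyer could instead buy (the available part of) $B$, obtaining value $1 + k/\ell$ per item. The idea is that if the buyer were purchasing too few items from $A_i$, the total utility $v_A(T) - p(T)$ would be dominated by what is obtainable from some $B$-set, contradicting the optimality of $T$ under the assumption that the optimum is $v_A$-supported.

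First I would set $a \defeq |T\cap A_i| = |T|$ (the latter because, as noted, a $v_A$-supported optimal allocation never includes items outside $A_i$). Then $\util(v_i,p) = (1+t)a - p(T)$. Next I would lower-bound the utility achievable from $B$-sets. Since there are many sets $B$ of size $t\ell$, and the buyer only needs $t\ell \le m$ items, one can greedily pick $B$ to consist of the $t\ell$ cheapest items under $p$; this yields utility at least $(1 + k/\ell)(t\ell) - (\text{sum of } t\ell \text{ cheapest prices})$. To make this clean I would instead use an averaging/exchange argument comparing $T$ itself to a $B$-set, or simply use the crude bound that the buyer can always buy a $B$-set of value $kt$ for a price no larger than a controlled quantity. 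The cleanest route: recall from the ex-ante revenue computation that under uniform prices $p\equiv 1$ both $A$ and $B$ sets give utility exactly $kt$; more generally the buyer's optimal utility is at least the $v_B$-utility, and since any $v_A$-supported allocation uses only $A_i$-items, comparing the per-item ``profit rates'' ($t$ per item from $A_i$ versus roughly $k/\ell \ge k/\sqrt{k} \ge t$... ) forces $a$ to be large.

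The quantitative heart: I claim $a \ge \frac{kt}{t+1}$. Suppose not, so $a < \frac{kt}{t+1}$, i.e.\ $|A_i \setminus T| = k - a > k - \frac{kt}{t+1} = \frac{k}{t+1}$. The buyer declines to add any item $j \in A_i\setminus T$ to $T$, which (under $v_A$-support, where each such item adds value $1+t$) means its price satisfies $p_j \ge 1+t$; hence $T$ is disjoint from the cheap items, and more importantly the items of $A_i$ not bought are all expensive. But then consider switching to buying $T$ together with these items versus a $B$-set — actually the crisp contradiction is: the buyer's utility is $(1+t)a - p(T) \le (1+t)a$, whereas buying the $a$ items of $T$ \emph{re-valued} as part of a $B$-set of size $t\ell$ (possible since $a \le k \le t\ell$ for $\ell \ge k/t$, and for small $\ell$ one argues directly) would give value $(1+k/\ell)\cdot(\text{something})$, and picking $\ell$ appropriately (the free choice of $B$ across all sizes is not available — $\ell$ is fixed per buyer, so I must be careful here) the alternative value $(1+k/\ell)a - p(T)$ strictly exceeds $(1+t)a - p(T)$ whenever $k/\ell > t$, contradicting that the optimum is $v_A$-supported unless $T$ already has value matching the best $B$-set. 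I expect the genuine obstacle to be handling the fixed value of $\ell$ (equivalently $h$) cleanly: one cannot pick the most convenient $B$-size, so the argument must instead compare $T$ to a $B$-set of the \emph{given} size $t\ell$, tracking that the $t\ell - a$ extra items purchased in that $B$-set are cheap (price $\le 1 + k/\ell$, else not worth buying) while the $k - a$ unbought $A_i$-items have price $\ge 1+t$; balancing these counting constraints is what pins down $a \ge \frac{kt}{t+1}$, and getting the inequalities to line up exactly (rather than up to constants) is the delicate step.
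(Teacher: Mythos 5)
Your high-level strategy is right: use $\E^A$ to compare the $v_A$-utility of the bought set $T$ against the $v_B$-utility of some competing set and squeeze $|T|$. But the executable version of this comparison is left unfinished, and the hints you give about how to finish it point the wrong way.

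Concretely, you imagine comparing $T$ against a $B$-set of size $t\ell$ that \emph{contains} $T$, so that there are $t\ell - a$ ``extra'' cheap items, and you try to make this work in a case $\ell \ge k/t$. That case never occurs: for the instance's parameters $t = \sqrt{\log k}$ and $\ell \le \sqrt{k}$ we always have $t\ell \le \sqrt{k\log k} < k$, so $\ell < k/t$ always. More to the point, the paper's argument runs in the opposite direction. Since the buyer's optimum is $v_A$-supported, it must dominate the best $v_B$-supported \emph{subset} of $T$; one first shows $|T| > t\ell$ (if $|T|\le t\ell$ the comparison $T$ vs.\ $T$ itself under $v_B$ would force $t \ge k/\ell$, which is false), and then takes $B$ to be an arbitrary size-$t\ell$ subset of $T$. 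The $\E^A$ condition then gives
\[
\sum_{j\in T\setminus B}(t+1-p_j) \;\ge\; \sum_{j\in B}\Bigl(\tfrac{k}{\ell}-t\Bigr) \;=\; kt - t^2\ell,
\]
and dropping the nonnegative prices on the left yields $(t+1)(|T|-t\ell)\ge kt - t^2\ell$, i.e.\ $|T|\ge \frac{kt+t\ell}{t+1}\ge \frac{kt}{t+1}$. So the missing idea is precisely that $|T| > t\ell$, which flips the containment $B\subseteq T$; without it your ``extra items'' bookkeeping doesn't apply, and the counting you describe cannot be made to close.

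Everything before that in your write-up (the observation $T\subseteq A_i$, so $a=|T\cap A_i|=|T|$, and that unbought $A_i$-items have $p_j\ge 1+t$) is correct but not actually used in the clean proof; the bound comes from comparing to a $B$-subset of $T$, not from the expensive items outside $T$.
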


\begin{restatable}{lemma}{xossecondlemma}\label{c:xos-lb-A-rev}
    For some $c_1\in \Rplus$ and sufficiently large $k$, for all $p$, $\xpectover{v_i}{\rev(v_i, p) \mid \E^A(v_i, p)} \leq c_1 \cdot k$.
\end{restatable}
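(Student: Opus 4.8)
The plan is to bound the expected revenue extracted from a buyer $v_i$ conditioned on the event $\E^A(v_i,p)$ — i.e., conditioned on the buyer choosing an allocation supported by the $A$-type additive valuation $v_A$. The key structural fact is that the $A$-set is a uniformly random $k$-subset of $[m]$, independent of the prices $p$, and that on $\E^A$ the buyer buys a subset of $A$. So the revenue is at most $p(A)$, the total price of the items in the $A$-set, and in fact at most $\util_i$-feasibility constrains it further: the buyer only buys items $j\in A$ whose price $p_j$ is at most the marginal value $1+t$. Hence the revenue conditioned on $\E^A$ is at most $\sum_{j\in A} \min\{p_j, 1+t\}$, and we need to show its expectation over the random set $A$ is $O(k)$.

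First I would observe that, since $|A|=k$, we have $\rev(v_i,p)\cdot \mathbbm 1[\E^A] \le \sum_{j\in A}\min\{p_j,1+t\} \le k(1+t)$ deterministically, but this only gives $O(kt)$, which is too weak. To get $O(k)$ I would use the independence of $A$ from $p$: $\xpectover{A}{\sum_{j\in A}\min\{p_j,1+t\}} = \frac{k}{m}\sum_{j\in[m]}\min\{p_j,1+t\}$. Now I need an upper bound on $\sum_{j}\min\{p_j,1+t\}$ that holds for \emph{any} pricing $p$ — but of course this can be as large as $m(1+t)$ for arbitrary $p$, so a bound uniform over all $p$ is impossible without further input. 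The resolution must be that the conditioning event $\E^A$ itself restricts which pricings are relevant: $\E^A$ has positive probability only if the $A$-allocation is at least as good for the buyer as every $B$-allocation, and this forces the prices on ``most'' items to be large, which in turn (because the buyer won't buy expensive items) caps the realized revenue. So the real argument is a case analysis on $p$: either the prices are such that $\E^A$ occurs with probability bounded away from, say, being dominated by cheap-set purchases, in which case one shows directly that $\xpectover{v_i}{\rev(v_i,p)\mid \E^A} = O(k)$; or $\E^A$ is a rare event and we argue via $\xpectover{v_i}{\rev(v_i,p)\cdot\mathbbm 1[\E^A]} \le (\text{something})$ divided by a not-too-small $\pr[\E^A]$.

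Concretely, I would proceed as follows. Fix $p$. For the buyer to prefer the $A$-allocation over the best $B$-allocation of size $t\ell$ (for the realized $\ell=2^h$), we need $v_A$'s net utility on its cheapest-relative-to-value subset of $A$ to beat $v_B$'s best net utility, which is roughly $kt$ minus the sum of the $t\ell$ smallest prices. This gives a lower bound, for each $h$, on the sum of order statistics of $p$, hence a lower bound on how many items have price below any given threshold. In particular it forces $\sum_{j}\min\{p_j, 1+k/\ell\}$ — the quantity governing $B$-revenue — and by extension the structure of $p$ to be constrained. I would then split the items in $A$ by price into a ``cheap'' part (price $\le C$ for a suitable constant $C$, contributing at most $Ck$ to revenue) and an ``expensive'' part, and argue that in expectation over $A$ only $O(1)$ expensive items end up in $A$ and get purchased — using that $\E^A$ forces the count of expensive items in $[m]$ to be small relative to $m/k$, because otherwise the buyer's $A$-utility $kt - p(\text{subset of }A)$ would be driven negative or below some $B$-utility with high probability over $A$.

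The main obstacle I anticipate is precisely the interplay between the \emph{conditioning} on $\E^A$ and the randomness of $A$: $\E^A$ is not independent of $A$ (it is determined by $A$, $\ell$, and $p$ jointly), so I cannot simply push the expectation over $A$ through. The cleanest route is probably to bound the \emph{unconditioned} quantity $\xpectover{v_i}{\rev(v_i,p)\cdot\mathbbm 1[\E^A(v_i,p)]}$ and separately lower-bound $\pr[\E^A(v_i,p)]$ whenever it is nonzero, then divide; making the lower bound on $\pr[\E^A]$ quantitatively strong enough (not exponentially small) to absorb the $O(kt)$ worst-case revenue is the delicate part, and it is where the specific choice of parameters $t=\sqrt{\log k}$ and the geometric grid $\ell\in\{2,4,\dots\}$ must be used. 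If that fails, the fallback is a direct argument: show that on $\E^A$ the buyer's purchased set $T\subseteq A$ satisfies $v_A(T)-p(T)\ge 0$, so $p(T)\le (1+t)|T|\le (1+t)k$, and then improve the $(1+t)$ to $O(1)$ by showing that items with price in $((1+t)/2,\,1+t]$ contribute little because the buyer's indifference with the $B$-sets forces him to have dropped most such items — yielding $\rev \le O(k) + (\text{small correction})$ pointwise in $v_i$, which would prove the lemma with no conditioning subtlety at all.
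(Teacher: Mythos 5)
There is a genuine gap: neither of your two routes contains the actual argument, which is a short, purely \emph{pointwise} (deterministic) bound. In the paper's proof, whenever $\E^A(v_i,p)$ holds the buyer purchases exactly $A_T=\{j\in A: p_j\le t+1\}$, so $\rev(v_i,p)=p(A_T)$, and one compares the $A$-utility of $A_T$ against the $B$-valuation applied to the $t\ell$ \emph{cheapest items inside $A_T$ itself} (this is legal since $|A_T|\ge t\ell$ by Lemma~\ref{c:xos-lb-A-items}). That single comparison gives $\sum_{j\in A_T\setminus B_{AT}}(t+1-p_j)\ge t\ell\left(\tfrac{k}{\ell}-t\right)=kt-t^2\ell$, and since $B_{AT}$ consists of the cheapest items its average price is at most that of $A_T\setminus B_{AT}$; solving for $p(A_T)$ and using $|A_T|\le k$, $|A_T|\ge \tfrac{kt}{t+1}$, and $k\ge 2t\ell$ (which holds because $t\ell\le\sqrt{\log k}\cdot\sqrt{k}$) yields $p(A_T)\le c_1 k$ for every realization with $\E^A$; the conditional expectation bound follows trivially. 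No randomness of $A$, no lower bound on $\pr[\E^A]$, and no case analysis over $p$ is needed.

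Your primary plan (bound the unconditioned quantity $\xpectover{v_i}{\rev\cdot\mathbbm{1}[\E^A]}$ and divide by $\pr[\E^A]$) is both uncarried-out and problematic: the lemma must hold for \emph{every} pricing $p$, including ones for which $\E^A$ has tiny positive probability, so the division step needs a uniform, non-negligible lower bound on $\pr[\E^A]$ that you have no mechanism to supply --- and, as the pointwise argument shows, none is needed. Your fallback is closer in spirit, but its key step is wrong as stated: on $\E^A$ the buyer does \emph{not} ``drop most items with price in $((1+t)/2,\,1+t]$'' --- under the additive $A$-valuation he buys every item of $A$ priced below $t+1$. The correct mechanism is the one above: preferring $A_T$ to a $B$-bundle supported on the cheapest $t\ell$ items of $A_T$ forces the total surplus $\sum_{j\in A_T}(t+1-p_j)$ to be nearly $kt$, hence the \emph{average} price over $A_T$ to be $O(1)$ (using $t^2\ell\ll k$), which is what caps the revenue at $O(k)$. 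Without that quantitative comparison your sketch does not improve the trivial $(1+t)k=O(k\sqrt{\log k})$ bound to $O(k)$, and that improvement is exactly what the lower bound construction needs.
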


\begin{restatable}{lemma}{xosthirdlemma}\label{c:xos-lb-B-rev}
    For some $c_2\in \Rplus$ and sufficiently large $k$, for all $p$, $\xpectover{v_i}{\rev(v_i, p) \mid \E^B(v_i,p)} \leq c_2\cdot t\cdot\frac{k}{\log k}$.
\end{restatable}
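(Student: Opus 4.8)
The goal is to bound $\xpectover{v_i}{\rev(v_i,p) \mid \E^B(v_i,p)}$ for an arbitrary item pricing $p$. Recall that conditioned on $\E^B$, the buyer's chosen allocation is supported by some $v_B$ with $|B| = t\ell$ and $v_B(j) = 1 + k/\ell$ on $B$; the realized parameter is $\ell = 2^h$ with $h$ uniform in $\{1,\dots,\tfrac12\log k\}$. The plan is to argue that for \emph{each} fixed value of $\ell$, the conditional expected revenue is $O(k/\log k)$, and then average over the $\Theta(\log k)$ choices of $h$, picking up the factor of $t = \sqrt{\log k}$ from the summation only in a controlled way — actually the cleanest route is: fix $h$ (hence $\ell$), bound the conditional revenue given $\E^B$ \emph{and} this $h$ by roughly $2k$ (a buyer buying a $B$-set pays at most their value, and a profitable $B$-set has per-item utility $\ge 0$, so each of the $t\ell$ items is priced at most $1 + k/\ell$, giving revenue at most $t\ell(1+k/\ell) = t\ell + tk$; but that's too weak). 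So instead I would use incentive compatibility more carefully.

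The key observation is that if the buyer (with XOS value built from parameter $\ell$) strictly prefers to buy a set $B^*$ supported by $v_{B^*}$ with $|B^*| = t\ell$, then in particular $\util(v_i,p) \ge v_{B^*}(B^*) - p(B^*) = t\ell(1 + k/\ell) - p(B^*) = t\ell + tk - p(B^*) \ge 0$, so $p(B^*) \le t\ell + tk \le 2tk$ (since $\ell \le \sqrt{k} \le k$), hence $\rev(v_i,p) = p(B^*) \le 2tk$. This is still a factor $\log k$ too large, so the averaging over $h$ must recover it. Here is where I would be more delicate: for a fixed pricing $p$, sort the items by price and let $N_\ell$ denote the number of items whose price is at most $2k/\ell$ (roughly the "cheap enough to be in a profitable $B$-set for parameter $\ell$" items). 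A profitable $B$-set for parameter $\ell$ has average price at most $(t\ell+tk)/(t\ell) = 1 + k/\ell \le 2k/\ell$; by a Markov-type argument at least half its $t\ell$ items have price $\le 4k/\ell$, so it uses $\ge t\ell/2$ items from a pool of size $O(k/\ell)$-priced items. Summing the revenue $\sum_{j\in B^*} p_j$ and bounding it: the revenue is at most the total price of the cheapest $t\ell$ items among those priced $\le 4k/\ell$, which is at most $4k/\ell \cdot t\ell = 4tk$. Still the same. The real gain: across the $\Theta(\log k)$ geometric scales of $\ell$, the item-price budgets telescope — a single item of price $p_j$ can be "charged" to at most $O(1)$ scales $\ell$ (those with $k/\ell \approx p_j$), because for $\ell$ too small the price is too high to be in a profitable set and for $\ell$ too large the set $B$ is tiny. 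So I would set up a charging argument: define for each item $j$ the set of "relevant" scales $h$, show $|\{h : j \text{ relevant for } \ell = 2^h\}| = O(1)$, conclude $\sum_h (\text{revenue at scale }h) \le O(1)\cdot \sum_j p_j^{\text{(effective)}}$, and bound the right side by $O(k\log k)$ using that at each scale the effective contribution per item is at most the price cap times a count. Dividing by the $\Theta(\log k)$ scales then gives the $O(k/\log k)$ per-scale average, and the extra $t$ in the statement comes from the set size $|B| = t\ell$ versus the $\ell$ natural to the pricing structure.

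\textbf{Main obstacle.} The delicate point is making the per-scale revenue bound actually beat the trivial $O(tk)$ bound by the needed $\log k$ factor; the trivial IC bound is off by exactly this factor, so I must exploit that a \emph{single} pricing $p$ cannot simultaneously make profitable $B$-sets cheap at \emph{all} geometric scales $\ell$. Concretely, if $p$ makes many items cheap (price $\le 2k/\ell$) then it collects little revenue from those items, whereas if prices are high the buyer's utility from $v_B$ is negative and $\E^B$ fails; formalizing this tension — probably by splitting on the realized $\ell$, bounding $\pr[\E^B(v_i,p) \text{ via parameter } \ell]$ against the number of available cheap items, and showing the product (probability)$\times$(conditional revenue) is $O(k/\log k)$ summed over $\ell$ — is the crux. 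I would also need Lemma~\ref{c:xos-lb-A-items}'s flavor of counting here, and care that the bound is uniform over all pricings $p$, so no averaging over $p$ is available; everything must be a worst-case structural statement about a single price vector. The rest (integrating over the uniform $h$, handling the $+1$ additive terms in the valuations, and the constant $c_2$) is routine.
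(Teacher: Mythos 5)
Your high-level intuition is right, and it matches the paper's: a single pricing $p$ cannot make profitable $B$-sets cheap at all of the $\Theta(\log k)$ geometric scales of $\ell$ simultaneously, and this tension is what recovers the $\log k$ savings over the trivial IC bound $O(tk)$. But the specific mechanism you propose to cash this in does not work, and you have not actually closed the gap you flag at the end.

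Two concrete problems with the charging argument. First, the direction of the constraints is reversed: for $\ell$ \emph{small}, the set $B$ is tiny (since $|B|=t\ell$ grows with $\ell$) and the price cap $1+k/\ell$ is \emph{loose}; for $\ell$ \emph{large}, the set is big but the cap $1+k/\ell$ is \emph{tight}. Your write-up says the opposite in both cases. Second, and more importantly, the claim that each item is ``charged to at most $O(1)$ scales'' is false. Fix a cheap, low-rank item $j$: with the items sorted by price and $r_j$ denoting the rank of $j$, item $j$ is purchased at scale $\ell$ exactly when $\ell\ge r_j/t$ (so it is within the $t\ell$-item prefix) and $p_j\le 1+k/\ell$ (so it is below the cap). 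If $p_j=1$ and $r_j\le 2t$ this interval covers all of $L$, so the item is relevant at $\Theta(\log k)$ scales, not $O(1)$. The telescoping you invoke simply does not happen item by item, and without it the naive sum over scales only gives $O(tk\log k)$, which after dividing by $|L|=\Theta(\log k)$ is the same trivial $O(tk)$ you started with.

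What the paper does instead, and what actually formalizes your tension, is to identify a single pivotal scale $\ell^*\defeq\max\{\ell\in L: p_{t\ell}\le 1+k/\ell\}$ and split the scales into $\ell<\ell^*$, $\ell=\ell^*$, $\ell>\ell^*$. For $\ell<\ell^*$ the buyer buys at most $t\ell$ items each priced at most $1+k/\ell^*$, and the geometric sum $\sum_{\ell<\ell^*}\ell\le\ell^*$ gives a bound of $t(k+\ell^*)$. For $\ell>\ell^*$ there are at most $t\ell^*$ items cheap enough to be bought, each contributing at most $1+k/\ell$, and the geometric sum $\sum_{\ell>\ell^*}k/\ell\le 2k/\ell^*$ gives a bound of $t\ell^*|L|+tk$. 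Summing and dividing by $|L|=\tfrac12\log k$ yields $O(tk/\log k)$ using $\ell^*\le\sqrt k$. The key point is that the bound at any one scale is \emph{not} per-item; it couples the set-size cap ($t\ell$ or $t\ell^*$) with the per-item price cap at the complementary extreme, and only the product of those two quantities has the right geometric decay. Your sketch does not discover this pivot, and the ``main obstacle'' paragraph correctly identifies exactly the step that is missing, so as written this is an idea rather than a proof.
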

\noindent These will enable us to reason about item-pricing mechanisms in multi-buyer settings for this instance.

%%%%%%%%%%%%
\paragraph{Revenue and allocation for multi-buyer item pricings.}

\Cref{c:xos-lb-A-items} implies that no buyer facing an item pricing will choose an $A$-type allocation unless they receive a large proportion of their $A$-set $A_i$. We will now argue that with high probability only a few buyers can receive large proportions of their $A$-sets (and therefore choose a $A$-type allocations), even if the mechanism is given access to $A_1, \ldots, A_n$ upfront.

We will make use of the abstraction of an \emph{assignment}, which is an arbitrary mapping of items to buyers.
\begin{definition} \label{def:item-assn}
    Let $\sigma: [m] \rightarrow [n] \cup \{\bot\}$ be an \emph{assignment} of items to buyers.
\end{definition}
Note that every multi-buyer mechanism---item-pricing or otherwise---induces a distribution over assignments, where $\sigma^{-1}(\bot)$ are the items not allocated to any buyer. We will call an allocation \emph{feasible} for a buyer if it allocates them at least a $(1-c)$-proportion of their $A_i$:
\begin{definition} \label{def:feasible-assn}
    A buyer $i \in [n]$ is \emph{$(1-c)$-feasible under} $\sigma$ if $|A_i \cap \sigma^{-1}(i)| \geq (1-c)n$. For $I \subseteq [n]$, $I$ is \emph{$(1-c)$-feasible under} $\sigma$ if all $i\in I$ are $(1-c)$-feasible under $\sigma$.
\end{definition}

Using these definitions we present the following combinatorial lemma, which together with \Cref{c:xos-lb-A-items} implies that for any multi-buyer mechanism which presents buyers with item pricings, the number of buyers with $A$-type allocations is small with high probability:

\begin{restatable}{lemma}{xosfourthlemma}\label{lem:max-t-feasible-num-buyers}
    Let $I_{max}(v)$ be the largest $I \subseteq [n]$ for which $I$ is $(1-\frac{1}{t+1})$-feasible under $\sigma$, for \emph{any} assignment $\sigma$. 
    Then for some $c_3$ and sufficiently large $k$, $\xpectover{v}{|I_{max}(v)|} \leq c_3 \cdot \frac{n}{t}$.
\end{restatable}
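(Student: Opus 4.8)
The plan is to combine linearity of expectation over the possible sizes of $I_{max}(v)$ with a union bound over candidate feasible sets and a sharp concentration estimate on the union of buyers' $A$-sets. Throughout write $k=\sqrt m=n$ and $\ell\defeq\lceil(1-\tfrac{1}{t+1})n\rceil=\lceil\tfrac{t}{t+1}k\rceil$, so that a buyer is $(1-\tfrac1{t+1})$-feasible under an assignment $\sigma$ exactly when it is assigned at least $\ell$ of its own $A$-items. The first step is a clean \emph{necessary} condition for feasibility of a set $I$: if $I$ is $(1-\tfrac1{t+1})$-feasible under some $\sigma$, then the sets $\{A_i\cap\sigma^{-1}(i)\}_{i\in I}$ are pairwise disjoint subsets of $U_I\defeq\bigcup_{i\in I}A_i$, each of size at least $\ell$, hence $|U_I|\ge|I|\cdot\ell$. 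Consequently the event $\{|I_{max}(v)|\ge s\}$ is contained in the event that \emph{some} $I$ with $|I|=s$ satisfies $|U_I|\ge s\ell$, and it suffices to bound the probability of the latter for each $s$ and sum over $s$.

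The heart of the argument is estimating $\Pr[|U_I|\ge s\ell]$ for a fixed $I$ of size $s$. Since each $A_i$ is a uniformly random $k$-subset of a universe of size $m=k^2$, an item belongs to a given $A_i$ with probability $1/k$, independently over $i$, so $\mathbb{E}|U_I|=k^2\bigl(1-(1-1/k)^s\bigr)$. Expanding the binomial to second order shows that once $s\ge s_0\defeq 4k/t$ (and $s\le k$), the demand $s\ell$ overshoots $\mathbb{E}|U_I|$ by a gap $\Delta_s\defeq s\ell-\mathbb{E}|U_I|=\Omega(k^2/t^2)$: informally, the union of $s$ random $k$-subsets is typically $\approx sk-\binom s2$, a $\Theta(1/t)$-fraction short of $s\ell\approx sk(1-1/t)$ as soon as $s\gg k/t$; one also checks $\Delta_s$ is increasing in $s$ on $[s_0,k]$, so $\Delta_s\ge\Delta_{s_0}$ there. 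Writing $|U_I|=m-\sum_{j\in[m]}\mathbbm{1}[j\notin U_I]$ expresses $|U_I|$ as $m$ minus a sum of \emph{negatively associated} indicator variables (membership indicators of a fixed-size random subset are negatively associated, and this is preserved under the product over the independent $A_i$'s), so a Chernoff lower-tail bound yields
\[\Pr\bigl[\,|U_I|\ge s\ell\,\bigr]=\Pr\Bigl[\textstyle\sum_{j}\mathbbm{1}[j\notin U_I]\le \mathbb{E}\bigl[\textstyle\sum_j\mathbbm{1}[j\notin U_I]\bigr]-\Delta_s\Bigr]\le \exp\!\left(-\frac{\Delta_s^2}{2\,\mathbb{E}\bigl[\sum_j\mathbbm{1}[j\notin U_I]\bigr]}\right)\le \exp\!\left(-\Omega(k^2/t^4)\right),\]
using $\mathbb{E}\bigl[\sum_j\mathbbm{1}[j\notin U_I]\bigr]\le k^2$ (and interpreting the bound as $0$ when $\Delta_s$ exceeds this mean).

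Finally, I would union-bound over the $\binom ks\le 2^k$ choices of $I$ of size $s$. Since $t=\sqrt{\log k}$ we have $k^2/t^4=k^2/(\log k)^2\gg k$, so the factor $2^k$ is absorbed and $\Pr[|I_{max}(v)|\ge s]\le \exp(-\Omega(k^2/t^4))$ for every $s\ge s_0$ and $k$ large. Summing over all $s\le n$,
\[\mathbb{E}\bigl[|I_{max}(v)|\bigr]=\sum_{s=1}^{n}\Pr\bigl[\,|I_{max}(v)|\ge s\,\bigr]\le s_0+n\cdot\exp\!\left(-\Omega(k^2/t^4)\right)\le \frac{4k}{t}+o(1)\le \frac{5n}{t}\]
for $k$ sufficiently large, which is the claim with $c_3=5$.

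The step I expect to be the main obstacle is the concentration bound: the deviation $\Delta_s$ we need is only a lower-order correction, a $\Theta(1/t)$-fraction of the mean, and the obvious martingale approach — revealing the $A_i$'s one at a time, each of which can move $|U_I|$ by as much as $k$ — gives only $\exp(-\Omega(k/t^3))$, too weak to survive the $\binom ks$ union bound (one has $\log\binom ks\approx \tfrac{k\log t}{t}\gg \tfrac{k}{t^3}$). The key is therefore to exploit that $|U_I|$ is really a sum of $m=k^2$ \emph{tiny} negatively associated indicators rather than $s$ large blocks, which upgrades the tail bound to $\exp(-\Omega(k^2/t^4))$; a secondary point needing care is the second-order binomial expansion establishing $\Delta_s=\Omega(k^2/t^2)>0$ and its monotonicity in $s$, since the first-order terms cancel.
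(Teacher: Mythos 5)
Your proposal is correct and follows essentially the same route as the paper: the combinatorial necessary condition $|U_I|\ge |I|\cdot\ell$, reduction to a concentration bound on a sum of negatively associated item‑membership indicators, a Chernoff/Hoeffding tail over $m=k^2$ indicators (which, as you correctly flag, is what beats the $2^k$ union bound — the paper likewise uses the NA property of the $k^2$ indicators rather than any per-buyer martingale), a union bound over subsets $I$, and finally a threshold $s_0=\Theta(k/t)$ plus negligible tail to bound $\mathbb{E}|I_{max}|$. The only cosmetic differences are that you sum $\Pr[|I_{max}|\ge s]$ over $s$ while the paper fixes a single threshold $Cn$, and you apply a Chernoff lower tail to $\sum_j\mathbbm 1[j\notin U_I]$ while the paper applies Hoeffding to $\sum_j\mathbbm 1[j\in U_I]$; both yield the same $\exp(-\Omega(k^2/\log^2 k))$ bound.
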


We are now ready to prove \Cref{t:xos-lb-offline}.
\begin{proof}[Proof of \Cref{t:xos-lb-offline}]
    The proof is intuitive: with high probability, no mechanism can allocate $A$-sets to more than a $(1/t)$-proportion of the buyers and get revenue $k$ from each, and since it presents buyers with item pricings, its revenue on the remaining buyers is at most $tk/\log k$ each. Then, $t=\sqrt{\log k}$ balances these terms, resulting in the gap.
    
    Formally, fix an ex post randomized item pricing mechanism $\mech  \in \expostitempricing$. Then $\mech$ is a collection of randomized item pricings $\mech_{i,v_{-i}}$, and the expected revenue is 
    \begin{align*}
        \rev(\D,\mech ) &= \xpectover{v}{ \sum_i \left( \xpectover{p_i\sim \mech_{i,v_{-i}}}{\rev(v_i, p_i)} \right) } \\
        &= \xpectover{v}{ \sum_i \left( \xpectover{p_i\sim \mech_{i,v_{-i}}}{\xpectover{v_i}{\rev(v_i, p_i)}} \right)}, 
        \intertext{because $\mech_{i, v_{-i}}$ is independent of $v_i$. Then conditioning on the events $\E^A \defeq \E^A(v_i,p_i)$ and $\E^B\defeq \E^B(v_i,p_i)$,}
        &= \xpectover{v}{ \sum_i \left( \xpectover{p_i\sim \mech_{i,v_{-i}}}{\xpectover{v_i}{\rev(v_i, p_i) \mid \E^A} \cdot \probover{v_i}{\E^A} +  \xpectover{v_i}{\rev(v_i, p_i) \mid \E^B} \cdot \probover{v_i}{\E^B} } \right) } \\
        &\leq \xpectover{v}{ \sum_i \left( \xpectover{p_i\sim \mech_{i,v_{-i}}}{c_1 \cdot k \cdot \probover{v_i}{\E^A(v_i,p_i)} + c_2 \cdot \frac{tk}{\log k} } \right)} \\
        &= c_2 \cdot \frac{ntk}{\log k} + c_1 \cdot k \cdot \xpectover{v}{ \sum_i \left( \xpectover{p_i\sim \mech_{i,v_{-i}}}{ \probover{v_i}{\E^A(v_i,p_i)} } \right)},
        \intertext{for some constants $c_1, c_2$, by \Cref{c:xos-lb-A-rev,c:xos-lb-B-rev}. Next, observe that for fixed $v,p$ the allocations $\alloc(v_i, p_i)$ form an assignment $\sigma$, since $\mech$ is a feasible mechanism. Let $I(v, p)$ be the set of buyers for which this assignment is $(1-\frac{1}{t+1})$-feasible. Then \Cref{c:xos-lb-A-items} implies that $\E^A(v_i,p_i) \subseteq \{i \in I(v,p)\}$, so}
        &\leq c_2 \cdot \frac{ntk}{\log k} + c_1 \cdot k \cdot \xpectover{v}{ \sum_i \left( \xpectover{p_i\sim \mech_{i,v_{-i}}}{ \probover{v_i}{i \in I(v,p)} } \right)} \\
        &= c_2 \cdot \frac{ntk}{\log k} + c_1 \cdot k \cdot \xpectover{v}{ \xpectover{p\sim \mech}{ |I(v,p)|}}.
    \end{align*}
    Next let $I_{max}(v)$ be a maximum-size $(1-\frac{1}{t+1})$-feasible subset of the buyers $v$ over \emph{any} assignment $\sigma: [m] \rightarrow [n]\cup \{\bot\}$, and note that $|I(v,p)| \leq |I_{max}(v)|$. Then by \Cref{lem:max-t-feasible-num-buyers} we have that $\xpectover{v}{|I_{max}|} \leq c_3 n/t$ for some constant $c_3$. Therefore
    \begin{align*}
        \rev(\D,\mech) &\leq c_2 \cdot \frac{ntk}{\log k} + c_1 \cdot k \cdot \xpectover{v}{|I_{max}(v)|} \\
        &\leq c_2 \cdot \frac{ntk}{\log k} + c_1 \cdot k \cdot \frac{c_3 \cdot n}{t}.
    \end{align*}
    By our choice of parameters $n=k=\sqrt{m}$ and $t = \sqrt{\log k}$ this implies that $\rev(\D,\mech ) = O(m \log^{-1/2}m) $, while by \eqref{eq:xos-lb-EA-rev} the ex ante item pricing revenue for $\D$ is at least $m$. This holds for all $\mech \in \expostitempricing$, completing the proof.
\end{proof}

\section{Related Work} 
\label{sec:related}

\paragraph{Multi-parameter Revenue Maximization.} As mentioned previously, multi-parameter revenue maximization is a notoriously hard problem. \citet{BCKW-JET15} and \citet{hart2013menu} showed that the optimal revenue is inapproximable within any finite factor by simple mechanisms (such as item pricings) even with just two items and one unit-demand or additive buyer. Much of the literature on simplicity versus optimality for revenue maximization accordingly makes strong assumptions on the buyers' value distributions, such as requiring the values to be independent across items. Under these assumptions, mechanisms such as sequential item pricing,  grand bundle pricing and two part tariffs obtain constant factor approximations to the optimal revenue. See, e.g.,
\cite{chawla2015power,babaioff2014simple,rubinstein2018simple,10.1145/3055399.3055465, cai2019duality}.

There are several connections between this literature and our work. \citet{chawla2010multi} showed that under the item independence assumption, the revenue maximization problem for {\em unit demand} buyers under various feasibility constraints can be reduced to a single-parameter prophet inequality. They were the first to use an ex ante relaxation in multi-buyer settings, a technique that was later formalized by \citet{doi:10.1137/120878422}. \citet{chawla2016mechanism} extend a similar approach based on the ex ante relaxation and prophet inequalities to matroid rank values, still with an independence-across-items assumption.

It is worthwhile contrasting our main results with those of \citet{10.1145/3055399.3055465}. They consider buyers with subadditive values and bound the gap between the revenue of a sequential two-part tariff mechanism and the ex ante relaxation. There are several similarities and differences from our work: (1) Like the papers referenced above, Cai and Zhao require values to be independent across items, a significant restriction. Our results do not require any distributional assumptions. (2) Under the independence assumption however, they are able to compete against the ex ante optimal revenue, where our approximations only apply to the ex ante buy-many revenue, a weaker benchmark. (3) Cai and Zhao's mechanism is a two-part tariff, meaning that every buyer needs to pay an {\em entry fee} to enter the mechanism and is then offered an item pricing. Two-part tariffs are necessary to obtain constant factor approximations to the optimal revenue. (4) Finally, Cai and Zhao obtain a constant factor gap for XOS valuations and an $O(\log m)$ gap for subadditive valuations, whereas we obtain an $O(\log^2 m)$ gap for both. 
The $O(\log m)$ factor for subadditive values was subsequently improved to an $O(\log\log m)$ factor by \citet{doi:10.1137/20M1382799} via the construction of a new prophet inequality for welfare maximization over subadditive values.

\paragraph{Buy-Many Mechanisms.} Another line of work on simple versus optimal multi-parameter mechanisms considers approximating a weaker benchmark, namely \emph{buy many} mechanisms, rather than introducing assumptions on the value distributions. This line of work was introduced by \citet{BCKW-JET15} for unit-demand buyers and extended to general buyers by \citet{chawla2022buy}.  These papers consider the single-buyer mechanism design problem and bound the gap between the revenue of item pricing and the optimal buy many mechanism. This gap is shown to match the best possible gap achievable by {\em any} mechanism with subexponential description complexity. \citet{chawla23buy} were the first to consider buy-many mechanisms in multi-buyer settings. They argue that any reasonable extensions of buy-many mechanisms to multiple buyers are upper bounded by ex ante buy-many revenue. As mentioned previously, this work bounds the gap between sequential item pricing revenue and the ex ante buy-many revenue, a result we extend to subadditive buyers.

\paragraph{Prophet inequalities and contention resolution.}
Prophet inequalities were first introduced to mechanism design by \citet{10.5555/1619645.1619656}. \citet{chawla2010multi} showed how to apply them to revenue maximization settings and motivated the design of prophet inequalities for more general settings. Further work on prophet inequalities has considered variants with different feasibility constraints, arrival models, limited information settings, buyer correlations, etc. We refer the reader to the survey by \citet{10.1145/3144722.3144725} for an overview of this work.

As mentioned previously, single-parameter prophet inequalities have strong connections with Online Contention Resolution Schemes. Contention Resolution Schemes were first defined by \citet{vondrak11submodular} in the context of offline maximization of set functions under various feasibility constraints as an abstraction for rounding the fractional relaxations of these objectives. Subsequently \textcite{feldman16online} defined {\em Online} CRSes as analogous online rounding primitives for a range of online adversaries. \citeauthor{feldman16online} observed that good OCRSes imply good prophet inequalities for combinatorial constraints and linear objectives. Using duality, \textcite{lee18optimal} showed that the converse also holds -- good prophet inequalities imply good OCRSes.

\paragraph{Combinatorial prophet inequalities.}
Prophet inequalities have also been studied extensively in the multi-item setting for the social welfare objective where buyers have combinatorial valuations over the items. This line of work was introduced by \citet{doi:10.1137/1.9781611973730.10}. Inspired by the matroid prophet inequality of \citet{10.1145/2213977.2213991}, \citeauthor{doi:10.1137/1.9781611973730.10} introduced the approach of using {\em balanced prices} to design prophet inequalities. This has become the dominant technique for designing combinatorial prophet inequalities (see, e.g., \cite{doi:10.1137/20M1323850, doi:10.1137/20M1382799, doi:10.1137/20M1323850}), although some constructions use duals or other approaches for constructing  prices \cite{ZHANG2022143, 10.1145/3564246.3585151}. We remark that all of these approaches are quite different from OCRSes in that they do not necessarily preserve the probabilities of allocation of the ex ante relaxation. Finally, we note that for the social welfare objective over subadditive values, \citet{10.1145/3564246.3585151} recently developed a constant factor prophet inequality. 
\section{Conclusion and Open Problems}\label{sec:conclusion}

Our work develops polylogarithmic upper and lower bounds on the ratio between the revenue obtained by sequential item pricings and the ex ante item pricing (or buy many) revenue. Although the $O(\log m)$ gap between the sequential item pricing revenue and the ex ante buy-many revenue is tight for unit-demand valuations, it is not clear whether an additional $\log m$ factor is necessary to lose for XOS or subadditive values. 

Another important direction is to make our result algorithmic. Constructing our item pricing requires access to the optimal solution to the ex ante item pricing relaxation. While optimizing item pricing revenue is hard even in the absence of ex ante constraints, {\em approximations} may be possible under suitable assumptions. Another possible avenue is to first construct an approximately optimal ex ante buy-many mechanism and then convert it into an approximately optimal ex ante item pricing. The complexity of optimizing revenue over buy-many mechanisms is not well understood. Furthermore, we note that the upper bound on the gap between the ex ante item pricing and buy many revenues (due to \cite{chawla23buy}) is non constructive. 

\newpage
\bibliographystyle{plainnat}
\bibliography{references} 

\newpage
\appendix
\section{Mechanism Design Basics}
\label{sec:PrelimMD}

In this section we cover the necessary background on mechanism design that is required for this paper. We consider a setting with $m$ items and $n$ buyers with combinatorial valuations over the items, with the objective of maximizing the total revenue of the seller. We begin by establishing some notation.

\paragraph{Buyers and valuations.} A \textit{buyer} is modeled through a \textit{valuation function} $v:2^{[m]}\rightarrow \Rplus$ that is drawn from a known distribution $\D$ and assigns a non-negative value to each subset of the $m$ items. Some of our results reference special classes of valuations that we define below.
\begin{itemize}
    \item Unit-Demand: A valuation $v$ is \emph{unit-demand} if 
    $v(S) = \max_{j \in S} v(\{j\})$ for all $S \subseteq [m]$. For convenience, we write $v(\{j\})$ as simply $v_j$.
    \item Gross Substitutes: Let $p$ and $p'$ be two item pricings with $p\preceq p'$ and let $S=\{j: p_j=p'_j\}$ denote the set of items where the two pricings charge equal prices. We say that a value function $v$ satisfies \emph{gross substitutes} if for all such pricings $p$ and $p'$, $\alloc(v,p)\cap S\subseteq \alloc(v,p')\cap S$. In other words, under the new pricing $p'$ the buyer continues to purchase those items in $\alloc(v,p)$ whose prices did not increase.
    \item Fractionally Subadditive (XOS): A valuation $v$ is \emph{fractionally subadditive} if there exists a set $\mathcal{A}$ of $m$-dimensional vectors $a \in \Rplus^m$, such that $v(S) = \max_{a \in \mathcal{A}} \sum_{j \in S} a_j$ for all $S\subseteq [m]$.  
    \item Subadditive: A valuation $v$ is \emph{subadditive} if $v(S) + v(T) \geq v(S \cup T)$ for all $S, T \subseteq [m]$.  
    \item Monotone: A valuation $v$ is \emph{monotone} if $v(S) \leq v(T)$ for all $S\subseteq T \subseteq [m]$.
    % %
    % \item General: $v$ is any valuation $v:2^{[m]}\rightarrow \Rplus$. 
\end{itemize}
Observe that we have $\text{Unit-Demand}\subsetneq \text{Gross Substitutes} \subsetneq \text{XOS}\subsetneq \text{Subadditive}\subsetneq \text{Monotone}$.

Finally, for any valuation function $v$ and a subset  $S\subseteq [m]$ of items, we denote by $v_{|S}$ the valuation function given by $v_{|S}(T)=v(T\cap S)$ for all $T\subseteq [m]$;
analogously, we denote by $\D_{|S}$ the value distribution that first samples a valuation $v\sim \D$, and then returns $v_{|S}$. In other words, $\D_{|S}$ captures the valuation of buyer $\D$ if we can only offer them items in $S$.

\paragraph{Bayesian Incentive Compatible Mechanisms.} 

In the multi-buyer setting, we will use $\D_i$ to denote the value distribution of buyer $i$ and $\D=\D_1\times\cdots\times\D_n$ to denote the joint value distribution of all $n$ buyers. Note that each buyer draws his value independently from his respective distribution. For $i\in [n]$, $v_i$ denotes the value function of buyer $i$, $v$ denotes the joint value function of all $n$ buyers, and $v_{-i}$ the joint value function of all buyers other than $i$.

A multi-buyer mechanism maps the joint value function $v$ to an outcome $\all(v)$ and a payment $\pay(v)$ where $\all_i(v)$ denotes the set of items allocated to buyer $i$ and $\pay_i(v)$ denotes the corresponding payment made by buyer $i$. A mechanism $(\all,\pay)$ is {\em incentive compatible} if no buyer can benefit from misreporting his value regardless of others' values. In other words,  if for all $i$, $v_i$, $v'_i$, $v_{-i}$, it holds that 
\[v_i(\all_i(v_i,v_{-i}))-\pay_i(v_i, v_{-i}) \ge v_i(\all_i(v'_i,v_{-i}))-\pay_i(v'_i, v_{-i}) \]
A {\em Bayesian incentive compatible} (BIC) mechanism $(\all,\pay)$ is a pair for which the above inequality holds for all $i$, $v_i$, and $v'_i$ in expectation over $v_{-i}\sim\D_{-i}$. The revenue of a multi-buyer BIC mechanism $M=(\all,\pi)$ is given by $\rev(\D,M):=\expect_{v\sim\D}[\sum_{i\in [n]} \pi_i(v)]$. We use $\rev(\D) := \max_{\text{BIC mechanism } M} \rev(\D,M)$ to denote the optimal revenue achievable by any BIC mechanism over the joint distribution $\D$.

\paragraph{Single-Buyer Mechanisms as Pricings.} From the viewpoint of a single buyer $i$, any given mechanism $M=(\all,\pay)$ generates as a function of the buyer's valuation $v_i$ a distribution over outcomes $(\all(v_i,\cdot), \pi(v_i,\cdot))$. In the following discussion of single parameter mechanisms, where clear from context, we will drop the subscript $i$ to simplify notation. By the \textit{taxation principle}, we can interpret this single buyer mechanism as a (possibly random) price menu $\price :\Delta^{2^{[m]}}\rightarrow \mathbb{R}_+$ that maps each distribution\footnote{For a groundset $X$, we use the notation $\Delta^X$ to denote the set of all probability distributions over $X$.} over subsets of $[m]$, a.k.a. a lottery, to a price the buyer must pay to receive it. We will assume without loss of generality that $\price(\emptyset)=0$ with probability $1$. Under this price menu, the buyer's utility for a (random) set or lottery $S\subseteq [m]$ of items is given by 
\[
    \util(v,p,S) \defeq \operatorname{E}_S\bigg[ v(S) - p(S)\bigg].
\]
The buyer purchases their utility maximizing random set, and the seller collects the associated revenue:
\begin{align*}
    \alloc(v,p) &\defeq \argmax_{S\in\Delta^{2^{[m]}}}\util(v,p,S), \\
    \rev(v,p) &\defeq p\bigg( \alloc(v,p) \bigg).
\end{align*}
In the case of ties, we assume that the buyer purchases any set of maximal price from among the ones that maximize their utility. For simplicity we will overload notation and also let $\alloc(v,p)$ denote the indicator vector of the set purchased by the buyer,
and further abuse notation by calling the average such indicator vector the \emph{expected allocation}, and denoting it and the expected revenue of the mechanism $\price$ by 
\begin{align*}
    \alloc (\D,\price) &\defeq \expect_{p,v\sim \D}[\alloc(v,p)],\\
    \rev(\D,\price) &\defeq  \expect_{p,v\sim \D}[\rev (v,p)].
\end{align*}
Note that the $j^{th}$ coordinate of  $\alloc (\D,\price)$ is the probability that a buyer with valuation $v\sim\D$ purchases item $j$ under pricing $\price$; we will denote this probability by $\alloc_j(\D,\price)$.

\paragraph{Item Pricings and Sequential Item Pricings. } For much of this work we will focus on \emph{item pricing} mechanisms. For a single buyer, we represent an item pricing as a (potentially random) vector $\price\in \mathbb{R}_+^{m}$ mapping individual items to prices, and extend it to sets additively by letting $\price(S)\defeq \sum_{j\in S} p_j$ for $S\subseteq [m]$. For a single buyer, we denote by $\itempricing$ the class of all randomized item pricing mechanisms, meaning that $p\in\R^{m}_+$ may be chosen at random.

In the multi-buyer setting we will further consider {\em Sequential Item Pricing} mechanisms. These mechanisms interact with buyers in a fixed order\footnote{Our results work for an arbitrary given order over buyers, although in some contexts it may be beneficial for the seller to choose a particular order.}. Let $S_i$ denote the set of items left over after the mechanism has interacted with buyers $1$ through $i-1$. When buyer $i$ arrives, the mechanism presents an item pricing over items $S_i$. The buyer instantiates his value and purchases his favorite bundle of items under the pricing, and then the mechanism moves on to the next buyer. The pricing presented to buyer $i$ may depend on the instantiations of values of buyers $1$ through $i-1$ as well as the set $S_i$, but does not depend on the values of buyers $i\cdots n$. 

\paragraph{Buy Many Mechanisms.} In our results, we will also briefly refer to single-buyer \emph{buy-many} mechanisms as defined by \textcite{chawla2022buy}. Informally, a price menu $p$ is buy-many if for every adaptive buyer strategy for sequentially purchasing a sequence of lotteries, there exists a single lottery in the menu that is cheaper and allocates more items to the buyer. We refer the reader to \cite{chawla2022buy} for a formal definition.

\section{General Monotone Valuations}\label{sec:monotone}
In this section, we focus on general monotone valuation functions. We propose a sequential item pricing to achieve an $O(\min\{n, m^2\})$ approximation to ex-ante item pricing revenue, and we show that there are no item pricing mechanisms that achieve a revenue of $o(\min\{n,\sqrt{m}\})$ of ex-ante revenue.

\subsection{Upper Bound}

\begin{theorem}\label{thm:monotone-ub} Let $\D$ be any joint distribution for $n$ buyers and $m$ items over general monotone valuation functions. Then 
 \[\earev(\D,\itempricing) \leq \min\{n, 4m^2\} \cdot \rev(\D, \seqitempricing)\]
\end{theorem}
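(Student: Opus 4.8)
I would prove the two bounds $\earev(\D,\itempricing)\le n\cdot\rev(\D,\seqitempricing)$ and $\earev(\D,\itempricing)\le 4m\cdot\rev(\D,\seqitempricing)$ separately and take the smaller of the two; the second is in fact already stronger than the claimed $4m^2$ bound. Throughout, fix an optimal ex ante solution, i.e.\ allocation vectors $\{x^*_i\}_{i=1}^n\in\polytope$ together with per-buyer distributions $\M^*_i$ over item pricings such that $\alloc(\D_i,\M^*_i)\preceq x^*_i$ and $\earev(\D,\itempricing)=\sum_{i}\rev(\D_i,\M^*_i)$, where as usual these quantities are averaged over both $v_i\sim\D_i$ and $p_i\sim\M^*_i$.

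The $n$-bound holds for arbitrary (not necessarily monotone) valuations. Dropping the ex ante constraint only enlarges the feasible set, so $\rev(\D_i,\M^*_i)\le\rev(\D_i,\itempricing)$, the unconstrained optimal single-buyer item-pricing revenue; letting $i^\star=\argmax_i\rev(\D_i,\itempricing)$, the sequential item pricing that posts prohibitive prices to every buyer except $i^\star$ and offers $i^\star$ an optimal single-buyer item pricing over all $m$ items (still available, since nothing has been sold) earns $\rev(\D_{i^\star},\itempricing)\ge\tfrac1n\sum_i\rev(\D_i,\itempricing)\ge\tfrac1n\earev(\D,\itempricing)$.

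The $\bigoh(m)$-bound is the crux, and the idea is to \emph{replay} the ex ante solution but throttle it so that the entire item set survives until it is used. Consider the sequential item pricing that, when buyer $i$ arrives, skips $i$ if $S_i\neq[m]$ (some item has already been sold), and otherwise, independently with probability $\tfrac1{2m}$, samples $p_i\sim\M^*_i$ and presents $p_i$ over all $m$ items (with the remaining probability it skips $i$). Whenever a pricing is actually presented every item is available, so buyer $i$ purchases exactly $\alloc(v_i,p_i)$ and pays $\rev(v_i,p_i)$, precisely as in the ex ante solution. Buyer $i'$ can sell items only on the event that it is served, which has probability at most $\tfrac1{2m}$, and on that event it sells $|\alloc(v_{i'},p_{i'})|$ items; hence the expected number of items sold before buyer $i$ arrives is at most $\tfrac1{2m}\sum_{i'<i}\sum_{j=1}^m\alloc_j(\D_{i'},\M^*_{i'})\le\tfrac1{2m}\sum_{j=1}^m\sum_{i'}x^*_{i'j}\le\tfrac1{2m}\cdot m=\tfrac12$, using ex ante feasibility $\sum_{i'}x^*_{i'j}\le1$; Markov's inequality then gives $\pr[S_i=[m]]\ge\tfrac12$. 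Since buyer $i$'s valuation and pricing are independent of the event $\{S_i=[m]\}$ and of its own serving coin, the mechanism's expected revenue is $\sum_i\pr[S_i=[m]]\cdot\tfrac1{2m}\cdot\rev(\D_i,\M^*_i)\ge\tfrac1{4m}\earev(\D,\itempricing)$.

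Taking the better of the two sequential item pricings yields $\earev(\D,\itempricing)\le\min\{n,4m\}\cdot\rev(\D,\seqitempricing)$, which implies the stated $\min\{n,4m^2\}$ bound (and in fact sharpens it). The one point requiring care -- and the reason we must present the \emph{entire} pricing $p_i$ and only when $S_i=[m]$ -- is that for general monotone, hence possibly complementary, valuations, removing even a single item from the menu offered to buyer $i$ can collapse her purchase entirely, so unlike the subadditive case we cannot isolate a single high-revenue item or sell over a proper subset of the items; offering $p_i$ over the full set is exactly what makes buyer $i$'s behavior identical to the ex ante solution, while the $\Theta(1/m)$ throttle is what keeps the full set intact with constant probability. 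In particular, no appeal to the Revenue Recovery machinery or to the small/medium/large price split of Section~\ref{sec:subadd-theorem} is needed here.
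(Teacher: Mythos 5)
Your proof is correct, and the bound you obtain for the second case, $\earev(\D,\itempricing)\le 4m\cdot\rev(\D,\seqitempricing)$, is actually \emph{stronger} than the paper's $4m^2$. The two arguments diverge as follows. For the $n$ bound you and the paper do essentially the same thing: serve a single buyer with an optimal single-buyer item pricing over the full item set (you deterministically serve the best buyer; the paper serves a uniformly random one; both lose a factor $n$). For the $\bigoh(m)$-type bound, the paper and you part ways. The paper first picks the single item $j^*$ whose contribution to $\earev(\D,\itempricing)$ is at least a $1/m$ fraction, prices every item uniformly at $p_{ij^*}/m$ per item, and invokes monotonicity to argue the buyer purchases at least one item whenever they would have purchased $j^*$ in the ex ante solution; the resulting revenue is at least $p_{ij^*}/m$ per sale. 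Isolating one item costs a factor $m$, and charging $p_{ij^*}/m$ per item costs another factor $m$, giving $4m^2$. Your argument avoids both losses: instead of isolating an item you replay the \emph{entire} ex ante pricing $p_i\sim\M^*_i$ over the full item set, and instead of scaling prices you scale the serving probability by $1/(2m)$. Since the expected total number of items purchased under the ex ante solution is at most $m$, a $1/(2m)$ throttle plus Markov gives $\pr[S_i=[m]]\ge 1/2$, and because $p_i$ is presented over all of $[m]$ whenever it is presented at all, each served buyer contributes exactly its full ex ante revenue $\rev(\D_i,\M^*_i)$. This loses only the factor $4m$ overall. Your discussion at the end of why the throttle must keep the \emph{entire} item set intact (because general monotone values may be complementary) correctly identifies the key obstruction that the subadditive/GS machinery of the rest of the paper cannot overcome here; both your proof and the paper's respect it, by presenting pricings only when $S_i = [m]$.
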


\begin{proof}
    Let $(x^*,p^*)$ be the optimal solutions achieving $\earev(\D,\itempricing)$, with $x^*_i$ denoting the ex ante allocation constraint for buyer $i$ and $p^*_i$ denoting the corresponding distribution over item pricings. For convenience, let \[\obj =\earev(\D,\itempricing) \defeq \sum_{i=1}^n \rev (\D_i, p^*_i).\]

    We will first show how to obtain the $O(n)$ upper bound. This is in fact trivial. We first sample a buyer $i\in [n]$ uniformly at random, and then present them with an item pricing that is instantiated from $p^*_i$, while ignoring all other buyers $i'\neq i$. This will recover a revenue of precisely $\rev (\D_i, p^*_i)$ and then the proof is completed from the fact that we chose buyer $i$ uniformly at random and thus we clearly obtain a $n$-approximation.
    
    Up next, we show how to obtain a $O(m^2)$ upper bound. Recall that since we are restricted to item pricings,
    \[\obj =  \xpectover{p^*}{\sum_{i=1}^n\sum_{j=1}^m p_{ij}\cdot \alloc_j(\D_i,p_i)} ,\]
    and thus, there exists some item $j^*\in [m]$ for which 
    \[\xpectover{p^*}{ \sum_{i=1}^n p_{ij^*}\cdot \alloc_{j^*}(\D_i,p_i)} \geq \frac{\obj}{m}\]

    For a fixed buyer $i$, consider the (random) uniform item pricing that sets the price of \textit{each} item $j\in [m]$ to $p_{ij^*}/m$, given a price $p_i$ that is instantiated from $p^*_i$. Observe that for any realization $v_i \sim \D_i$ where the ex ante buyer allocates to item $j^*$ under pricing $p_i$, the bundle being allocated (and by monotonicity, also the set of all items $[m]$) has value at least $p_{ij^*}$. Therefore, since under price vector $q_i$ the cost for buying all the items is precisely $p_{ij^*}$, buyer $v_i$ will buy at least one item (assuming that all items are available), as it would rather buy the entire bundle rather than buy nothing. 
    
    In effect, the probability of buyer $i$ buying at least one item under pricing $q_i$ is $\lambda_i \geq \alloc_{j^*}(\D_i, p_i)$ whenever all the items are available. Therefore, by rejecting to sell anything (via setting $q_i$ to be infinity) with probability $1-\frac{\alloc_{j^*}(\D_i, p_i)}{2\lambda_i}$, we can ensure that the buyer $i$ buys at least one item with probability $\frac{\alloc_{j^*}(\D_i, p_i)}{2}$, netting a revenue of at least $p_{ij^*}/m$ when doing so.

    Finally, we go through each buyer $i$ one-by-one, selling at price $q_i$ with rejection as defined if all $m$ items are available, and not selling anything if some item has already been sold. Notice that since we sample $p_i$ from $p^*_i$, the probability that a buyer $i$ doesn't purchase anything is precisely $x^*_{ij^*}/2$ under our pricing. To complete the argument, it remains to argue that whenever each buyer $i\in [n]$ arrives, there is a constant probability that all the items are available. As $\sum_{i=1}^n x^*_{ij^*} \le 1$, by an OCRS argument similar to~\Cref{t:ocrs_argument}, all items are available with probability at least $1/2$ upon reaching any buyer; hence, the total revenue extracted from this sequential item pricing is at least
    \[\expect_{p\sim p^*}\big[ \sum_{i=1}^n \frac{1}{2} \cdot \frac{p_{ij^*}}{m} \cdot \frac{\alloc_{j^*}(\D_i, p_i)}{2} \big]\ge \frac{\obj}{4m^2}.\]

    The proof is completed by comparing $n$ to $4m^2$ and picking the better sequential item pricing out of the two.
\end{proof}

\subsection{Lower Bound}

We now present an instance that shows one cannot achieve an approximation of factor better than $\min\{n,\sqrt{m}\}$ of ex-ante revenue through sequential item pricing for general monotone valuations.
\begin{theorem}\label{thm:monotone-lb}
    There exists a joint distribution $\D$ for $n$ buyers and $m$ items over general monotone valuations for which
\begin{align*}
    \earev(\D,\itempricing) &\geq \Omega\left(\min\{n, \sqrt{m}\} \right) \cdot \rev(\D, \seqitempricing).
\end{align*}
\end{theorem}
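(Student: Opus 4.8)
The plan is to build an instance that forces a multiplicative gap of $\Omega(\min\{n,\sqrt m\})$ between the ex ante item pricing revenue and any sequential item pricing revenue, mirroring the structure of the XOS lower bound (\Cref{t:xos-lb-offline}) but exploiting the extra freedom of general monotone (non-XOS) valuations to push the gap from $\sqrt{\log m}$ up to $\sqrt m$. Concretely, I would take $n = \sqrt m$ buyers (so that $\min\{n,\sqrt m\}=n=\sqrt m$), and have each buyer's valuation be supported on a uniformly random ``signature set'' $A_i \subseteq [m]$ of size $k \defeq \sqrt m$, together with a menu of low-value ``fallback'' bundles. The key qualitative feature I want is a sharp threshold: a buyer derives large value (hence pays a large price) from $A_i$ only if \emph{almost all} of $A_i$ is still available, and otherwise collapses to a fallback bundle worth only an $O(1/\sqrt m)$ fraction of the high value. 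For general monotone valuations I can make this threshold as sharp as I like — e.g. $v_i(T) = k$ if $A_i \subseteq T$ and $v_i(T)$ equal to a small constant (say $1$) times $|T \cap A_i|/k$ otherwise — which is monotone but not subadditive, and that is exactly what buys the extra power over the XOS construction.

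First I would establish the ex ante lower bound $\earev(\D,\itempricing) \ge \Omega(m)$: present every buyer the uniform unit pricing $p_{ij}=\epsilon$ for a tiny $\epsilon$, so that buyer $i$ buys all of $A_i$ at total price $\epsilon k$; since $A_i$ is a uniformly random $k$-subset and there are $n = k$ i.i.d. buyers, each item is sold in expectation $nk/m = 1$ time, so the ex ante supply constraint holds, and the per-buyer revenue is $\Theta(k)$ each — but wait, to actually get revenue $\Omega(m)$ I need the price on $A_i$ to be $\Theta(k)$, not $\Theta(\epsilon k)$. So instead I price each item at $p_{ij}=1$: then buyer $i$'s utility from $A_i$ is $v_i(A_i)-k$, which I arrange to be positive (set $v_i(A_i) = 2k$, say), while the fallback bundles give utility at most $O(1)$, so the buyer buys $A_i$ for revenue $k$, the supply constraint still holds exactly, and $\earev(\D,\itempricing) \ge nk = m$. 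The precise constants here are routine bookkeeping.

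Next I would upper bound $\rev(\D,\seqitempricing)$ — in fact, as in \Cref{t:xos-lb-offline}, I'd aim to upper bound the revenue of any mechanism in $\expostitempricing$, since that subsumes sequential item pricings. The core combinatorial fact is: for \emph{any} assignment $\sigma:[m]\to[n]\cup\{\bot\}$ of items to buyers, the number of buyers $i$ receiving a $(1-o(1))$-fraction of their random set $A_i$ is small with high probability. With $|A_i|=k=\sqrt m$ and $m$ items total, if a buyer needs nearly all $k$ of its items then a union/counting argument over the $\binom{m}{k}$ possible sets, combined with the fact that only $m$ items exist to distribute, should show that in expectation only $O(1)$ — or at worst $O(\sqrt m / \text{something})$ — buyers can be simultaneously satisfied; I want to show $\xpectover{v}{|I_{max}(v)|} = O(1)$, which would be the analogue of \Cref{lem:max-t-feasible-num-buyers} but with a much stronger bound because the threshold is now $(1-o(1))$ rather than $(1-\frac{1}{t+1})$. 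Every other buyer contributes at most the fallback revenue, which I design to be $O(k/\sqrt m) = O(1)$ per buyer (or more carefully, I may need it to be $o(k/\sqrt m)\cdot$something to make the arithmetic close); summing, $\rev(\D,\expostitempricing) = O(k) + O(n) = O(\sqrt m)$, against $\earev \ge m$, giving the claimed $\Omega(\sqrt m)$ gap. For the $\min\{n,\sqrt m\}$ form, when $n < \sqrt m$ I instead take only $n$ buyers with $|A_i|$ of size $\sim m/n$ (so the supply constraint still binds) and run the identical argument, yielding an $\Omega(n)$ gap.

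The main obstacle I anticipate is the combinatorial concentration lemma bounding $|I_{max}(v)|$: I need that over the random draw of $(A_1,\dots,A_n)$, no assignment — adversarially chosen with full knowledge of all $A_i$ — can satisfy more than a small number of buyers. Unlike the XOS case where buyers only needed a $(1-\frac1{t+1})$ fraction, here I want a near-perfect fraction, so I have to be careful that the fallback bundles do not accidentally let a buyer be ``satisfied'' cheaply, and that the threshold in the valuation is genuinely sharp enough that the revenue collapses the moment even a constant number of $A_i$'s items are missing. Getting the valuation definition, the fallback menu sizes, and the threshold to interlock so that (i) the ex ante solution extracts full value, (ii) the sharp-threshold lemma applies, and (iii) the off-threshold revenue is genuinely negligible, is the delicate part; the probabilistic estimate itself should follow from a first-moment bound over assignments once the setup is right.
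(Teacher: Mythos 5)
Your high-level intuition is right — exploit the sharp ``all-or-nothing'' threshold that general monotone (non-subadditive) valuations allow — but the specific construction you propose has a genuine gap, and the paper takes a substantially different and tighter route.

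The key lemma you hope to prove, that for $n = k = \sqrt{m}$ uniformly random $k$-subsets $A_1,\dots,A_n$ the expected maximum number of buyers that can simultaneously receive a $(1-o(1))$-fraction of their set under \emph{any} assignment is $O(1)$, is false. Even in the cleanest version where each buyer demands \emph{all} of $A_i$, this reduces to finding a maximum set of pairwise-disjoint $A_i$'s, i.e., a maximum independent set in the random intersection graph. Two random $\sqrt{m}$-subsets of $[m]$ intersect with probability $\approx 1-e^{-1}$, so this is essentially a $G(n,p)$ with constant $p$, and its maximum independent set is $\Theta(\log n) = \Theta(\log m)$. Pushing $k$ larger to raise the intersection probability to $1-o(1)$ simultaneously tightens the ex ante supply constraint, forcing fewer buyers, and the gap you obtain is always stuck at $\Omega(\sqrt{m}/\mathrm{polylog}\,m)$, not $\Omega(\sqrt{m})$. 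So a union-bound/first-moment argument over assignments cannot close the construction as you describe it.

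The paper avoids randomness entirely and instead uses a \emph{deterministic combinatorial design}: a ``good collection'' of $\ell$ partitions of $\ell^2$ items into $\ell$ blocks of size $\ell$ (built from lines $y \equiv xi+j \pmod{\ell}$ for $\ell$ prime), with the crucial property that any two blocks drawn from different partitions intersect. Buyer $i$'s valuation is the $0$--$1$ indicator of containing some block of partition $P_i$. Because \emph{every} block of $P_i$ meets \emph{every} block of $P_{i'}$, serving one buyer a full block destroys every other buyer's value, so any feasible mechanism — not just item pricings — collects revenue at most $1$. Meanwhile, the ex ante relaxation can offer each buyer a uniformly random one of their $\ell$ blocks at total price $1-\epsilon$, putting ex ante weight $N/\ell \le 1$ on each item and earning $N(1-\epsilon)$ for $N = \min(n,\ell) = \Theta(\min\{n,\sqrt{m}\})$. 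This gives the clean gap with no concentration lemma and no log losses. The analogy to the XOS construction is looser than your plan suggests: there the threshold $1-\frac{1}{t+1}$ was deliberately soft and the argument genuinely probabilistic; here the right move is to make the threshold exact and replace the random sets by a design guaranteeing pairwise conflicts across all buyers simultaneously.
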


As with the $\sqrt{\log m}$ lower bound for XOS valuations in \Cref{sec:xos-lb}, this lower bound on the gap between ex ante and sequential item pricing for the class of monotone valuations is more about the difficulty of finding an integral partition of the items that satisfies a typical realization of the buyers' valuation functions than it is about doing so in a sequential, item pricing setting. In particular, the family of instances which demonstrate \Cref{thm:monotone-lb} will show the same gap for $\expostitempricing$ as well, which is an even stronger separation.

This gap is even more precipitous in the sense that it is purely combinatorial, and does not rely on the strategic limitations of mechanisms at all. In particular, this gap holds even for a ``clairvoyant'' mechanism which is given direct access to the realizations $v_i \sim \D_i$ and can extract the full surplus of the buyers' valuations of their allocations as revenue. 

Before constructing our instance, we will introduce the key property which we require of it.

\begin{definition}
    Let $m=\ell^2$ be the number of items. Let $P=(B_0,\ldots,B_{\ell - 1})$ be a partition of these $m$ items to $\ell$ groups of $\ell$ items each, so that we have $|B_i|=\ell$ and $B_i\cap B_j=\emptyset$ for all $i \neq j$. Let $\mathcal{P}$ be a collection of such partitionings. We call a collection $\mathcal{P}$ a \textit{good collection} if for every two partitionings $P_a=(A_0,\ldots, A_{\ell-1}),P_b=(B_0\ldots, B_{\ell-1})\in\mathcal{P}$, we have that $A_i\cap B_j\neq \emptyset$ for all $i, j \in [\ell]$.
\end{definition}
In other words, a good collection is one that for any two partitions $P_a$ and $P_b$ and any two bundles from them, the intersection of the bundles is non-empty. We would like to assign to each buyer such partition, so that if a buyer purchases one bundle, other buyers would not be incentivized to buy any items. 

\begin{lemma}\label{lemma:collection construction}
    For any prime number $\ell$ and $m=\ell^2$, there exists a \textit{good collection} of size $\ell$.
\end{lemma}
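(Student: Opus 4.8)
The plan is to identify the $m=\ell^2$ items with the points of the affine plane over $\mathbb{F}_\ell$ and take the partitions to be its parallel classes of lines. Concretely, since $\ell$ is prime, $\mathbb{F}_\ell\defeq\mathbb{Z}/\ell\mathbb{Z}$ is a field; fix any bijection between $[m]$ and $\mathbb{F}_\ell\times\mathbb{F}_\ell$. For each ``slope'' $s\in\mathbb{F}_\ell$, I would define a candidate partition $P_s=(B^s_0,\dots,B^s_{\ell-1})$ by letting $B^s_b\defeq\{(x,\,sx+b):x\in\mathbb{F}_\ell\}$ for each intercept $b\in\mathbb{F}_\ell$, and then set $\mathcal{P}\defeq\{P_s:s\in\mathbb{F}_\ell\}$.

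The first thing to check is that each $P_s$ really is a partition of the required shape. Each block $B^s_b$ has exactly $\ell$ elements, one for every value of $x$; two blocks of the same slope are disjoint, since $(x,sx+b)=(x',sx'+b')$ forces $x=x'$ and hence $b=b'$; and the $\ell$ blocks cover everything, because an arbitrary point $(x,y)$ lies in $B^s_b$ precisely for $b=y-sx$. So $P_s$ is a partition of $[m]$ into $\ell$ groups of $\ell$ items, exactly as demanded by the definition of a good collection.

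The key step — and the only place the primality of $\ell$ enters — is the good-collection property: any block from $P_s$ and any block from $P_{s'}$ with $s\neq s'$ must intersect. A common point $(x,y)$ of $B^s_b$ and $B^{s'}_{b'}$ must satisfy $sx+b=s'x+b'$, i.e. $(s-s')x=b'-b$. Since $s-s'\neq 0$ in the field $\mathbb{F}_\ell$ it is invertible, so $x=(s-s')^{-1}(b'-b)$ is the unique solution, giving exactly one common point $(x,\,sx+b)\in B^s_b\cap B^{s'}_{b'}$. Hence $\mathcal{P}$ is a good collection, and $|\mathcal{P}|=|\mathbb{F}_\ell|=\ell$, proving the lemma. (Adjoining the ``vertical'' parallel class of blocks $\{x\}\times\mathbb{F}_\ell$ would in fact give a good collection of size $\ell+1$, but $\ell$ suffices here.)

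I do not anticipate a genuine obstacle: this is essentially the standard affine-plane / mutually-orthogonal-Latin-squares construction, and every verification above is elementary. The one subtlety worth flagging is that the whole argument collapses when $\mathbb{F}_\ell$ fails to be a field — for composite $\ell$, a nonzero slope difference need not be invertible and two lines of different slopes can be disjoint — which is precisely why the hypothesis restricts $\ell$ to be prime (a prime power would work equally well, but is not needed for this lemma).
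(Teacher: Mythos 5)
Your proof is correct and is essentially the same construction as the paper's: both identify the $\ell^2$ items with $\mathbb{F}_\ell\times\mathbb{F}_\ell$, take $P_s$ to be the parallel class of lines $\{(x,sx+b):x\in\mathbb{F}_\ell\}$ of slope $s$, and show two lines of distinct slopes meet by inverting $s-s'$ in the field. The only differences are cosmetic (the paper writes slope $i$ and intercept $j$, and works modulo $\ell$ rather than in $\mathbb{F}_\ell$).
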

\begin{proof}
Set the $\ell^2$ items in a $\ell\times\ell$ matrix, such that each item is indexed by a tuple of form $(i,j)$ where $0\leq i,j \leq \ell-1$. We define partition $P_i=(B_{i0},\ldots, B_{i(\ell-1)})$ as follows: 
   \[B_{ij} = \{ (x,y): y = (xi + j) \bmod{\ell}\},\]
where $(x,y)$ here is the index of the corresponding items. Now first note that for each $i\in \{0, 1, \ldots, \ell - 1\}$, each item $(x,y)$ belongs to exactly one bundle $B_{ij}$, and each bundle consists of exactly $\ell$ items. So the defined $P_i$'s indeed create a collection of partitions. 

Next, we need to prove that for any $i,i',j,j'\in\{0,1,\dotsc, \ell-1\}$ with $i\neq i'$, bundles $B_{ij}$ and $B_{i'j'}$ intersect. Equivalently, we need an item $(x,y)$ for which \[y \equiv xi + j \equiv x i' + j' \pmod{\ell}\]
As $\mathbb{Z}_\ell$ is a field due to $\ell$ being prime, combined with $i' - i \neq 0$, there exists some $x \in \{0, 1, \dots, \ell - 1\}$ where $x(i' - i) \equiv j' - j \pmod{\ell}$. Setting $y = (xi + j) \bmod{\ell}$ concludes the proof.
\end{proof}
\begin{lemma}\label{lemma:Bertrand}
    For any positive integer $m$, there exists a \textit{good collection} of size at least $\ell=\frac{\lfloor\sqrt{m}\rfloor}{2}$ that uses $\ell^2$ items. 
\end{lemma}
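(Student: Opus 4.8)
The plan is to derive the good collection directly from Lemma~\ref{lemma:collection construction} by selecting a suitable prime via Bertrand's postulate. Write $s \defeq \lfloor\sqrt m\rfloor$. I want a prime $\ell$ that is at most $s$ — so that $\ell^2 \le m$ and a good collection on $\ell^2$ items embeds into the $m$-item groundset, with the extra $m - \ell^2$ items simply unused — and at least $s/2 = \lfloor\sqrt m\rfloor/2$, so that its size matches the claimed bound.

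First I would handle the degenerate range $s \le 1$, i.e. $m \le 3$: here $\lfloor\sqrt m\rfloor/2 \le 1/2$, and a single partition is vacuously a good collection (the defining condition only constrains pairs of \emph{distinct} partitions), so a good collection of size $1 \ge \lfloor\sqrt m\rfloor/2$ trivially exists. For $s \ge 2$, I would apply Bertrand's postulate with $n \defeq \lfloor s/2\rfloor \ge 1$ to obtain a prime $\ell$ with $\lfloor s/2\rfloor < \ell \le 2\lfloor s/2\rfloor$. Then $\ell \le 2\lfloor s/2\rfloor \le s \le \sqrt m$, hence $\ell^2 \le m$; and $\ell \ge \lfloor s/2\rfloor + 1 \ge s/2$, where the last step is checked separately for $s$ even and odd. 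Since $\ell$ is prime, Lemma~\ref{lemma:collection construction} produces a good collection of size $\ell$ on $\ell^2$ items, which we view as a collection of partitions of a size-$\ell^2$ subset of the $m$ items; its size is $\ell \ge \lfloor\sqrt m\rfloor/2$, as required.

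The argument has no substantive obstacle; the only care needed is the floor bookkeeping — verifying $2\lfloor s/2\rfloor \le s$ and $\lfloor s/2\rfloor + 1 \ge s/2$ in both parities, and confirming that $n = \lfloor s/2\rfloor$ is a legitimate positive-integer input to Bertrand's postulate, which fails precisely in the small-$m$ cases dispatched separately above.
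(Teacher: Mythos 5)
Your proposal is correct and follows essentially the same route as the paper: invoke Bertrand's postulate to locate a prime $\ell$ in the window $\bigl(\lfloor\sqrt m\rfloor/2,\;\lfloor\sqrt m\rfloor\bigr]$ and then apply \Cref{lemma:collection construction} to that prime, embedding the resulting $\ell^2$-item construction into the $m$-item groundset. The only difference is cosmetic bookkeeping — you use the form $n<\ell\le 2n$ with $n=\lfloor\lfloor\sqrt m\rfloor/2\rfloor$ and dispatch the small-$m$ cases explicitly, whereas the paper applies the $n<p<2n-2$ form with $n=(\lfloor\sqrt m\rfloor+2)/2$ — so your version is, if anything, slightly more careful about integrality and small values of $m$.
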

\begin{proof}
Due to Bertrand's postulate, we know that for any $n>3$, there exists at least one prime number $p$ such that $2n-2>p>n$. Let $n=\frac{\lfloor \sqrt{m}\rfloor+2}{2}$. Therefore, there exists a prime number $p$ such that $\lfloor \sqrt{m}\rfloor>p>\frac{\lfloor \sqrt{m}\rfloor+2}{2}$. Now using \Cref{lemma:collection construction}, we know that we can build a collection of size $p$ since $p$ is a prime, and we know that $p>\frac{\lfloor \sqrt{m}\rfloor+2}{2}>\frac{\lfloor \sqrt{m}\rfloor}{2}$. This concludes the proof.
\end{proof}

\begin{proof}[Proof of \Cref{thm:monotone-lb}]
Let $\ell$ be the closest prime number no more than $\sqrt{m}$ and let $N=min(n,\ell)$. We prove the theorem in 3 parts: First, we show how to build the instance's valuation function. Then, we show that no ex-post mechanism can gain revenue of more than $1$ under this valuation. Finally, we show that there exists an ex-ante pricing under which the revenue is at least $N(1-\epsilon)$. This concludes the proof of theorem.

\paragraph{Instance.} We index both the items and buyers from $0$. Let $\mathcal{P}=\{P_0,\ldots,P_{\ell-1}\}$ be the \textit{good collection} on the first $\ell^2$ items from \Cref{lemma:collection construction}. For $P_i=(B_{i1},\ldots, B_{i(\ell-1)s})$, let the valuation of buyer $i$ on the $m$ items be deterministic and as follows:
  \[v_i(S) = \mathbbm{1}\bigg[B_{ib}\subseteq S\;\;\text{for some } b\in \{0, 1, \dots, \ell - 1\}\bigg]\text{ for }0\leq i\leq \ell-1, \quad v_i(S)=0 \text{ for }i\geq N\] 

   In other words, we have set zero value for all but the first $\ell^2$ items, and any buyer with index $i \ge N$ has zero valuation on any set of items. Moreover, buyer $i\in \{0, 1, \ldots, N - 1\}$ values any set containing one of their bundles by $1$, and otherwise $0$. This is a monotone valuation, but not sub-additive.\footnote{Let $B_{ib}$ be a bundle of buyer $i$. Then for any $S\subsetneq B_{ib}, |S|>0$ we have that $v_i(S)+v_i(B_{ib} \setminus S)=0<v_i(B_{ib})=1$.}
   
\paragraph{Upper bound on revenue of a mechanism.} Using the property of the \textit{good collection}, if an ex-post mechanism sells a set $S$ containing one of their bundles to a buyer, all the other buyers will value the remaining sets of items by zero. This is because the set $S$ contains at least one item from any bundle of any of the other buyers, and a buyer values a set by zero if their set does not include one of their whole bundles. This means that any mechanism---sequential item pricing mechanisms included---can sell to at most one buyer. Since the maximum social welfare of a buyer is $1$, the maximum revenue achievable will also be at most 1, and can be gained by selling one bundle to a buyer.

\paragraph{Lower bound on ex ante revenue.}
Up next, consider the following mechanism. For each of the buyers $i\in \{0, 1, \ldots, N - 1\}$, pick one of their bundles uniformly at random. This will happen with probability $\frac{1}{\ell}$ for each bundle. Price all of the items in this bundle at  $\frac{1-\epsilon}{\ell}$ for some $\epsilon>0$, and all other items at infinity. In this case, the buyer purchases the randomly selected bundle, and gains a utility of $1-\ell\cdot\frac{1-\epsilon}{\ell} = \epsilon > 0$. The revenue from this draw and this buyer will be $1-\epsilon$. Now note that each item is sold to each buyer with probability $1/\ell$ (i.e. the probability that its corresponding bundle is selected) and so the total allocation of item $j$ is $N/\ell\leq 1$. Therefore, the ex-ante constraint is satisfied, and this mechanism gets a total revenue of $N\cdot (1-\epsilon)$.

We can conclude that for this instance, no ex-post mechanism can get a revenue of better than a factor of $N$ of ex-ante revenue. Now using \Cref{lemma:Bertrand}, we know that $\ell=\Omega(\sqrt{m})$. This concludes the bound in the theorem statement. 
\end{proof}

\section{Gross Substitutes Valuations}
\label{sec:gs-improvement-main}
The analysis presented in the main part of this work establishes the existence of a sequential item pricing mechanism that achieves a $4e/(e-1)$-approximation to $\earev(\D,\itempricing)$ whenever $\D$ is a distribution over gross substitutes (henceforth, GS) valuations. This was achieved through the design of a $1$-RRS for GS valuations (\Cref{prop:GS-RRS-result}), which in turn implies a $e/(e-1)$-OCRS (\Cref{t:rrs_argument}) and finally a $4e/(e-1)$-approximation of $\earev(\D,\itempricing)$ (\Cref{t:ocrs_argument}). 

However, the previous work of~\textcite{chawla23buy} established a $2$-approximation for \textit{unit demand} valuations (which are also GS), and furthermore this constant is easily shown to be tight for unit-demand valuations even for $m=1$ items via standard prophet inequality bounds. In this section, we will show that one can obtain this factor of $2$ even for GS valuations, proving \Cref{thm:gross-subs} which we restate for convenience.

\gstheorem*

In Section~\ref{sec:gs-improvement} we argue that the analysis of $\chalg$ can be tightened under GS valuations to show that we can actually turn the $1$-RRS for GS valuations into a $1$-OCRS, removing the loss of $e/(e-1)$. This already implies a $4$-approximation for $\earev(\D,\itempricing)$, through~\Cref{t:ocrs_argument}. Unfortunately, we cannot further reduce the approximation constant, as it is inherited from the generality with which we have stated our definitions, in order to capture harder families of valuations.

In Section~\ref{sec:gs-wine-proof} we replicate the analysis of \textcite{chawla23buy}, in order to obtain a $2$-approximation for $\earev(\D,\itempricing)$ under any GS valuation $\D$ and prove \Cref{thm:gross-subs}. While the proof in \cite{chawla23buy} was only shown for unit demand valuations, we observe that the same arguments also hold for GS valuations. We comment that while the idea in this proof is similar to the ones used in this paper, it heavily relies on the GS property and thus, cannot be directly applied to the more general families of valuations that this work addresses.

\subsection{Analysis of $\chalg$ for GS valuations}\label{sec:gs-improvement}

In this section, we show that the $(1-1/e)$-loss in revenue from the analysis of $\chalg$ in Section~\ref{sec:rrs_to_ocrs} is actually not necessary under GS valuations. First, recall that when we defined our $1$-RRS for GS valuations, we simply set $q\defeq p$ as the GS condition handles the revenue constraint (see the paragraph above~\Cref{prop:GS-RRS-result} for the full argument). Let the $\{y^T\}$ and $w$ vectors be defined exactly as in the application of $\chalg$ in Section~\ref{sec:rrs_to_ocrs}. Then, observe that by the definition of the $\{y^T\}$ vectors we have that for any set $Q\subseteq [k]$ and any $j\in Q$:
\begin{equation}\label{eq:gs-inequality}
    y^Q_j = \alpha\cdot q_j^Q \alloc_j (\D_{|Q}, q^Q) 
    = p_j\cdot \alloc_j (\D_{|Q}, p) 
    \geq p_j\cdot\alloc_j (\D,p) 
    = w_j
\end{equation}
where the inequality crucially uses the fact that $\D$ is GS, and thus the allocation of an item under any restriction (that includes it) will never decrease. This inequality is all we need in order to provide a better analysis of $\chalg$ that doesn't suffer the $(1-1/e)$-loss.

Specifically, we proceed to analyze $\chalg$ exactly as in the proof of~\Cref{l:convex_hull_sampler_new}, with the only difference that instead of proving that
\[\sum_{T \subseteq S} \lambda_T \cdot |y^T|\geq \left(1 - \frac{1}{e}\right) \cdot |w| ,\]
we will actually show that
\[\sum_{T \subseteq S} \lambda_T \cdot |y^T| \geq |w| ,\]
from which it becomes clear that the $(1-1/e)$-factor is no longer lost. To prove this, let $s$ be the number of iterations of $\chalg$ and recall that either $\sigma=1$ or $Q_s=\emptyset$ from our termination condition. We will show that~\eqref{eq:gs-inequality} implies that $Q_s=\emptyset$ necessarily. Indeed, assume that this is not the case, and let $j\in Q_s$. Since $Q_s\neq \emptyset$, it must be the case that $\sigma=1$ and thus $\sum_{i=0}^{s-1}\lambda_{Q_i} = 1$. Since $j\in Q_s$, we have that 
\[0<\what_{s,j} \defeq w_j - \sum_{i=0}^{s-1}\lambda_{Q_i}\cdot y^{Q_i}_j \]
which is clearly a contradiction from~\eqref{eq:gs-inequality}, as the sum is simply a linear combination of terms that are at least $w_j$. Thus, we have concluded that $Q_s=\emptyset$ always, which means that $x_s=0$ and thus $w=\sum_{T \subseteq S} \lambda_T \cdot y^T$, which is precisely what we wanted to show.

\subsection{A $2$-approximation for GS valuations}\label{sec:gs-wine-proof}

In this section, we replicate the analysis of \textcite{chawla23buy} to prove \Cref{thm:gross-subs}. The proof is centered around the following lemma, which essentially combines the \textit{Revenue Recovery Scheme} and the \textit{Convex Hull Sampling} steps of our approach.
\begin{lemma}\label{c:gs-wine-guarantte}
    Fix any GS valuation $\D$, any distribution over subsets of items $\Sdist$, any deterministic item pricing $p$, and let $w\defeq \expect_{S\sim\Sdist}[\alloc(\D_{|S}, p)]$. Then, for any vector $y\preceq w$ there exists a randomized item pricing $q=q(\D,\Sdist,p,y)$ such that:
    \begin{enumerate}[(a)]
        \item $\expect_{S\sim\Sdist}[\alloc(\D_{|S},q)] = y$, and,
        \item $\expect_{S\sim\Sdist}[\rev(\D_{|S},q)] = \sum_{j=1}^mp_j\cdot y_j$.
    \end{enumerate}
\end{lemma}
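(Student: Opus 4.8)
The plan is to realize the desired randomized pricing $q$ as a distribution over the restrictions of $p$, and to hit the target $y$ on the nose by running $\chalg$ with $y$ (rather than $w$) as its target vector, exploiting the gross substitutes property exactly as in Section~\ref{sec:gs-improvement}. After a harmless perturbation ensuring $p_j>0$ for all $j$, I would set, for each $T\subseteq[m]$, $p^T$ to be the item pricing that agrees with $p$ on $T$ and charges $\infty$ off $T$, and let $g(T)\defeq\expect_{S\sim\Sdist}[\alloc(\D_{|S},p^T)]$. The two facts I need are: (i) $g(T)_j=0$ for every $j\notin T$, which is immediate since items priced at $\infty$ are never purchased; and (ii) $g(T)_j\ge w_j\ge y_j$ for every $j\in T$. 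For (ii), note that a buyer whose valuation is restricted to $S$ and who faces $p^T$ buys exactly the utility-maximizing subset of $S\cap T$ under the original prices $p$, so $\alloc_j(\D_{|S},p^T)=\alloc_j(\D_{|S\cap T},p)$ for $j\in T$; the gross substitutes property --- in the form already used for~\eqref{eq:gs-inequality}, namely that restricting the valuation never lowers the allocation of a retained item --- gives $\alloc_j(\D_{|S\cap T},p)\ge\alloc_j(\D_{|S},p)$ since $S\cap T\subseteq S$, and taking expectations over $S$ yields (ii).

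With these in hand, I would run $\chalg(m,y,\{g(T)\}_{T\subseteq[m]})$. Facts (i) and (ii) are precisely the hypotheses of Lemma~\ref{l:convex_hull_sampler_new} with $w$ replaced by $y$ (in particular $|g(T)|=\sum_{j\in T}g(T)_j\ge\sum_{j\in T}y_j$), so the call returns a distribution $\{\lambda_T\}_{T\subseteq[m]}$; I then define $q$ to be the randomized pricing ``draw $T\sim\lambda$ and offer $p^T$''. Since $\expect_{S}[\alloc(\D_{|S},q)]=\sum_T\lambda_T\,g(T)$, part~(a) asks for $\sum_T\lambda_T\,g(T)=y$. The vanilla guarantee of Lemma~\ref{l:convex_hull_sampler_new} only gives $\sum_T\lambda_T\,g(T)\preceq y$ with $|\sum_T\lambda_T\,g(T)|\ge(1-1/e)|y|$, so here I would instead invoke the sharpened analysis of Section~\ref{sec:gs-improvement}: that argument uses only the coordinatewise domination $g(T)_j\ge(\text{target})_j$ for $j\in T$ --- which by (ii) holds for the target $y$ --- to force $Q_s=\emptyset$ at termination, hence $\what_s=0$, i.e. $\sum_T\lambda_T\,g(T)=y$ exactly. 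Part~(b) then follows for free: because $q$ only ever offers pricings of the form $p^T$, every purchased item lies in $T$ and is charged $p_j$, so $\rev(\D_{|S},p^T)=\sum_j p_j\,\alloc_j(\D_{|S},p^T)$ pointwise; taking expectations over $S$ and $T\sim\lambda$ and using part~(a) gives $\expect_S[\rev(\D_{|S},q)]=\sum_j p_j\,y_j$.

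The main obstacle is securing the exact equality in part~(a): the convex-hull machinery by itself only lands inside the box $\{z:0\preceq z\preceq y\}$ with a $(1-1/e)$ length loss, whereas the lemma demands $\sum_T\lambda_T\,g(T)=y$. Everything therefore rests on re-deriving the Section~\ref{sec:gs-improvement} refinement with $y$ in place of $w$ and checking that its sole input --- the coordinatewise bound $g(T)_j\ge y_j$ for $j\in T$ --- is available, which it is since $y\preceq w$ and (ii) holds for $w$. The remaining ingredients (fact (i), the reduction of the revenue condition to the allocation condition, and the pointwise description of $\alloc$ and $\rev$ under item pricings via the taxation principle) are routine.
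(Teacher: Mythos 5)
Your proposal is correct and follows essentially the same route as the paper: the paper's proof also defines the restricted pricings ($p$ on $T$, $\infty$ off $T$), uses the gross substitutes property to get the coordinatewise bound $\expect_{S\sim\Sdist}[\alloc_j(\D_{|S\cap T},p)]\ge w_j\ge y_j$ for $j\in T$, writes $y$ exactly as a convex combination of these allocation vectors, and samples $T$ accordingly. The only cosmetic difference is that you explicitly rerun $\chalg$ with target $y$ and the Section~\ref{sec:gs-improvement} refinement to certify exact convex-hull membership, whereas the paper simply asserts that $y\preceq w$ places $y$ in the convex hull --- the underlying argument is the same.
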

\begin{proof}
    For each set of items $T\subseteq [m]$, we define the vector $x^T\defeq \expect_{S\sim\Sdist}[\alloc(\D_{|S\cap T},p)] $ and observe that $x^T_j = 0$ for any $j\notin T$ and $x^T_j \geq w$ for any $j\in T$, due to the GS property of $\D$. Thus, since $y\preceq w$, we can immediately deduce that vector $y$ lies in the convex hull of the $x^T$ vectors, or equivalently that there exist non-negative coefficients $\{\lambda_T\}_{T\subseteq [m]}$ such that $\sum_{T\subseteq [m]}\lambda_T=1$ and $y = \sum_{T\subseteq [m]}\lambda_T\cdot x^T$.

    The proposed item pricing $q$ samples a set $T\subseteq [m]$ according to the probabilities $\lambda_T$, and then presents the buyer with pricing $q^T$, defined as $q^T_j = p_j$ if $j\in T$ and $q^T_j=\infty$ otherwise. We will now argue that this pricing satisfies the two conditions of the claim.

    For the expected allocation, we have
    \begin{align*}
    \expect_{S\sim\Sdist}[\alloc(\D_{|S},q)] &= \sum_{T\subseteq [m]}\lambda_T \cdot \expect_{S\sim\Sdist}[\alloc(\D_{|S},q^T)]  
    = \sum_{T\subseteq [m]}\lambda_T \cdot \expect_{S\sim\Sdist}[\alloc(\D_{|S\cap T},p)] = y,
    \end{align*}
    with the first equality following from $q$'s definition, the second equality follows from the fact that $q^T$ is equivalent to $p$ under valuations restricted on $T$, and the last equality follows from the definition of $x^T$ and the fact that $y = \sum_{T\subseteq [m]}\lambda_T\cdot x^T$.

    Likewise, for the expected revenue, with similar arguments we have
    \begin{align*}
        \expect_{S\sim\Sdist}[\rev(\D_{|S},q)] = \sum_{T\subseteq [m]}\lambda_T \sum_{j=1}^mp_j   \expect_{S\sim\Sdist}[\alloc_j(\D_{|S\cap T},p)] = \sum_{j=1}^m p_j \cdot\bigg(\sum_{T\subseteq [m]}\lambda_T x_j^T \bigg)= \sum_{j=1}^m p_j \cdot y_j .
    \end{align*}
\end{proof}

\medskip
We are now ready to prove~\Cref{thm:gross-subs}. Consider any GS valuations $\D=\D_1\times\cdots\times\D_n$ and recall that
\[\earev(\D,\itempricing) = \sum_{i=1}^n\earev_{x^*_i}(\D_i,\itempricing)\]
for some ex-ante constraint $\{x^*_i\}_{i=1}^n$ with $\sum_{i=1}^nx^*_{ij}\leq 1$ for all items $j\in [m]$. Consider the sequential item pricing mechanism that offers each buyer $i\in [n]$ the pricing $q_i=q(\D_i,\Sdist_i,p_i,y_i)$ of~\Cref{c:gs-wine-guarantte} over only the available items, for the following instantiations: $\D_i$ is the distribution of valuations for buyer $i$; $\Sdist_i$ is the distribution of available items when $i$ arrives (that depends on the randomness of pricings and allocations over previous buyers); $p_i$ is an item pricing sampled from the distribution that achieves $\earev_{x^*_i}(\D_i,\itempricing)$; and $y_i=(1/2)\cdot \alloc(\D_i,p_i)$.

Define $w_i\defeq \expect_{S_i\sim\Sdist_i}[\alloc(\D_{|S_i},p_i)]$. As long as we can prove that $y_i\preceq w_i$ for all $i\in [n]$, then the guarantees of~\Cref{c:gs-wine-guarantte} will hold, and from condition (b) we will immediately obtain that the revenue the above sequential item pricing collects from each buyer $i$ (call it $\rev_i$) will satisfy
\[\rev_i = \expect_{S_i\sim\Sdist_i, p_i}[\rev(\D_{|S_i},q_i)] = \frac{1}{2}\expect_{S_i\sim\Sdist_i, p_i}\left[\sum_{j=1}^mp_j\cdot\alloc_j(\D_i,p_i)\right] = \frac{1}{2}\cdot \earev_{x^*_i}(\D_i,\itempricing),\]
and thus the proof of~\Cref{thm:gross-subs} will be completed.

In order to argue that $y_i\preceq w_i$, observe that for any item $j\in [m]$ we have
\[w_{ij} = \expect_{S_i\sim\Sdist_i}[\alloc_j(\D_{|S_i},p_i)] = \pr[j\in S_i]\cdot  \expect_{S_i\sim\Sdist_i}[\alloc_j(\D_{|S_i},p_i)\big| j\in S_i] \geq \pr[j\in S_i]\cdot \alloc_j(\D_i,p_i),\]
where the inequality follows from the GS property, under which the allocation of any item under any restriction that includes it can only increase. Thus, a sufficient condition to show that $y_i\preceq w_i$ is to prove that $\pr[j\in S_i]\geq \frac{1}{2}$ for all buyers $i\in [n]$. This will certainly be true for the first buyer, as $S_1=[m]$. Then, from condition (a) of~\Cref{c:gs-wine-guarantte} we obtain that the expected allocation of items to the first buyer is exactly $y_1$. Inductively, we proceed to show that each buyer $i'<i$ gets expected allocation $y_{i'}$. Then, we get that
\[\pr[j\in S_i] = 1 - \sum_{i'<i}y_{i'j} \geq 1 -\sum_{i'=1}^ny_{i'j} = 1 - \frac{1}{2}\sum_{i'=1}^n x^*_{i'j}\geq \frac{1}{2},\]
with the last inequality following from the ex ante constraint. Thus, we have shown via induction that $\pr[j\in S_i]\geq 1/2$ for all $j\in [m]$ and $i\in [n]$, completing the proof.
\section{A Lower Bound for RRSes on XOS Valuations}
\label{sec:rrs-lb}

In~\Cref{thm:subadd-RRS}, we proved the existence of an $\bigoh(\log m + \log \Gamma)$-RRS, where $\Gamma \defeq p_{max}/p_{min}$ is the aspect ratio of the given price. We now show that this dependency on $\Gamma$ is necessary: without it, we would have to suffer a factor of $\Omega(\sqrt{m})$ in our approximation. This is captured through the following.

\begin{theorem}\label{lem:rrs_lower_bound_sqrtm}
    For all $m$, there exists a distribution $\D$ over XOS valuations, a deterministic item pricing $p$, and a deterministic subset of items $S \subseteq [m]$, such that for all item pricings $q$,
    \[\rev(\D_{|S}, q) \le \frac{1}{\Omega(\sqrt{m})} \cdot \sum_{j \in S} p_j \cdot \alloc_j(\D, p).\]
    This rules out the existence of any $o(\sqrt{m})$-RRS for XOS valuations.
\end{theorem}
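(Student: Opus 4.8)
The plan is to exhibit an explicit bad instance and analyze it directly. Concretely, I would construct a distribution $\D$ over XOS valuations on $m$ items together with a deterministic item pricing $p$ and a deterministic subset $S$, and then establish two quantitative bounds: a lower bound $\sum_{j\in S}p_j\alloc_j(\D,p)=\Omega(\sqrt m\cdot c)$ on the ``target'' revenue, and an upper bound $\rev(\D_{|S},q)=O(c)$ that holds \emph{uniformly over every} item pricing $q$ supported on $S$. Dividing the two bounds gives the $\Omega(\sqrt m)$ separation, and the final sentence then follows because property (b) in the definition of an $\alpha$-RRS would force some $q$ to recover a $\tfrac1\alpha$ fraction of the target, hence $\alpha=\Omega(\sqrt m)$.

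For the construction, the natural template — mirroring the role the aspect ratio $\Gamma$ plays in \Cref{thm:subadd-RRS} — is to use $\Theta(\sqrt m)$ exponentially separated ``scales.'' I would split $[m]$ into a large set $S$ of payload items and a small set $\bar S = [m]\setminus S$ of $\Theta(\sqrt m)$ ``key'' items, one per scale. The payload-side valuation, \emph{restricted to $S$}, should behave like an equal-revenue-type distribution across the $\Theta(\sqrt m)$ scales, so that its expected grand-bundle value exceeds the best achievable item-pricing revenue on $S$ by a factor $\Theta(\sqrt m)$: this is the phenomenon that makes even a single equal-revenue scale lose a constant factor, amplified across $\Theta(\sqrt m)$ scales whose prices differ by $2^{\Theta(\sqrt m)}$. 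The key items are designed so that, in the \emph{full} market, the buyer's utility-maximizing bundle includes the key item of its realized scale; one then tunes $p$ so that buyers at every scale end up purchasing their payload, which recovers a constant fraction of the full expected payload value. I would verify XOS-ness by writing each realized valuation explicitly as a maximum of additive clauses, with one clause per scale keyed on the corresponding $\bar S$ item.

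The argument then has three parts. First, the \emph{target lower bound}: for each realized valuation, run the demand oracle against $p$, verify that the buyer purchases (most of) its payload at the intended price, and sum $p_j\alloc_j$ over $j\in S$. Second, the \emph{universal upper bound over $q$}: fix an arbitrary item pricing $q$ on $S$, classify payload items by the scale that $q$ effectively targets them at, apply an equal-revenue bound to show that each scale contributes only $O(c)$ to $\rev(\D_{|S},q)$ in aggregate across items, and use subadditivity of the valuations to rule out $q$ coupling payoffs across scales, so the total is $O(c)$ rather than $\Omega(\sqrt m\cdot c)$. Third, combine the two bounds and conclude the RRS impossibility.

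The main obstacle is the second step: the bound must hold for \emph{every} item pricing $q$, including highly non-uniform ones and ones very different from $p|_S$. The heart of the matter is to show that the ``menu richness'' which the key items $\bar S$ confer on $p$ — allowing a single flat item pricing over $[m]$ to behave like a rich menu on the payload — genuinely cannot be reproduced by any pricing that only sees $S$; quantifying this gap as exactly $\Theta(\sqrt m)$, rather than $\Theta(\log m)$ or $\Theta(m)$, is the delicate part and is what pins down the parameters (the number of scales, their separation, and $|\bar S|$).
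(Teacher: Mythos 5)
Your high-level strategy is the right one and matches the paper's: exhibit a concrete $(\D, p, S)$, prove a lower bound $\sum_{j\in S}p_j\alloc_j(\D,p)=\Omega(\sqrt m\cdot c)$ and a \emph{uniform-over-$q$} upper bound $\rev(\D_{|S},q)=O(c)$, divide, and read off the RRS impossibility from property (b) of Definition~\ref{def:RRS}. But the construction you sketch differs substantially from the paper's, and the piece you flag as ``the delicate part'' is precisely where the paper does all the work, so as written this is not a proof.

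Concretely, the paper uses \emph{one} key item (item $m$, priced at $0$) and $m-1$ scales, not $\Theta(\sqrt m)$ key items and $\Theta(\sqrt m)$ scales as you propose. The prices are $p_j=\beta^j$ with $\beta=\sqrt{m-1}$, so the aspect ratio is $\beta^{m-2}$ --- exponential in $m$, as it must be to escape the $O(\log m+\log\Gamma)$-RRS of Lemma~\ref{thm:subadd-RRS}. The $\sqrt m$ factor does \emph{not} come from ``$\sqrt m$ scales each losing a constant''; it comes from a correlation trick in a single XOS valuation with two clauses. A realized valuation $v_{i,R}$ has a $v^1$-clause that pays $\beta^i$ for item $i$ and pays $\epsilon+\sum_{k\in R}(\beta^i-\beta^k)$ for the key item $m$, and a $v^2$-clause that pays $\beta^i$ for each item of a random $R\subseteq[i-1]$ with per-element inclusion probability $(m-1)^{-1/2}$. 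The value assigned to item $m$ is engineered to equal (up to $\epsilon$) the $v^2$-clause utility of buying $R$ at prices $p$. This makes the full pricing $p$ act like a rich menu: the buyer is always just barely tipped into buying $\{i,m\}$, which recovers full revenue $\beta^i$ at every scale $i$. When $m$ is removed, that indifference collapses, and the analysis splits into three cases ($\ell_i\neq i$; $\ell_i=i$, where a conflict with cheaper $R$-items at label $\le i-1$ arises; and revenue from $v^2$) to show every item pricing $q$ on $[m-1]$ collects at most $4\sigma^{-1}\sqrt{m-1}$. The inclusion probability $(m-1)^{-1/2}$ is what tunes this bound to $\Theta(\sqrt m)$.

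Your sketch does not supply a mechanism of this kind. With $\Theta(\sqrt m)$ key items and $\Theta(\sqrt m)$ scales, it is not clear how to make ``$p$ on $[m]$ behaves like a menu but $q$ on $S$ cannot'' quantitative, and the appeal to ``equal-revenue scales amplified $\sqrt m$ times'' conflates the welfare-vs-revenue gap with the key-present-vs-key-absent gap you actually need. In particular, nothing in the sketch prevents a pricing $q$ from independently targeting each of your $\Theta(\sqrt m)$ scales at its own price and recovering a constant fraction per scale, which would cap the gap at $O(1)$. Pinning down why that fails --- i.e., producing the cross-scale entanglement that the paper achieves through the single random set $R$ and the single free discount item $m$ --- is the whole proof, and it is left unaddressed.
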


Before proceeding, we note that this lemma is not a counterargument to~\Cref{it:ocrs}, as the ex ante allocation constraints for the counterexample are indeed exponential in $m$. Our counterexample is as follows.

\paragraph{Available items and pricings.} We define the subset of items $S \defeq [m - 1]$ (i.e. only the last item is not available), and the pricing vector $p \defeq (\beta, \beta^2, \dots, \beta^{m - 1}, 0)$, where $\beta \defeq \sqrt{m - 1}$.

\paragraph{Valuation distribution.} We define the (deterministic) XOS valuation function $v_{i,R}$ that is parametrized by an index $i \in [m - 1]$ and a set $R \subseteq [i - 1]$. Such a valuation function consists of two additive components $v^1_{i,R}$ and $v^2_{i,R}$, i.e.
\[v_{i,R}(T) \defeq \max\bigg\{ \sum_{j\in T} v^1_{i,R}(j) \;,\; \sum_{j\in T} v^2_{i,R}(j) \bigg\}\]
for any $T\subseteq [m]$, where the additive components are defined as

\begin{align*}
    v^1_{i,R}(j) = \begin{cases}
        \beta^i & \text{if $j = i$,} \\
        \epsilon + \sum_{k \in R} (\beta^i - \beta^k) & \text{if $j = m,$} \\
        0 & \text{otherwise.}
    \end{cases}
    && \text{and} &&
    v^2_{i,R}(j) = \begin{cases}
        \beta^i & \text{if $j \in R$,}\\
        0 & \text{otherwise.}
    \end{cases}
\end{align*}
for some tine $\epsilon > 0$. The buyer valuation distribution $\D$ is then defined via the following sampling procedure:
\begin{enumerate}
\item An index $i \in [m - 1]$ is sampled proportionally to $\beta^{-i}$, i.e. with probability $\frac{\beta^{-i}}{\sigma}$ for $\sigma \defeq \sum_{i=1}^{m-1} \beta^{-i}$.
\item A set $R \subseteq [i - 1]$ is sampled by including each element $j\in [i-1]$ with probability $(m-1)^{-1/2}$ independently.
\item Finally, valuation $v_{i,R}$ is realized.
\end{enumerate}

We now begin the proof by examining the right-hand side of the statement of~\Cref{lem:rrs_lower_bound_sqrtm}. Observe that for any $i \in [m - 1]$ and $R \subseteq [i - 1]$, the utility-maximizing set under the valuation function $v_{i,R}$ and our chosen pricing $p$ is the set $\{i, m\}$; note that this is the utility-maximizing set for the first additive component with utility of $\epsilon + \sum_{k \in R} (\beta^i - \beta^k)$, while the maximum utility from the second additive component is $\sum_{k \in R} \beta^i - \beta^k$. Therefore, we have $\alloc_j(\D, p) = \beta^{-j}/\sigma$ for all $j \in S$ (as whenever we draw $j$ in the sampling process of $\D$, we know automatically that the utility-maximizing set is $\{j, m\}$). Hence,
\[\sum_{j \in S} p_j \cdot \alloc_j(\D, p) = \sum_{j=1}^{m-1} \beta^j \cdot \frac{\beta^{-j}}{\sigma} = \sigma^{-1}\cdot (m-1).\]
We also comment that the allocation probabilities are exponentially small in $m$, and thus they do not satisfy the assumption of~\Cref{it:ocrs}.

To complete the proof, it suffices to show that under any pricing $q$, we have 
\begin{equation}\label{eq:xos-instance-to-show}
\rev(\D_{|S}, q) \le 4 \sigma^{-1} \sqrt{m - 1}.
\end{equation}
Fix any pricing $q$. We define a labelling vector $\ell \in \mathbb{Z}^{m - 1}$, where $\ell_j = \lceil \log_\beta q_j \rceil$ for all $j \in [m - 1]$. We now consider the maximum revenue we can extract from $v_{i,R|S}$ for any $i$ and $R$. Since $v_{i,R}$ is defined as the maximum over two additive valuations, we consider the three following ways via which we can extract revenue from it.

\paragraph{Revenue is extracted from $v^1_{i,R|S}$ and $\ell_i \neq i$.} Since revenue is extracted from $v^1_{i,R|S}$ and $m\notin S$, the only item that can be purchased is $i$, valued at $\beta^i$. Furthermore, since $\ell_i\neq i$, we can conclude that $q_i\leq \beta^{i-1}$ and therefore, the total revenue extracted from this case is at most
\[\operatorname{E}_{i, R} [\beta^{i - 1}] = \sum_{i=1}^{m-1} \beta^{i - 1} \cdot \frac{\beta^{-i}}{\sigma} = \sigma^{-1} \cdot \frac{m - 1}{\beta} = \sigma^{-1} \sqrt{m - 1}.\]

\paragraph{Revenue is extracted from $v^1_{i,R|S}$ and $\ell_i = i$.} In this case, the revenue extracted is at most $\beta^i$. A critical observation here is that we need $\ell_j \ge i$ for all $j \in R$ in order for this case to happen; note that the utility of $S$ under $v^1_{i,R|S}$ is $\beta^i - q_i < \beta^i - \beta^{i - 1}$, since $\ell_i = i$, while if any item $j \in R$ has $\ell_j \le i - 1$, then its utility under $v^2_{i,R|S}$ is $\beta^i - q_j \ge \beta^i - \beta^{i - 1}$, and thus the buyer would deviate away from buying item $i$ and towards buying this $j$ under $v^2_{i,R|S}$.

We now define the set $U = \{i \in S : \ell_i = i\}$, and note that revenue in this case can only be extracted from items in $U$. Furthermore, consider any item $i \in U$ and any $R$, we can only extract revenue from item $i$ under $v^1_{i,R|S}$ if $R \cap U = \emptyset$; otherwise, some item $j \in R$ has $\ell_j = j < i$, and by the previous observation we cannot extract revenue from item $i$ under $v^1_{i,R|S}$. Therefore, if we enumerate $U = \{u_1, u_2, \dots, u_t\}$ where $u_1 < u_2 < \ldots < u_t$, the total revenue extracted in this case is at most
\begin{align*}
    \expect_{i, R}[\beta^{i} \cdot \mathbbm{1}(\text{revenue from $v^1_{i,R|S}$ and $\ell_i = i$})] &\le \expect_{i, R}[\beta^i \cdot \mathbbm{1}[i \in U \wedge R \cap U = \emptyset]] \\
    &= \sum_{j=1}^t \beta^k \pr[i = u_j] \pr[R \cap U = \emptyset \mid i = u_j] \\
    &= \sum_{j=1}^t \beta^{u_j} \cdot \frac{\beta^{-u_j}}{\sigma} \cdot \left(1 - \frac{1}{\sqrt{m - 1}}\right)^{j - 1} \\
    &\le \sigma^{-1} \sum_{j=0}^\infty \left(1 - \frac{1}{\sqrt{m - 1}}\right)^{j} = \sigma^{-1}\sqrt{m - 1}
\end{align*}
where the second line uses the fact that each item $u_1, u_2, \dots, u_{j - 1}$ does not appear in $R$ with probability $1 - \frac{1}{\sqrt{m - 1}}$ independently.

\paragraph{Revenue is extracted from $v^2_{i,R|S}$.} We first fix any item $j$, and for convenience let $l = \ell_j$. Observe that item $j$ can only be sold in this case if $l \le i$ (otherwise when $l > i$, we have $q_j > \beta^{l - 1} \ge \beta^i = v^2_{i,R|S}(j)$); furthermore, we obviously have $q_j \le \beta^{l}$. Therefore, the total revenue extracted from item $j$ in this case is at most

\begin{align*}
    \expect_{i, R}[\beta^l \cdot \mathbbm{1}(\text{revenue from $v^2_{i,R|S}$ and $j$ is included})] &\le \expect_{i, R}[\beta^l \cdot \mathbbm{1}[j \in R \wedge i \ge l]] \\
    &= \beta^l \sum_{k=1}^{m - 1} \pr[i = k] \pr[j \in R \mid i = k] \\
    &\le \frac{\beta^l}{\sqrt{m - 1}} \sum_{k=l}^{m - 1} \frac{\beta^{-k}}{\sigma} \\
    &\le \frac{\beta^l}{\sqrt{m - 1}} \cdot \frac{2 \beta^{-l}}{\sigma} = \frac{2\sigma^{-1}}{\sqrt{m - 1}}
\end{align*}
where the second line uses that $\pr[j \in R \mid i = k]$ is $\frac{1}{\sqrt{m - 1}}$ if $i > j$, or $0$ otherwise; the second line assumes that $\beta = \sqrt{m - 1} \ge 2$, which means we can bound $\sum_{k=l}^{m - 1} \beta^{-k} \le 2 \beta^{-l}$.

Thus, the total revenue extracted from this case is at most 
\[\expect_{i, R}\left[\sum_{j=1}^m \beta^{\ell_j} \cdot \mathbbm{1}(\text{revenue from $v^2_{i,R|S}$ and $j$ is included})\right] \le 2 \sigma^{-1} \sqrt{m - 1}.\]

Combining the three cases gives us an upper bound of $4 \sigma^{-1} \sqrt{m - 1}$ on $\rev(\D_{|S}, q)$ for any $q$.

\section{Deferred Proofs from \Cref{sec:xos-lb}} \label{sec:xos-lb-proofs}
We start by reproducing the description of our instance for convenience. 

\paragraph{The buyers.} Let $n$ buyers $v_i \sim \D_i$ be (independently) identically distributed, with valuations sampled according to the following process:
\begin{enumerate}
    \item A set of items $A\subseteq [m]$ with $|A|=k$ is sampled uniformly at random.
    \item An integer $h \in \{1,2,\dotsc , \frac{1}{2}\log k \}$ is drawn uniformly at random. We denote $\ell\defeq2^h$.
    \item The valuation of the buyer is then realized as an XOS function, which is given by the maximum over the following additive valuations:
    \begin{enumerate}
        \item For the set $A$, we define an additive valuation $v_A$ such that $v_A(j)= 1 + t$ for items $j\in A$ and $v_A(j)=0$ for items $j\notin A$.
        \item For \textit{every} set $B$ of cardinality $|B| = t\cdot \ell$, we define an additive valuation $v_B$ such that $v_B(j)=1 + \frac{k}{\ell}$ for items $j\in B$ and $v_B(i)=0$ for items $j\notin B$. 
    \end{enumerate}
\end{enumerate}
We will take $n = k = \sqrt{m}$ and $t = \sqrt{\log k}$. Observe that for our selected parameters $t\cdot \ell \leq \sqrt{m\cdot\log m}\leq m$, and all parameters are integer-valued, so this is a well-defined instance. Let $A_i$ be the $A$-set for buyer $i$.

\bigskip
In order to prove the instance-specific claims in \Cref{sec:xos-lb}, we will require more notation. We use $\rev(v_i,p)$ to denote the revenue that we collect from the $i$-th buyer if their valuation is realized to $v_i$ and the pricing is some fixed vector $p\in\Rplus^m$. Note that this is a deterministic quantity, as all sources of randomness are fixed.

For our instance we will define the events $\E^A(v_i,p)$ and $\E^B(v_i,p)$ that subset of items $[m]$ that buyer $i$ with valuation $v_i$ purchases is of type $A$ or $B$ respectively; more formally, $\E^A(v_i,p)$ holds if for prices $p$ the buyers' utility for their preferred bundle $T\subseteq [m]$ is given by 
\[
    u_i(T) = v_i(T) - p(T) = \sum_{j \in T} (v_A(j) - p_j),
\]
and otherwise $\E^B(v_i,p)$ holds if their valuation is given by some $v_B$.

\xosfirstlemma*

\begin{proof}[Proof of \Cref{c:xos-lb-A-items}]
    We consider a fixed buyer $v = v_i$ (and thus also the set $A$ and the integer $\ell$ in the valuation description) and a fixed pricing $p$. Since $\E^A(v,p)$ holds, we know that the buyer will purchase a set of items such that their utility is maximized through the linear valuation $v_A$; in particular, including a subset $A_T\subseteq A$ given by
    \[
        A_T \defeq \argmax_{T\subseteq A}(v_A(T)- p(T)) = \{j\in A : p_j \leq t+1\}.
    \]
    We will show that $|A_T|\geq \frac{kt}{t+1}$. To that end, let $B_{AT} \subseteq A_T$ be a maximal $B$-utility subset of $A_T$. Observe that since $t\leq \frac{k}{\ell}$ for our chosen parameters, every item worth buying under an $A$-valuation $v_A$ will also be worth buying under some $B$-valuation $v_B$. Thus, $B_{AT}$ is any maximal-cardinality subset of $A_T$ of size at most $t \ell$.
    
    Since $\E^A(v_i,p)$ holds we know that $i$ prefers $A_T$ to $B_{AT}$, meaning $A_T$ has higher utility than $B_{AT}$, so
    \begin{equation} \label{eq:AT-vs-BAT-utility}
        \sum_{j\in A_T}(t + 1 - p_j)\geq \sum_{j\in B_{AT}}\left(1 + \frac{k}{\ell}-p_j\right).
    \end{equation}
    We first argue that $|A_T| > t\ell$. Otherwise $B_{AT}=A_T$, and thus \eqref{eq:AT-vs-BAT-utility} can only be satisfied if $t\geq \frac{k}{\ell}$ which is not the case as $t\ell \leq \sqrt{\log k}\cdot \sqrt{k} < k$. Therefore $|A_T|\geq t\ell$. In this case, $|B_{AT}|= t\ell$ and \eqref{eq:AT-vs-BAT-utility} can be restated as
    \begin{equation} \label{eq:AT-vs-BAT-utility-consequence}
        \sum_{j\in A_T\setminus B_{AT}} (t + 1 - p_j) \geq \sum_{j\in B_{AT}}\left(\frac{k}{\ell}-t\right) = kt - t^2\ell,
    \end{equation}
    and since $p_j\geq 0$ for all $j$ this implies that
    \[
        (t + 1)\cdot (|A_T| - t\ell) \geq kt - t^2\ell.
    \]
    This immediately implies that $|A_T|\geq \frac{kt}{t+1}$, as claimed.
\end{proof}

\xossecondlemma*

\begin{proof}[Proof of \Cref{c:xos-lb-A-rev}]
    We will prove this claim for every item pricing $p$ and every buyer valuation $v = v_i \sim \D_i$, provided that the event $\E^A(v_i, p)$ holds. The claim will then follow from taking the conditional expectation over $v_i$.
    
    We use the same setup as in the proof of~\Cref{c:xos-lb-A-items} to define the sets $A_T$ and $B_{AT}$. Since $\E^A(v_i,p)$ holds by assumption, we know that $\rev(v_i,p) = p(A_T)$ because the buyer maximizes their utility via $v_A$ and buys this set. Therefore we will bound $p(A_T)$.
    
    Just as before, \eqref{eq:AT-vs-BAT-utility} implies that $|A_T| \geq t\ell$ and that
    \begin{equation} \label{eq:AT-vs-BAT-utility-consequence-restated}
        \sum_{j\in A_T\setminus B_{AT}} (t + 1 - p_j) \geq kt - t^2\ell
    \end{equation}
    holds for \emph{any} subset $B_{AT}$ of $A_T$ with cardinality $t\ell$; in particular it holds for $B_{AT}$ the $t\ell$ items in $A_T$ of minimum price. For this choice of $B_{AT}$ the average item price in $A_T\setminus B_{AT}$ exceeds the average item price in $A_T$, i.e.
    \[
        \frac{p(A_T\setminus B_{AT})}{|A_T\setminus B_{AT}|}\geq \frac{p(A_T)}{|A_T|}.
    \]
    Plugging this into \eqref{eq:AT-vs-BAT-utility-consequence-restated} and observing that $|A_T\setminus B_{AT}| = |A_T| - t\ell$ yields
    \[
        kt - t^2\ell\leq \sum_{j\in A_T\setminus B_{AT}} (1+t-p_j) = (1+t)(|A_T|-t\ell) - p(A_T \setminus B_{AT}) \leq (|A_T|-t\ell)\cdot \left(1 + t - \frac{p(A_T)}{|A_T|}\right).
    \]
    Solving for $p(A_T)$ we obtain
    \[
        p(A_T) \leq |A_T|\cdot \frac{|A_T|(t+1)- t(k+\ell)}{|A_T|-t\ell},
    \]
    and using that $|A_T|\leq |A| = k$ (by definition) and that $|A_T|\geq \frac{kt}{t+1}$ by \Cref{c:xos-lb-A-items} since $\E_i^A(v,p,S)$ holds, we finally obtain that
    \[
        p(A_T) \leq k\cdot \frac{k(t+1) - t(k+\ell)}{\frac{kt}{t+1}-t\ell}.
    \]
    The proof is completed by showing
    \[
        \frac{k(t+1) - t(k+\ell)}{\frac{kt}{t+1}-t\ell} \leq c_1
    \]
    for some constant $c_1$ and all sufficiently large $k$, which easily follows from observing that $k\geq 2t\ell$ for our choice of parameters $t=\sqrt{\log k}$ and $\ell \in [2, \sqrt{k}]$.
\end{proof}

\xosthirdlemma*

\begin{proof}
    Consider a fixed item pricing $p\in\Rplus^m$. 
    We can relabel the items so that their prices are non-decreasing, i.e. we assume without loss of generality that $p_1\leq p_2\leq \dotsc \leq p_m$.

    Since we are conditioning on event $\E_i^B(v,p,S)$, we know that the buyer always buys a set of items whose utility is maximized through a $B$-valuation, thus they buy a set of at most $t\ell$ items where $\ell$ is a buyer-dependent parameter that is drawn uniformly at random from $L=\{2,4,8,\dotsc, \sqrt{k}\}$. Furthermore, the buyer will clearly prioritize the cheaper items, as all items have the same (additive) valuation of $1+k/\ell$ under $B$ and cheaper items contribute more to the buyer's utility. Thus, for a given $\ell$, the buyer will purchase items $T$ which are a prefix of the items in ascending-price-order:
    \[
        T = \{1,2,3,\dotsc, x\} \subseteq [m],
    \]
    where $x$ is the maximum index for which $x\leq t\ell$ and $p_x \leq 1 + \frac{k}{\ell}$. We will bound the conditional expected revenue, which is precisely the conditional expectation of $p(T)$, by observing that
    \begin{align}
        \xpectover{v_i}{\rev(v_i, p) \mid \E^B(v_i,p)} &= \sum_{\ell' \in L} \xpectover{v_i}{\rev(v_i, p) \mid \E^B(v_i,p) \wedge \ell(v_i) = \ell '} \cdot \probover{v_i}{\ell(v_i) = \ell'} \notag \\
        &= \frac{1}{|L|} \cdot \sum_{\ell' \in L} \xpectover{v_i}{p(T) \mid \E^B(v_i,p) \wedge \ell(v_i) = \ell '}. \label{eq:xos-lb-B-rev-bound-sum}
    \end{align}
    Here we use $\ell(v_i)$ to denote the value of $\ell$ that $v_i \sim \D_i$ realizes, and we used that $\probover{v_i}{\ell(v_i) = \ell'} = \frac{1}{|L|} = \frac{2}{\log k}$ for all $\ell' \in L$. 
    We will bound this sum \eqref{eq:xos-lb-B-rev-bound-sum} by $c_2 \cdot \frac{tk}{\log k}$ for some constant $c_2$ and sufficiently large $k$. We will do this by a combination of case analysis and grouping the possible values of $\ell$ as a function of the non-uniform prices. Let 
    \[
        \ell^* \defeq \max
        \left(\left\{\ell\in L: \:  p_{t\ell} \leq \frac{k}{\ell}+1\right\}\right).
    \]
    This set is nonempty and the threshold parameter value $\ell^*$ exists provided that $p_{2t}\leq \frac{k}{2}+1$. This is because the thresholds $\frac{k}{\ell} + 1$ are decreasing in $\ell$, while the prices $p_{t\ell}$ are increasing in $\ell$. Therefore we have two cases to consider.

    \paragraph{Case $1$.} First, we consider the case that $p_{2t} > \frac{k}{2}+1$. In this case, since $t+1\leq \frac{k}{2}+1$ for $t=\sqrt{\log k}$ and also $\frac{k}{\ell} + 1 \leq \frac{k}{2}+1$ for $\ell\geq 2$, we know the buyer will not buy any item $j$ for $j\geq 2t$; since $p_j$ is too high it would necessarily decrease their utility. Furthermore, the maximum revenue attainable from each of the $2t$ remaining items is at most equal to their value to the buyer, which is $(\frac{k}{\ell}+1)$ for $\ell\in L$ uniformly at random. Thus, from \eqref{eq:xos-lb-B-rev-bound-sum} we obtain
    \[
        \xpectover{v_i}{\rev(v_i, p) \mid \E^B(v_i,p)} \leq 2t\sum_{\ell\in L}\frac{1}{|L|}\cdot \left(1+\frac{k}{\ell}\right) = \frac{1}{|L|}\cdot \left(2t|L| + 2tk \right) \leq \frac{6tk}{\log k},
    \]
    where the last statement follows from $|L|=\frac{1}{2}\log k$ and the fact that $L=\{2,4,8,\dotsc ,\sqrt{k}\}$ follows a geometric progression and so the $\ell^{-1}$ terms sum to at most $1$.

    \paragraph{Case $2$.} We now address the case where $\ell^*$ is well-defined. We consider a buyer $v = v_i$ and again use $\ell(v)$ to denote the $\ell$ value that their valuation realized. We analyze their expected revenue based on whether $\ell(v)<\ell^*$, $\ell(v)=\ell^*$, or $\ell(v)>\ell^*$.

    If $\ell(v) = \ell^*$, then we know that the buyer will purchase at most $t\ell^*$ items, and also that each item contributes at most the buyer's per-item value to the revenue, which is $\frac{k}{\ell^*}+1$. Thus we have 
    \begin{equation} \label{eq:xos-lb-B-rev-lstar}
        \xpectover{v}{\rev(v,p) \mid \E^B(v,p) \wedge \ell(v) = \ell^*} \leq t\ell^* \left(1 + \frac{k}{\ell^*}\right) = t(k+\ell^*).
    \end{equation}
    
    If $\ell(v) < \ell^*$, then each item sold can contribute at most $(\frac{k}{\ell^*} + 1)$ to the revenue, and for each $\ell$ at most $t\ell$ items can be sold. Thus, we have
    \begin{equation} \label{eq:xos-lb-B-rev-lsmall}
        \sum_{\ell\in L : \ell <\ell^*}\xpectover{v}{\rev(v,p) \mid \E^B(v,p) \wedge \ell(v) = \ell} \leq \sum_{\ell \in L: \ell < \ell^*} t\ell\cdot \left(1+\frac{k}{\ell^*}\right) \leq t(k+\ell^*),
    \end{equation}
    where the last inequality follows from $\sum_{\ell\in L: \ell <\ell^*}\ell \leq \ell^*$ again because the terms of $L$ follow a geometric progression.

    Finally, if $\ell(v) > \ell^*$ then each individual item can contribute at most $\frac{k}{\ell(v)}+1$ to the revenue, because this is the buyer's per-item value. And by the definition of $\ell^*$ there can be at most $t\ell^*$ items such that $p_j \leq \frac{k}{\ell(v)}+1 \leq \frac{k}{\ell^*}+1$; only these items may generate revenue. Thus, we have
    \begin{equation} \label{eq:xos-lb-B-rev-lbig}
        \sum_{\ell\in L : \ell >\ell^*}\xpectover{v}{\rev_i(v,p,S)\mid \E_i^B(v,p,S) \wedge \ell(v) = \ell} \leq t\ell^*\cdot\sum_{\ell \in L: \ell > \ell^*}  \left(1+\frac{k}{\ell}\right) \leq t\ell^*\cdot |L| + tk,
    \end{equation}
    where the last inequality once again follows because the terms of $L$ follow a geometric progression, so the sum of the $\ell^{-1}$ terms is at most $1$. 

    Combining everything, applying \eqref{eq:xos-lb-B-rev-lstar}, \eqref{eq:xos-lb-B-rev-lsmall}, and \eqref{eq:xos-lb-B-rev-lbig} to \eqref{eq:xos-lb-B-rev-bound-sum} we have 
    \begin{align*}
        \xpectover{v_i}{\rev(v_i,p) \mid \E^B(v_i,p)} &= \frac{1}{|L|}\cdot \sum_{\ell\in L} \xpectover{v_i}{\rev(v_i,p) \mid \E^B(v_i,p) \wedge \ell(v) = \ell} \\
        &\leq \frac{1}{|L|}\cdot (2t(k+\ell^*) + t\ell^*|L| + tk),
    \end{align*} 
    and since $\ell^*\leq \max(L) = \sqrt{k}$ and $|L| = \frac{1}{2} \log k$, we have that
    \begin{align*}
        \xpectover{v_i}{\rev(v_i,p) \mid \E^B(v_i,p)}\leq c_2 \cdot \frac{tk}{\log k}
    \end{align*} 
    for some constant $c_2$ and all suitably large $k$, as desired. 
    
    This holds in both cases, completing the proof.
\end{proof}

Before proving the \cref{lem:max-t-feasible-num-buyers}, recall the definitions of an \emph{assignment} $\sigma$ of items to buyers and the definition for $I\subseteq [n]$ to be $(1-c)$-\emph{feasible} under $\sigma$ (\Cref{def:item-assn,def:feasible-assn}). We will use the following more general claim:

\begin{lemma}\label{lem:approx-packing}
    Let $\mathcal{E}(c,C,n)$ denote the event that there exist some subset of buyers $I \subseteq [n]$ and assignment $\sigma: [m] \rightarrow [n] \cup \{\bot\}$ such that $|I| \geq Cn$ and $I$ is $(1-c)$-feasible under $\sigma$.
    For any fixed constant $\gamma > 0$, if $c = \gamma \cdot \log^{-1/2}n$ and $C = 8 c = 8 \gamma \log^{-1/2}n$, then
    \[
        \probover{v \sim \D}{\mathcal{E}(c,C,n)} = \bigo{e^{-n^2/\log^{2} n}}.
    \]
\end{lemma}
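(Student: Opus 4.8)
The plan is to reduce $\mathcal{E}(c,C,n)$ to a purely set-theoretic covering event about the random $A$-sets $A_1,\dots,A_n$ (note that $(1-c)$-feasibility depends on $v$ only through the $A_i$, since only $|A_i\cap\sigma^{-1}(i)|$ matters), and then to control that event by a \emph{one-sided} concentration inequality. Recall that for this instance $m=n^2$ and $|A_i|=k=n$, so $\Pr[j\in A_i]=k/m=1/n$, independently over $i$. If $I$ is $(1-c)$-feasible under some $\sigma$, the sets $\{\sigma^{-1}(i)\cap A_i:i\in I\}$ are pairwise disjoint (the $\sigma^{-1}(i)$ are) and each has size $\ge(1-c)n$, hence $\bigl|\bigcup_{i\in I}A_i\bigr|\ge(1-c)n\,|I|$; and any subset of a $(1-c)$-feasible set is again $(1-c)$-feasible, so it suffices to take $|I|=\lceil Cn\rceil$. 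Writing $X_j(I)\defeq\bigl|\{i\in I:j\in A_i\}\bigr|$, the identity $\bigl|\bigcup_{i\in I}A_i\bigr|=n|I|-\sum_{j=1}^m(X_j(I)-1)^+$ shows the covering condition is equivalent to $\Phi(I)\defeq\sum_{j=1}^m(X_j(I)-1)^+\le cn\,|I|$. Union-bounding over the at most $2^n$ choices of $I$ and using symmetry of the $A_i$, it remains to show that for one fixed set $I_0$ with $|I_0|=\lceil Cn\rceil$ we have $\probover{v}{\Phi(I_0)\le cn|I_0|}\le e^{-\Omega(n^2/\log n)}$.

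Next I would fix such an $I_0$ and put $Y_j\defeq(X_j(I_0)-1)^+$. Each $X_j(I_0)$ has the exact marginal $\mathrm{Bin}(|I_0|,1/n)$, so $\mathbb{E}[Y_j]=C-1+(1-1/n)^{|I_0|}$, and a two-term Taylor estimate of $e^{-C}$ (together with $C=8c\to0$) gives $\mathbb{E}[Y_j]\ge\tfrac13C^2$ for all large $n$; hence $\mathbb{E}[\Phi(I_0)]=m\,\mathbb{E}[Y_j]\ge\tfrac13C^2n^2$, which strictly dominates the threshold $cn|I_0|\le Ccn^2+cn=\tfrac18C^2n^2+cn$. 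Thus the ``gap'' $t\defeq\mathbb{E}[\Phi(I_0)]-cn|I_0|$ satisfies $t=\Omega(C^2n^2)=\Omega(c^2n^2)$. I would also record the second-moment bound $\mathbb{E}[Y_j^2]=\sum_{\ell\ge2}(\ell-1)^2\Pr[X_j=\ell]\le\sum_{\ell\ge2}\ell^2(eC/\ell)^\ell=O(C^2)$, the sum being dominated by $\ell=2$; so $V\defeq\sum_{j=1}^m\mathbb{E}[Y_j^2]=O(C^2n^2)=O(c^2n^2)$.

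The heart of the argument is the concentration step, and this is where the naive approach fails: treating each $A_i$ as a single coordinate gives bounded differences of size $\Theta(n)$, and Azuma/McDiarmid then yields only $e^{-\Omega(n/\mathrm{polylog}\,n)}$ --- too weak to survive the $2^n$ union bound. Instead, I would observe that the array $(\mathbbm{1}[j\in A_i])_{j\in[m],\,i\in I_0}$ is negatively associated: within each row the indicators $(\mathbbm{1}[j\in A_i])_j$ are NA (uniform sampling without replacement), the rows are independent, and $X_j(I_0)$, hence $Y_j$, is a nondecreasing function of the $j$-th (disjoint) block; so $(Y_j)_{j\in[m]}$ is a family of NA nonnegative random variables. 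For such families the one-sided Chernoff/Bennett lower-tail bound $\probover{}{\Phi(I_0)\le\mathbb{E}[\Phi(I_0)]-t}\le\exp\!\bigl(-t^2/(2V)\bigr)$ holds --- it follows from $e^{-\lambda y}\le1-\lambda y+\tfrac12\lambda^2y^2$ for $y\ge0$ and $\mathbb{E}\bigl[\prod_j e^{-\lambda Y_j}\bigr]\le\prod_j\mathbb{E}[e^{-\lambda Y_j}]$ (valid since each $e^{-\lambda Y_j}$ is a nonincreasing function of a single coordinate), and crucially requires \emph{no} almost-sure upper bound on the $Y_j$. Plugging in $t=\Omega(c^2n^2)$ and $V=O(c^2n^2)$ gives $t^2/(2V)=\Omega(c^2n^2)=\Omega(\gamma^2n^2/\log n)$, so $\probover{v}{\Phi(I_0)\le cn|I_0|}\le e^{-\Omega(n^2/\log n)}$.

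Putting the pieces together, $\probover{v}{\mathcal{E}(c,C,n)}\le2^n\,e^{-\Omega(n^2/\log n)}=e^{-\Omega(n^2/\log n)}$, and since $n^2/\log n$ eventually exceeds $n^2/\log^2 n$ by an arbitrary constant factor, this is $\bigo{e^{-n^2/\log^2 n}}$ (the $\bigo{\cdot}$ absorbing the finitely many small $n$, and also the mild ``$n$ large enough'' conditions used above, e.g.\ $C<1/e$). The single step I expect to be delicate is the concentration bound: one must resist applying an off-the-shelf two-sided inequality and instead exploit nonnegativity and negative association, so that only the second moment $V=O(n^2/\log n)$ --- which is of the same order as the mean-to-threshold gap $t$ --- enters the exponent; everything else is bookkeeping with binomial tails and the observation that the relevant Hall-type obstruction is just the single set $J=I$.
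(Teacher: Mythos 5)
Your proposal is correct, and it takes a genuinely different route from the paper. The paper works with the union size $|N(I)| = \sum_j \mathbbm{1}[j\in N(I)]$, which expresses the event as an upper-tail deviation of a sum of Bernoulli indicators; since each indicator is bounded in $\{0,1\}$, it establishes negative association of $(Y_j)_j$ and then applies the Hoeffding bound $\exp(-2t^2/m)$ with $m=n^2$ summands, giving an exponent of order $C^4 n^2 = \Theta(n^2/\log^2 n)$. You instead count \emph{excess collisions} $\Phi(I)=\sum_j(X_j(I)-1)^+$ via the identity $|\cup_{i\in I}A_i|=n|I|-\Phi(I)$, so the event becomes a \emph{lower} tail for a sum of nonnegative, unbounded, NA variables with very small moments ($\mathbb{E}[Y_j]\approx C^2/2$, $\mathbb{E}[Y_j^2]=O(C^2)$). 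Because the $Y_j$ are nonnegative, the lower tail admits the Chernoff/Bennett bound $\exp(-t^2/(2V))$ with no uniform upper bound needed, and the tiny second moment $V=O(C^2n^2)$ buys you an exponent of order $C^2 n^2 = \Theta(n^2/\log n)$ --- strictly sharper than the paper's. This is the classical phenomenon that Hoeffding is wasteful for low-variance summands: the paper's $Y_j$ are $\{0,1\}$ but their variance is only $\approx C$, so even a Bernstein bound applied to the paper's decomposition would give $C^3n^2$, and your reparametrization to the excess improves this further to $C^2n^2$. Both decompositions use the same NA machinery (\cite{joag1983negative}: permutation distribution within each $A_i$, independence across $i$, and composition through monotone functions on disjoint coordinate blocks), and both finish with the same $2^n$ union bound over candidate $I$; your extra observation that one may restrict to $|I|=\lceil Cn\rceil$ is a harmless simplification since feasibility is downward closed in $I$. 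One small caveat to spell out if you write this up: after the union bound you have $\exp\bigl(n - \Omega(n^2/\log n)\bigr)$, and to conclude $O(e^{-n^2/\log^2 n})$ you should note that the constant hidden in $\Omega(\cdot)$ eventually dominates because $n^2/\log n \gg \max\{n,\,n^2/\log^2 n\}$, with the $O(\cdot)$ absorbing the finitely many small $n$ --- which you do.
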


\begin{proof}[Proof of \Cref{lem:approx-packing}]
    To begin, consider a fixed $I \subseteq [n]$ let $\alpha \defeq |I|/n$. We will be interested in the $I$ for which $\alpha \geq C$. Next let $N(I) \defeq \cup_{i \in I} A_i$ be the union of the $A$-sets for $i\in I$, and let $A_i^\sigma \defeq A_i \cap \sigma^{-1}(i)$, so that $\sigma$ is $(1-c)$-feasible for $i$ precisely when $|A_i^\sigma| \geq (1-c)|A_i| = (1-c)n$. 
    
    Next observe that for all assignments $\sigma$, 
    \begin{equation*}
        \min_{i \in I} |A_i^{\sigma}| \leq \frac{1}{|I|} \cdot \sum_{i \in I} |A_i^{\sigma}| \leq \frac{1}{|I|} \cdot |N(I)|,
    \end{equation*}
    since $A_i^\sigma \subseteq A_i$ by definition and the $A_i^\sigma$ are disjoint because $\sigma$ is a well-defined mapping. Therefore if $\sigma$ is $(1-c)$-feasible for $I$, then
    \begin{equation} \label{eq:I-feasible-relaxation}
        |I| n(1-c) \leq |I| \min_{i \in I} |A_i^{\sigma}| \leq |N(I)|.
    \end{equation}
    We will argue that for fixed $I$ a $(1-c)$-feasible assignment for $I$ is unlikely to exist, because this would imply that $|N(I)| \geq |I| (1-c)n$ by \eqref{eq:I-feasible-relaxation}, which we will show is unlikely.

    To this end, let $X_{ij} \defeq \mathbbm{1}\{j \in A_i\}$ indicate the event that $j\in A_i$; then
    let $Y_j \defeq \max_{i \in I} X_{ij}$ indicate that $j$ is in at least one set $A_i$ for $i \in I$. Then observe that $Y \defeq \sum_j Y_j = |N(I)|$.
    
    \begin{claim} \label{lem:Yj-are-NA}
        $\{Y_j\}$ is negatively associated (NA).
    \end{claim}
    \begin{proof}
    This follows from results about NA variables by \textcite{joag1983negative}; in particular closure and composition properties for collections of NA variables.
    
    First, consider each collection of indicator random variables $\mathcal{X}_i \defeq \{X_{i'j} : i' = i\}$. Exactly $n$ of these are 1, indicating the subset $A$ chosen uniformly at random from $[m]$; therefore this is a permutation distribution, and so by \cite{joag1983negative} Theorem 2.11 this collection $\mathcal{X}_i$ is NA. Next, consider the larger collection $\mathcal{X} \defeq \{{} X_{ij}\}$; as the union of independent collections of NA variables, $\mathcal{X}$ is also NA (\cite{joag1983negative}, Property $P_7$). Finally, consider the collection $\mathcal{Y} \defeq \{Y_j\}$. Since $Y_j \defeq \max_{i \in I} X_{ij}$, each $Y_j$ is defined on a unique subset of the NA variables $\mathcal{X}$; for any $Y_j$ and $Y_{j'}$, the sets of $X_{ij}$ upon which they depend are disjiont. Since $\max(\cdot)$ is an increasing function, therefore $\mathcal{Y}$ is NA also (\cite{joag1983negative}, Property $P_6$).
    \end{proof}

    \medskip 
    \noindent \textbf{Bounding $|N(I)|$:}
    Since \Cref{lem:Yj-are-NA} demonstrates that $\mathcal{Y}$ is negatively associated, we can use concentration. First note that for all $j$,
    \[
        \xpectover{v\sim \D}{Y_j} = 1 - \probover{v\sim\D}{\bigwedge_{i \in I} (j \not\in A_i)} = 1 - (1-1/n)^{|I|} = 1 - (1-1/n)^{\alpha n}.
    \]
    Then we may bound the probability that $|N(I)| \geq |I|(1-c)n$ by applying a Hoeffding bound to $\sum_j Y_j$ for any $I$ with $|I|\geq Cn$ as follows:
    \begin{align} 
        \probover{v\sim\D}{|N(I)| \geq \alpha n^2 (1-c) } &= \probover{v\sim\D}{Y - \xpectover{v\sim\D}{Y} \geq \alpha n^2 (1-c) - \xpectover{v\sim\D}{Y} }\notag \\
        &\leq \exp\left(\frac{-2\left(\alpha n^2 (1-c) - \xpectover{v\sim\D}{Y}\right)^2}{n^2}\right) \notag \\
        &= \exp\left(\frac{-2 \left(\alpha n^2 (1-c) - n^2(1 - (1-1/n)^{\alpha n})\right)^2}{n^2}\right). \notag 
        \intertext{Since $1-(1-1/n)^{\alpha n} \leq 1 - e^{-\alpha} (1-O(1/n))^\alpha \leq 1 - e^{-\alpha} + e^{-\alpha}\cdot O(1/n) \leq \alpha - \alpha^2/2 +\alpha^3/6  + e^{-\alpha}\cdot O(1/n)$, therefore}
        \probover{v\sim\D}{|N(I)| \geq \alpha n^2 (1-c) } &\leq \exp\left(-2 \alpha^2 n^2 \left( (1-c) - (1 - \frac{\alpha}{2} +\frac{\alpha^2}{6} + \frac{e^{-\alpha}}{\alpha}O(n^{-1})\right)^2\right) \notag \\
        &= \exp\left(-2 \alpha^2 n^2 \left(\frac{\alpha}{2} -\frac{\alpha^2}{6} - \frac{e^{-\alpha}}{\alpha}O(n^{-1}) - c\right)^2\right) \notag \\
        &\leq \exp\left(-\frac{1}{8} n^2 \alpha^4 \left(1 -\frac{2\alpha}{3} - 4\frac{e^{-\alpha}}{\alpha^2}O(n^{-1})\right)^2\right), \notag
        \intertext{since $\alpha \geq C > 4c$, and since $\alpha \geq C \gg n^{-1/3}$ for sufficiently large $n$ this last term is $o(\alpha)$, so}
        &\leq \exp\left(-\frac{1}{8} n^2 \alpha^4 \left(1 -\frac{3\alpha}{4}\right)^2\right) \notag  \\
        &\leq \exp\left(-\frac{9}{128} n^2 \alpha^4 \right), \notag
    \end{align}
    where we used that $1-3\alpha/4 \geq 3/4$. Since $I$ is a set for which $\alpha \geq C$, by \eqref{eq:I-feasible-relaxation}, we therefore know that 
    \begin{equation} \label{eq:feasible-assn-bound}
        \probover{v\sim\D}{\exists \sigma: \sigma \text{ is $(1-c)$-feasible for } I} \leq \exp\left(-\frac{9}{128} n^2 C^4 \right).
    \end{equation}
    
    \medskip 
    \noindent \textbf{Concluding:}
    We finish our argument with a union bound over the sets $I \subseteq [n]$ for which $|I| \geq Cn$. There are at most $2^n < e^n$ such sets; therefore by \eqref{eq:feasible-assn-bound},
    \begin{align*}
        \probover{v \sim \D}{ \bigvee_{I \subseteq [n]: |I| \geq C n} \left( \exists \sigma: \sigma \text{ is $(1-c)$-feasible for } I \right) }
        &\leq \sum_{I \subseteq [n]: |I| \geq C n}  \probover{v\sim\D}{\exists \sigma: \sigma \text{ is $(1-c)$-feasible for } I }  \\
        &\leq \exp\left(n -\frac{9}{128} \cdot n^2 C^4 \right) \\
        &=O(e^{-n^2/\log^2 n})
    \end{align*}
    for our choice of $C = O(\log^{-1/2}(n))$. The left-hand side is precisely the probability of the event $\mathcal{E}(c,C,n)$, and so this demonstrates the stated claim.
\end{proof}

With this lemma in hand, \Cref{lem:max-t-feasible-num-buyers} follows as a direct corollary. 

\xosfourthlemma*
\begin{proof}
    Since $t = \sqrt{\log n}$ for our choice of parameters, observe that any $I$ which is $(1-\frac{1}{t+1})$-feasible under some assignment $\sigma$ is also $(1-\frac{2}{t})$-feasible under $\sigma$. 
    
    We may take therefore take $\gamma = 2$ and $c = \frac{2}{\log^{1/2} n}$ and $C = \frac{16}{\log^{1/2} n}$ and apply \Cref{lem:approx-packing}, which implies that
    \[
        \probover{v \sim \D}{\mathcal{E}(c,C,n)} = \bigo{e^{-n^2/\log^{2} n}}.
    \]
    Therefore, turning to $|I_{max}(v)|$, we have
    \begin{align*}
        \xpectover{v}{|I_{max}(v)|} &= \xpectover{v}{|I_{max}(v)| \mid \neg \mathcal{E}(c,C,n)} \probover{v}{\neg \mathcal{E}(c,C,n)} + \xpectover{v}{|I_{max}(v)| \mid \mathcal{E}(c,C,n)} \probover{v}{ \mathcal{E}(c,C,n)} \\
        &\leq Cn + n \cdot \probover{v}{ \mathcal{E}(c,C,n)} \\
        &\leq n \cdot \frac{16}{t} + n \cdot c_1 \cdot e^{-n^2/\log^{2} n} \leq 17 \cdot \frac{n}{t}
    \end{align*}
    for some constant $c_1$ and for sufficiently large $k=n$, as claimed.
\end{proof}

\end{document}